\providecommand{\U}[1]{\protect\rule{.1in}{.1in}}
\newtheorem{theorem}{Theorem}
\newtheorem{lemma}[theorem]{Lemma}
\newenvironment{proof}[1][Proof]{\noindent\textbf{#1.} }{\ \rule{0.5em}{0.5em}}
\begin{document}
\vskip3mm

\noindent\textbf{ON THE FIRST ORDER ASYMPTOTIC THEORY OF QUANTUM ESTIMATION} \vskip3mm

\vskip3mm \noindent K. Matsumoto\quad\quad\quad

\noindent Quantum Computation Group, National Institute of Informatics

\noindent2-1-2, Hitotsubashi, Chiyoda-ku, Tokyo 101-8430

\noindent keiji@nii.ac.jp

\vskip3mm \noindent Key Words: Quantum statistics, the first order asymptotic
theory. \vskip3mm

\noindent ABSTRACT

We give a rigorous treatment on the foundation of the first order asymptotic
theory of quantum estimation, with tractable and reasonable regularity
conditions. Different from past works, we do not use Fisher information nor
MLE, and an optimal estimator is constructed based on locally unbiased
estimators. Also, we treat state estimation by local operations and classical
communications (LOCC), and estimation of quantum operations.

\section*{\noindent{\protect\normalsize 1. INTRODUCTION}}

\vskip -3mm The purpose of this paper is to give a rigorous foundation of the
first order asymptotic theory of quantum estimation, which has been
established in these years. In addition to most basic setting, we also treat
state estimation by local operations and classical communications (LOCC, in
short) and estimation of quantum operations.

This research field was initiated by Nagaoka\thinspace(1987),
Nagaoka\thinspace(1989), followed by Hayashi and Matsumoto\thinspace(1998),
Gill and Massar\thinspace(2002). (Many of important papers in the field are
included in Hayashi\thinspace(2005).) Relying on classical estimation theory,
especially the fact that the inverse of Fisher information gives the optimal
efficiency of consistent estimators, they had reduced the optimization of
consistent estimators to optimization of Fisher information, or equivalently,
of locally unbiased estimators. These works had laid foundation on which
number of works, mostly computation of asymptotically optimal estimators and
their costs, are based. In closer look, however, they either miss the detail
of the proof, or assume intractable regularity conditions.

One reason for such incompleteness is that the focus of these works were
consequences of the foundations, rather than their rigorous proof. Also, the
following technical difficulties seems to be a part of reasons. In quantum
statistics, the probability distribution of the data depends on the choice of
measurement. Therefore, for classical estimation theory to be applicable, a
set of regularity conditions should hold for \textit{all} the probability
distributions resulting from arbitrary measurement of interest. In Hayashi and
Matsumoto\thinspace(1998), they use this sort of statement as their regularity
condition. As a result, their regularity conditions are quite difficult to
check for given quantum statistical models.

The purpose of the paper is to provide rigorous proof assuming tractable
regularity conditions, including the case of infinite dimensional Hilbert
space. In addition to the most basic settings, we also treat state estimation
by semi-classical measurement and by local operations and classical
communications (LOCC, in short). Also, estimation of quantum operations is studied.

Different from previous works, we avoided use of Fisher information, and
composed an asymptotically efficient estimator from an optimal locally
unbiased estimator, because of the following reasons. First, quantum
asymptotic Crammer-Rao bound is not a simple function of any quantum analogue
of Fisher information. It equals Holevo bound, which is defined in terms of
operator version of asymptotically unbiasedness conditions (Hayashi and
Matsumoto\thinspace(2004), Matsumoto\thinspace(1999), Guta and
Jencova\thinspace(2006)). The second motivation is to simplify the regularity
conditions, by avoiding technical difficulties stated above.

One of major difference between quantum mechanics and classical mechanics is
behavior of composite systems. In quantum mechanics, the state of the system
and the measurement in composite systems may not be in convex combinations of
those without correlations between subsystems. In such cases, we often observe
non-trivial quantum effects, which can \textit{never} be reproduced by
classical mechanical random variables, such as violation of Bell's inequality.
Therefore, it is of interest to compare measurement with non-trivial
correlations and the one without it in their efficiency of state estimation.

For that purpose, we study \textit{semi-classical} measurements and
\textit{LOCC} (, short for local operations and classical communications,)
measurements. In the former, we are not allowed to use measurement
collectively acts on given $n$ independent samples. In the latter, each sample
is a state in a composite system (A and B, say), and we are not allowed to use
the measurement quantumly correlating over A-B split.

The last topic is estimation of a quantum operations. It had been observed
that for some cases (e.g., unitary operations, or noiseless operations), the
mean square error of optimal estimators scales as $O\left(  1/n^{2}\right)  $
(Heisenberg rate), which is significantly smaller than $O\left(  1/n\right)
$, and there had been suggestion of efficient measurement scheme utilizing
this effect. Recently, however, several authors ( Fujiwara\thinspace(2005),
Zhengfeng Ji, et. al.\thinspace(2006), etc) had pointed out that $O\left(
1/n^{2}\right)  $-scaling is not observed in some class of operations
(typically, they corresponds to noisy operations). We show that $O\left(
1/n^{2}\right)  $-scaling is rather exceptional, and not observed so long as
the model lies in interior of the totality of quantum operations.
\vspace{-7mm}

\section*{{\protect\normalsize \noindent2. QUANTUM ESTIMATION THEORY}}

\vskip -5mm

\subsection*{\noindent{\protect\normalsize 2.1. QUANTUM STATE AND
MEASUREMENT}}

\vskip-3mm In quantum mechanics, the probability distribution of data $z\in%
%TCIMACRO{\U{211d} }%
%BeginExpansion
\mathbb{R}
%EndExpansion
^{l}$ is a function of the \textit{state} $\rho$ of the system of interest,
and the \textit{measurement} $\mathsf{M}$ which is applied to the system. The
probability that $\omega$ lies in a Borel set $\Delta$, the corresponding
random variable, and the post-measurement state is denoted by $P_{\rho
}^{\mathsf{M}}\left(  \Delta\right)  $, $\Omega$, and $\rho_{\Delta
}^{\mathsf{M}}$, respectively. (Throughout the paper, the random variable is
denoted by capital letters, and the elements of its range is denoted by its decapitalization.)

$\rho$ and $\mathsf{M}$ are represented by linear operators defined in a
separable Hilbert space ($\mathcal{H}$, say). The inner product of $\varphi$
and $\psi$ is denoted by $\left\langle \varphi,\,\psi\right\rangle $. We
assign to the composite of the system $\mathcal{H}_{1}$ and \ $\mathcal{H}%
_{2}$ the tensor product $\mathcal{H}_{1}\otimes\mathcal{H}_{2}$, which is the
linear span of $\left\{  e_{1,i}\otimes e_{2,j}\right\}  $ ( $\left\{
e_{1,i}\right\}  $ and $\left\{  e_{2,i}\right\}  $ be a complete orthonormal
basis (CONS) of $\mathcal{H}_{1}$ and \ $\mathcal{H}_{2}$ respectively).

The notation $|A|$ means $|A|:=\left(  AA^{\dagger}\right)  ^{1/2}$, and
$\left\Vert A\right\Vert _{1}:=\mathrm{tr}\,|A|$ is a quantum version of total
variation. The totality of \textit{trace class operators}, or operators with
$\left\Vert A\right\Vert _{1}<\infty$, is denoted by $\mathcal{\tau}c\left(
\mathcal{H}\right)  $. Also, $\left\Vert A\right\Vert :=\sup_{\left\Vert
\varphi\right\Vert =1}\left\Vert A\varphi\right\Vert $ and $\ \mathcal{B}%
\left(  \mathcal{H}\right)  $ denotes the totality of \textit{bounded
operators}, or operators with $\left\Vert A\right\Vert <\infty$. (the standard
norm in $%
%TCIMACRO{\U{211d} }%
%BeginExpansion
\mathbb{R}
%EndExpansion
^{m}$ and in $\mathcal{H}$ is also denoted by $\left\Vert \cdot\right\Vert $.)
We introduce an order in the space of matrices by $A\geq(>)\,B\Leftrightarrow
\left\langle \varphi,A\varphi\right\rangle \geq(>)\,\left\langle
\varphi,B\varphi\right\rangle $, $\forall\varphi$. An operator $A$ is said to
be \textit{positive}, if $A\geq0$. \ A mapping $\Lambda$ of $\mathcal{\tau
}c\left(  \mathcal{H}\right)  $ to $\mathcal{\tau}c\left(  \mathcal{H}%
^{\prime}\right)  $ is called \textit{completely positive}, if $\Lambda
\otimes\mathbf{I}:\mathcal{B}\left(  \mathcal{H\otimes K}\right)
\rightarrow\mathcal{B}\left(  \mathcal{H}^{\prime}\mathcal{\otimes K}\right)
$ is positive, i.e., $A\geq0\Rightarrow\Lambda\otimes\mathbf{I}\left(
A\right)  \geq0$. $\Lambda$ is said to be \textit{trace preserving} if
$\mathrm{\mathrm{tr}\,}X=\mathrm{\mathrm{tr}\,}\Lambda\left(  X\right)  $
($\forall X$). Also we define $\left\Vert \Lambda\right\Vert _{cb}%
:=\sup_{X:\left\Vert X\right\Vert _{1}=1}\left\Vert \Lambda\otimes
\mathbf{I}\left(  X\right)  \right\Vert _{1}$.

A state of the system is represented by a \textit{density operator}, \ or an
operator $\rho$ with $\rho\geq0$, $\rho=\rho^{\ast}$, and $\mathrm{tr}%
\,\rho=1$. A measurement $\mathsf{M}$ is represented by an \textit{instrument}%
, or a $\sigma$additive map $\mathsf{M}:\Delta\rightarrow\mathsf{M}\left[
\Delta\right]  $ of the collection $\mathfrak{B}$ of Borel subsets in $%
%TCIMACRO{\U{211d} }%
%BeginExpansion
\mathbb{R}
%EndExpansion
^{m}$ into a completely positive linear transform $\mathsf{M}\left[
\Delta\right]  $ in $\mathcal{\tau}c\left(  \mathcal{H}\right)  $ with
$\,\mathsf{M}\left[
%TCIMACRO{\U{211d} }%
%BeginExpansion
\mathbb{R}
%EndExpansion
^{l}\right]  $'s being trace-preserving. Here, $\sigma$-additivity is in the
sense of strong operator topology in $\mathcal{B}\left(  \mathcal{\tau
}c\left(  \mathcal{H}\right)  \right)  $. \ Using $\rho$ and $\mathsf{M}%
\left[  \Delta\right]  $, $P_{\rho}^{\mathsf{M}}\left(  \Delta\right)  $ and
$\rho_{\Delta}$ is given by $\mathrm{tr}\,\mathsf{M}\left[  \Delta\right]
\rho$ and $\frac{1}{P_{\rho}^{\mathsf{M}}\left(  \Delta\right)  }%
\mathsf{M}\left[  \Delta\right]  \left(  \rho\right)  $, respectively. An
operation which does not extract information is described by a completely
positive and trace-preserving (CPTP) linear map $\Lambda$ from $\mathcal{\tau
}c\left(  \mathcal{H}\right)  $ to $\mathcal{\tau}c\left(  \mathcal{H}%
^{\prime}\right)  $.\ 

When we are interested only in $P_{\rho}^{\mathsf{M}}\left(  \Delta\right)  $,
we use a \textit{positive operator valued measure} (\textit{POVM}, in short),
or a $\sigma$-additive map $M:\Delta\rightarrow M\left(  \Delta\right)  $ of
$\mathfrak{B}$ to positive Hermitian operators with $M\left(
%TCIMACRO{\U{211d} }%
%BeginExpansion
\mathbb{R}
%EndExpansion
^{l}\right)  =\mathbf{1}$. Here, $\sigma$-additivity \ is in the sense of weak
operator topology in $\mathcal{B}\left(  \mathcal{H}\right)  $. The POVM $M$
corresponding to the measurement $\mathsf{M}$ satisfy $P_{\rho}^{\mathsf{M}%
}\left(  \Delta\right)  =\mathrm{tr}\,\mathsf{M}\left[  \Delta\right]
\rho=\mathrm{tr}\,\rho M\left(  \Delta\right)  $. Throughout the paper, POVM
of a measurement is denoted by the same character as the measurement but in
the standard font.

The \textit{support} $\mathrm{supp}\left(  \mathsf{M}\right)  $ of the
instrument $\mathsf{M}$ over $\mathfrak{B}\left(
%TCIMACRO{\U{211d} }%
%BeginExpansion
\mathbb{R}
%EndExpansion
^{l}\right)  $ is the smallest set with $\mathsf{M}\left[  \mathrm{supp}%
\left(  \mathsf{M}\right)  \right]  =\mathsf{M}\left[
%TCIMACRO{\U{211d} }%
%BeginExpansion
\mathbb{R}
%EndExpansion
^{l}\right]  $. The support of a POVM and a measure over $\mathfrak{B}\left(
%TCIMACRO{\U{211d} }%
%BeginExpansion
\mathbb{R}
%EndExpansion
^{l}\right)  $ are defined analogously.

In this paper, we need integral of the function taking values in
$\mathcal{\tau}c\left(  \mathcal{H}\right)  $ and $\mathcal{B}\left(
\mathcal{\tau}c\left(  \mathcal{H}\right)  \right)  $, which is a Banach space
with the norm $\left\Vert \cdot\right\Vert _{1}$ and $\left\Vert
\cdot\right\Vert _{cb}$, respectively. A Banach space valued function $f$ is
called strongly measurable iff $\forall\varepsilon>0$ $\exists f^{\prime}$
$\left\Vert f\left(  x\right)  -f^{\prime}\left(  x\right)  \right\Vert
<\varepsilon$ holds almost everywhere. $f$ is called weakly measurable iff
$\left\langle y^{\ast},f\left(  x\right)  \right\rangle $ is measurable for
any element $y^{\ast}$ of the dual space. Since $\mathcal{\tau}c\left(
\mathcal{H}\right)  $ and $\mathcal{B}\left(  \mathcal{\tau}c\left(
\mathcal{H}\right)  \right)  $ are separable, these two concepts are
equivalent in our case due to Theorem\thinspace1.1.4 of Schwabik and
Guoju\thinspace(2005).

Pettis integral of weakly measurable function $f$ is defined by the relation
$\int\left\langle y^{\ast},f\left(  x\right)  \right\rangle \mathrm{d}%
x=\left\langle y^{\ast},\int f\left(  x\right)  \mathrm{d}x\right\rangle $,
$\forall y$. Bochner integral of a simple function $\sum_{i}c_{i}\chi_{A_{i}}$
is defined as $\sum_{i}c_{i}\mu\left(  A_{i}\right)  $. For a strongly
measurable function $f$, it is defined as $\lim_{n\rightarrow\infty}\int
f_{n}\left(  x\right)  \mathrm{d}x$ (convergent in norm), where $\left\{
f_{n}\right\}  _{n}$ is a sequence of simple functions with $\lim
_{n\rightarrow\infty}\left\Vert f_{n}\left(  x\right)  -f\left(  x\right)
\right\Vert =0$ almost everywhere. Bochner integral exists iff $\int\left\Vert
f\right\Vert \mathrm{d}x<\infty$ (Theorem\thinspace1.4.3 of Schwabik and
Guoju\thinspace(2005)). Fubini's theorem holds for Pettis integral and Bochner integral.

\vskip-3mm

\subsection*{{\protect\normalsize \noindent2.2 ASYMPTOTIC\ THEORY OF QUANTUM
STATE\ ESTIMATION}}

\vskip-3mm Suppose that we are given $n$ independently and identically
prepared samples, i.e., the system $\underset{n}{\underbrace{\mathcal{H}%
\otimes\cdots\otimes}}\mathcal{H}:=\mathcal{H}^{\otimes n}$ in the state
$\underset{n}{\underbrace{\rho_{\theta}\otimes\cdots\otimes\rho_{\theta}}%
}=:\rho_{\theta}^{\otimes n}$, where $\rho_{\theta}$ is drawn from a
\textit{quantum statistical model} $\mathcal{M}:=\{\rho_{\theta}\,;\,\theta
\in\Theta\}$, with $\Theta$'s being an open convex region in $%
%TCIMACRO{\U{211d} }%
%BeginExpansion
\mathbb{R}
%EndExpansion
^{m}$.

Our purpose is to estimate the true value of $\theta$, based on a measurement
\thinspace$\mathsf{M}^{n}$ acting in $\mathcal{H}^{\otimes n}$. \ Based on the
measurement result $\omega_{n}\in%
%TCIMACRO{\U{211d} }%
%BeginExpansion
\mathbb{R}
%EndExpansion
^{l_{n}}$, we compute the estimate $T_{n}$ of $\theta$. The pair
$\mathcal{E}_{n}:=\left\{  \,\mathsf{M}^{n},\,T_{n}\right\}  $ \ (or sometimes
the sequence $\left\{  \mathcal{E}_{n}\right\}  _{n=1}^{\infty}$ also) is
called an\ \textit{estimator}. $T_{n}$ is a measurable function of $%
%TCIMACRO{\U{211d} }%
%BeginExpansion
\mathbb{R}
%EndExpansion
^{l_{n}}$ to $\hat{\Theta}_{n}\subset%
%TCIMACRO{\U{211d} }%
%BeginExpansion
\mathbb{R}
%EndExpansion
^{m}$. \ The following notations are used: $\mathrm{E}_{\theta}^{\,\mathsf{M}%
^{n}}\left[  f\left(  \omega_{n}\right)  \right]  :=\int f\left(  \omega
_{n}\right)  \mathrm{tr}\,\rho_{\theta}M^{n}\left(  \mathrm{d}\omega
_{n}\right)  $, $\left(  \mathrm{MSE}_{\theta}\left[  \mathcal{E}_{n}\right]
\right)  _{i,j}:=\mathrm{E}_{\theta}^{\mathsf{M}^{n}}\left(  T_{n}^{i}%
-\theta^{i}\right)  \left(  T_{n}^{j}-\theta^{j}\right)  $, $\ \left(
\mathrm{V}_{\theta}\left[  \mathcal{E}_{n}\right]  \right)  _{i,j}%
:=\mathrm{E}_{\theta}^{\mathsf{M}^{n}}\left(  T_{n}^{i}-\mathrm{E}_{\theta
}^{\,\mathsf{M}^{n}}\left[  T_{n}^{i}\right]  \right)  \left(  T_{n}%
^{j}-\mathrm{E}_{\theta}^{\,\mathsf{M}^{n}}\left[  T_{n}^{j}\right]  \right)
$. Below, $\mathrm{Tr}$ denotes the trace over $%
%TCIMACRO{\U{211d} }%
%BeginExpansion
\mathbb{R}
%EndExpansion
^{m}$, and $\partial_{j}:=\frac{\partial}{\partial\theta^{j}}$. $G_{\theta}$
is a symmetric positive real matrix, and $\theta\rightarrow G_{\theta}$ is
continuously differentiable, $\mathrm{Tr}\,G_{\theta}\leq b_{1}$, and
$\left\vert \mathrm{Tr}\,G_{\theta}-\mathrm{Tr}\,G_{\theta^{\prime}%
}\right\vert \leq b_{1}\left\Vert \theta-\theta^{\prime}\right\Vert $. We also
define $\left(  B_{\theta_{0}}\left[  \mathcal{E}_{n}\right]  \right)
_{j}^{i}:=\left.  \partial_{j}\mathrm{E}_{\theta}^{\,\mathsf{M}^{n}}\left[
T_{n}^{i}\right]  \right\vert _{\theta=\theta_{0}}$. Our interest is the first
order asymptotic term of the weighted mean square error $\varlimsup
_{n\rightarrow\infty}n\mathrm{Tr}\,G_{\theta}\mathrm{MSE}_{\theta}\left[
\mathcal{E}_{n}\right]  $, minimized over \textit{asymptotically unbiased
estimator }, or $\left\{  \mathcal{E}_{n}\right\}  _{n=1}^{\infty}$ with the
following condition:%
\begin{equation}
\lim_{n\rightarrow\infty}\mathrm{E}_{\theta}^{\mathsf{M}^{n}}\left[
T_{n}\right]  =\theta,\quad\lim_{n\rightarrow\infty}\,\left(  B_{\theta
}\left[  \mathcal{E}_{n}\right]  \right)  _{j}^{i}=\delta_{j}^{i}%
,\ \ \forall\theta\in\Theta. \label{asym-unbiased}%
\end{equation}
\vskip-\lastskip In considering (\ref{asym-unbiased}), $\mathrm{E}_{\theta
}^{\,\mathsf{M}^{n}}\left[  T_{n}\right]  $ has to be differentiable, which is
made sure by Lemma\thinspace\ref{lem:est-cont-2}. Use of MSE may be justified
based on the existence of the asymptotic normal efficient estimator, which is
composed in Subsection\thinspace3.3.

Our purpose is to replace this condition by the following tractable condition
without changing the optimal lowerbound to the asymptotic cost: $\mathcal{E}%
_{\theta_{0},n}=\{\mathsf{M}_{\theta_{0}}^{n},\,T_{\theta_{0},n}\}$ is said to
be \textit{locally unbiased at }$\theta_{0}$ if%
\begin{equation}
\mathrm{E}_{\theta_{0}}^{\mathsf{M}_{\theta_{0}}^{n}}\left[  T_{\theta_{0}%
,n}\right]  =\theta_{0},\ \quad\left(  B_{\theta_{0}}\left[  \mathcal{E}%
_{\theta_{0},n}\right]  \right)  _{j}^{i}=\delta_{j}^{i}.
\label{locally-unbiased}%
\end{equation}

\vskip-\lastskip Note that the condition (\ref{locally-unbiased}) is closed at
the point $\theta_{0}$. In the following sections, we prove that minimization
of $\varlimsup_{n\rightarrow\infty}n\mathrm{Tr}\,G_{\theta}\mathrm{MSE}%
_{\theta}\left[  \mathcal{E}_{n}\right]  $ \ over all the asymptotically
unbiased estimators can be reduce to minimization over the locally unbiased
estimators under some proper regularity conditions. \vskip-\lastskip\vskip-5mm

\section*{{\protect\normalsize 3. THE BASIC SETTING }}

\vskip-\lastskip\vskip-3mm

\subsection*{\noindent{\protect\normalsize 3.1. REGULARITY CONDITIONS AND
ASYMPTOTIC CRAMER-RAO BOUND}}

\vskip-3mm Regularity conditions on quantum statistical models and estimators
are listed in Table \ref{table:reg-cond-model}, in which convergence is with
respect to $\left\Vert \cdot\right\Vert _{1}$. $\Diamond_{i,\theta,n}$ is as
defined in Lemma\,\ref{lem:diamond}, and $\Diamond_{i,\theta,n}^{(1)}%
:=\Diamond_{i,\theta,n}\otimes\rho_{\theta}^{\otimes n-1}+\rho_{\theta}%
\otimes\Diamond_{i,\theta,n}\otimes\rho_{\theta}^{\otimes n-2}+\cdots$.

Among the conditions on models, only (M.1) is needed to prove the lowerbound.
\textit{Unless otherwise mentioned, (M.1) are assumed throughout the paper.}
(M.2-3) are necessarily to prove the achievability of the lowerbound. (M.2) is
equivalent to $\left\vert \partial_{i}\mathrm{tr}\,\rho_{\theta}X\right\vert
\leq c\left\vert \mathrm{tr}\,\rho_{\theta}X^{2}\right\vert $ for any bounded Hermitian.

If $\dim\mathcal{H<\infty}$, an example of estimator $\mathcal{\tilde{E}}%
_{n}=\left\{  \mathsf{\tilde{M}}^{n},\tilde{T}_{n}\right\}  $ with (M.3.1-3)
is constructed as follows. Let $l:=\left(  \dim\mathcal{H}\right)  ^{2}$, and
define $\mathbf{e}_{\upsilon}:=\left(  0,\cdots,0,\overset{\upsilon}%
{1},0,\cdots,0\right)  ^{T}\in%
%TCIMACRO{\U{211d} }%
%BeginExpansion
\mathbb{R}
%EndExpansion
^{l}$. Let $\mathrm{supp}\left(  \tilde{M}\right)  $ be $\left\{
\mathbf{e}_{\upsilon}\right\}  _{\upsilon=1}^{l}$, and let $\left\{  \tilde
{M}\left(  \left\{  \mathbf{e}_{\upsilon}\right\}  \right)  \right\}
_{\upsilon=1}^{l-1}$ be linearly independent. Denoting the $\kappa$-th
measurement result by $\omega_{1,\kappa}$, we can estimate $\mathrm{tr}%
\rho_{\theta}\tilde{M}\left(  \left\{  \mathbf{e}_{\upsilon}\right\}  \right)
$ by the relative frequency of observing $\mathbf{e}_{\upsilon}$, which is
$\upsilon$-th component $\overline{\omega}_{1}^{\upsilon}$ of $\overline
{\omega}_{1}:=\frac{1}{n}\sum_{\kappa=1}^{n}\omega_{1,\kappa}$. Let $\hat
{\rho}$ be a solution to the system of linear equations $\mathrm{tr}\hat{\rho
}\tilde{M}\,\left(  \left\{  \mathbf{e}_{\upsilon}\right\}  \right)
=\overline{\omega}_{1}^{\upsilon}$ ($\upsilon=1$, $\cdots$, $l$), and
$\tilde{T}_{n}$ is defined by $\rho_{\tilde{T}_{n}}=\Pi\left(  \hat{\rho
}\right)  $, where $\Pi$ is a properly defined projection. Also, if $\left\{
\rho_{\theta}\right\}  _{\theta\in\Theta}$ is a smooth submodel of quantum
Gaussian model $\left\{  \sigma_{\eta}\right\}  $, we can compose
$\mathcal{\tilde{E}}_{n}$ based on the estimator $\hat{\eta}_{n}$ of $\eta$ by
$\rho_{\tilde{T}_{n}}=\Pi\left(  \sigma_{\hat{\eta}_{n}}\right)  $, with
proerly defined projection $\Pi$.

Both of them has the following property. $\left\{  \rho_{\theta}\right\}
_{\theta\in\Theta}$ is a somooth submaniforld of a larger quantum state model
$\left\{  \sigma_{\eta}\right\}  $, where $\eta$ has consistent estimator in
the form of $\hat{\eta}_{n}=\frac{1}{n}\sum_{\kappa=1}^{n}\omega_{1,\kappa}$,
where $\omega_{1,\kappa}$ is the data obtained by application of
$\mathsf{\tilde{M}}$ on the $\kappa$-th sample. Suppose that $\eta=\left(
\theta,\zeta\right)  $, and $\rho_{\theta}=\sigma_{\theta,\zeta\left(
\theta\right)  }$. Moreover, we suppose that $\zeta\left(  \theta\right)  $ is
uniformly continuous in $\theta$. Then, $\tilde{T}_{n}:=\left(  \hat{\eta}%
_{n}^{1},\cdots,\hat{\eta}_{n}^{m}\right)  $ satisfies the requirements.

As for the estimators, besides (\ref{asym-unbiased}), we suppose
$\mathcal{E}_{n}\mathcal{=}\left\{  \mathsf{M}^{n},T_{n}\right\}  $ satisfies
(E) in Table\thinspace\thinspace\ref{table:reg-cond-op-model}\ for all $n$.
(E') is used to characterize lowerbound to the asymptotic cost. Observe that
(E')$\Longrightarrow$(E).

We define the \textit{\ asymptotic quantum Cramer-Rao type bound }$C_{\theta
}^{Q}\left(  G_{\theta},\mathcal{M}\right)  $ as\newline$\underset
{n\rightarrow\infty}{\varlimsup}\inf\left\{  n\mathrm{Tr}\,G_{\theta
}\mathrm{MSE}_{\theta}\left[  \mathcal{E}_{n}\right]  \,;\mathsf{M}^{n}\text{
in }\mathcal{H}^{\otimes n}\text{, (\ref{asym-unbiased}), (E)}\right\}  $. In
the succeeding subsections, the following theorem will be proved. In the
remaining of this subsection, some technical lemmas will be shown.%

%TCIMACRO{\TeXButton{B}{\begin{table}[tbp] \centering}}%
%BeginExpansion
\begin{table}[tbp] \centering
%EndExpansion%
\begin{tabular}
[c]{|l|}\hline
(M.1)\quad$\partial_{i}\rho_{\theta}$ and $\partial_{i}\partial_{j}%
\rho_{\theta}$ \ exist and are locally uniformly continuous. $\left\Vert
\partial_{i}\rho_{\theta_{0}}\right\Vert $,$\left\Vert \partial_{i}%
\partial_{j}\rho_{\theta}\right\Vert _{1}$ $\leq a_{1}<\infty$.\\
(M.2)\quad$\exists L_{\theta,i}$: Hermitian and $\partial_{i}\rho_{\theta
}=\frac{1}{2}\left(  L_{\theta,i}\rho_{\theta}+\rho_{\theta}L_{\theta
,i}\right)  $, and $\mathrm{tr}\,\rho_{\theta}\left(  L_{\theta,i}\right)
^{2}<\infty$, $\forall\theta\in\Theta$.\\
(M.3)\quad There is an estimator $\mathcal{\tilde{E}}_{n}=\left\{
\mathsf{\tilde{M}}^{n},\tilde{T}_{n}\right\}  $ in $\mathcal{H}^{\otimes n}$ ,
such that\\
\ \ (M.3.1)\quad(\ref{asym-unbiased}) and (E) are satisfied. \ \\
\ \ (M.3.2)\quad$\mathrm{E}_{\theta}^{\mathsf{\tilde{M}}^{n}}\left\Vert
\tilde{T}_{n}-\theta\right\Vert ^{4}\leq\frac{D_{\theta,2}}{n^{2}}$,
$\forall\theta\in\Theta$, $\exists$ $D_{\theta,2}$.\\
\ \ (M.3.3)\quad$\mathsf{\tilde{M}}^{n}$ is $n$ times repetition of a
measurement $\mathsf{\tilde{M}}$ in $\mathcal{H}$, producing the data
$x_{\kappa}\in%
%TCIMACRO{\U{211d} }%
%BeginExpansion
\mathbb{R}
%EndExpansion
^{l}$.\\\hline\hline
(E)$\quad\exists a_{4,n}$, $\forall\theta\in\Theta$, $\int\left\Vert
T_{n}\left(  \omega_{n}\right)  -\theta\right\Vert \mathrm{tr}\,\Diamond
_{\theta,n}^{\left(  1\right)  }M^{n}\left(  \mathrm{d}\omega_{n}\right)  \leq
na_{1}a_{4,n}$,\quad$\int\left\Vert T_{n}\left(  \omega_{n}\right)
-\theta\right\Vert ^{2}\mathrm{tr}\rho_{\theta}^{\otimes n}M^{n}\left(
\mathrm{d}\omega_{n}\right)  \leq na_{4,n}^{2}$.\\\hline
(E')\quad$T_{n}$ takes values in $\hat{\Theta}_{T_{n}}$, with $\sup
_{\theta,\theta^{\prime}\in\hat{\Theta}_{T_{n}}}\left\Vert \theta
-\theta^{\prime}\right\Vert \leq a_{4,n}<\infty$.\\\hline
\end{tabular}
\caption{Regularity conditions on quantum statistical models (M.1-4) and
estimators (E), (E')}\label{table:reg-cond-model}%
%TCIMACRO{\TeXButton{E}{\end{table}}}%
%BeginExpansion
\end{table}%
%EndExpansion

\begin{theorem}
\label{th:cr}Suppose (M.1-3) hold. Then,
\begin{align}
&  C_{\theta}^{Q}\left(  G_{\theta},\mathcal{M}\right)  =\lim_{n\rightarrow
\infty}\inf\left\{  n\mathrm{Tr}\,G_{\theta}\mathrm{V}_{\theta}\left[
\mathcal{E}_{\theta,n}\right]  \,\text{\thinspace};\text{ }\mathsf{M}%
^{n}\text{ in }\mathcal{H}^{\otimes n}\text{, (\ref{locally-unbiased}), (E')
}\right\}  ,\label{cr-q1}\\
&  =\lim_{n\rightarrow\infty}\inf\left\{  n\mathrm{Tr}\,G_{\theta}%
\mathrm{V}_{\theta}\left[  \mathcal{E}_{\theta,n}\right]  \,\text{\thinspace
};\text{ }\mathsf{M}^{n}\text{ in }\mathcal{H}^{\otimes n}\text{,
(\ref{locally-unbiased}), (E) }\right\}  . \label{cr-q2}%
\end{align}

\end{theorem}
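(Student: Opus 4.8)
Throughout, the first observation is that for a locally unbiased estimator the bias vanishes at $\theta_0$ by (\ref{locally-unbiased}), so $\mathrm{MSE}_{\theta_0}=\mathrm{V}_{\theta_0}$ there and the two right-hand sides may be read with either matrix; I shall write $c_1$ and $c_2$ for the infima in (\ref{cr-q1}) and (\ref{cr-q2}). The inclusion (E')$\Rightarrow$(E) shrinks the feasible set, giving $c_1\geq c_2$ at once. For the reverse I would take any locally unbiased estimator meeting (E) and truncate its range to a ball of radius $R_n\to\infty$ chosen slowly enough that the $\|\cdot\|^2$-mass beyond $R_n$ permitted by (E) contributes $o(n^{-1})$ to the cost; truncation moves $\mathrm{E}_{\theta_0}[T_n]$ and $B_{\theta_0}$ by $o(1)$, which I repair by an affine map $T_n\mapsto A_nT_n+v_n$ with $A_n\to\mathbf{1}$ restoring (\ref{locally-unbiased}) and enforcing (E') while changing $n\mathrm{Tr}\,G_\theta\mathrm{V}_\theta$ by $o(1)$. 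Hence $c_1\leq c_2$, so the two infima coincide; write $c$ for the common value.

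Next I would prove the lower bound $C_\theta^Q\geq c$ using only (M.1). Given a sequence satisfying (\ref{asym-unbiased}) and (E), asymptotic unbiasedness supplies $\mathrm{E}_\theta[T_n]\to\theta$ and $B_\theta\to\mathbf{1}$ (differentiability being guaranteed by Lemma\,\ref{lem:est-cont-2}), so the recentering $T_n\mapsto B_\theta^{-1}(T_n-\mathrm{E}_\theta[T_n])+\theta$ yields, for each $n$, an estimator exactly locally unbiased at $\theta$ and still meeting (E) (its affine coefficients being bounded). Since $\mathrm{MSE}_\theta=\mathrm{V}_\theta+b_nb_n^{\top}\geq\mathrm{V}_\theta$ and $B_\theta^{-1}\to\mathbf{1}$, one gets $n\mathrm{Tr}\,G_\theta\mathrm{MSE}_\theta[\mathcal{E}_n]\geq n\mathrm{Tr}\,G_\theta\mathrm{V}_\theta[\text{recentered}]-o(1)$, and the right side is bounded below by the level-$n$ infimum defining $c_2$. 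Passing to $\varlimsup_n$ gives $C_\theta^Q\geq c_2=c$. The point worth care is that $b_nb_n^{\top}$ is only known to vanish, not at rate $o(n^{-1})$, which is precisely why the estimate is routed through $\mathrm{V}_\theta$, controlled by the recentering, rather than by dropping the bias term.

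The substantive work is the matching achievability $C_\theta^Q\leq c$, where (M.2-3) enter. I would fix a near-optimal locally unbiased family and splice it with the globally consistent estimator $\mathcal{\tilde{E}}$ of (M.3) in a two-stage scheme: use $n_1=\lfloor n^{1-\delta}\rfloor$ samples and the repeated measurement of (M.3.3) to form a rough estimate $\hat\theta$, then apply the locally-unbiased-at-$\hat\theta$ estimator to the remaining $n-n_1\sim n$ samples. Asymptotic unbiasedness of the splice follows from $\hat\theta\to\theta$ together with local unbiasedness and the local uniform continuity of $\partial_i\rho_\theta$ in (M.1); its asymptotic weighted mean square error equals $c$ up to $o(1)$. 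The principal obstacle, and the place the argument is genuinely delicate, is the event $\{\|\hat\theta-\theta\|\ \text{large}\}$ on which the second stage is mistuned: I would bound its contribution by Cauchy--Schwarz against the fourth-moment concentration $\mathrm{E}_\theta^{\mathsf{\tilde{M}}^{n}}\|\tilde T_n-\theta\|^4\leq D_{\theta,2}/n^2$ of (M.3.2), making the off-event cost $o(n^{-1})$ while (E)/(E') keep the on-event variance within $o(1)$ of $c$. Combining the three steps yields $C_\theta^Q=c_1=c_2$, and since the outer $\varlimsup$ is squeezed between matching lower and upper estimates, the limits in (\ref{cr-q1}) and (\ref{cr-q2}) exist as asserted.
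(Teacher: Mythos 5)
Your first two steps essentially reproduce the paper's own arguments: the equality (\ref{cr-q1})$=$(\ref{cr-q2}) is proved there by exactly your truncation-plus-affine-repair device, and the lower bound by exactly your recentering $T_n\mapsto B_\theta\left[\mathcal{E}_n\right]^{-1}\left(T_n-\mathrm{E}_\theta^{\mathsf{M}^n}\left[T_n\right]\right)+\theta$ routed through the variance rather than the bias. Those parts are sound.

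The achievability step, however, contains a genuine gap. Your second stage applies a \emph{single} locally-unbiased-at-$\hat\theta$ estimator to a block of $n-n_1\sim n$ samples. The only control the regularity conditions give on a mistuned locally unbiased estimator is the paper's bound (\ref{|gamma|<}): for a block of size $M$, the bias at the true $\theta$ is at most $M^2m^2a_1^2a_{4,M}\left\Vert\hat\theta-\theta\right\Vert^2$ --- note the factor $M^2$, which comes from differentiating $\rho_\theta^{\otimes M}$ twice. Nothing in (\ref{locally-unbiased}), (E) or (E') prevents a near-optimal estimator from saturating this bound, since local unbiasedness constrains it only at its own tuning point. With $M\sim n$ and $\mathrm{E}\left\Vert\hat\theta-\theta\right\Vert^{4}\lesssim n_1^{-2}=n^{-2(1-\delta)}$, the bias-squared contribution to $n\,\mathrm{Tr}\,G_\theta\mathrm{MSE}_\theta$ is of order $n\cdot n^{4}a_{4,n}^{2}\cdot n^{-2(1-\delta)}=n^{3+2\delta}a_{4,n}^{2}$, which diverges; this already happens on the ``good'' event $\left\Vert\hat\theta-\theta\right\Vert\lesssim n^{-(1-\delta)/2}$, so your Cauchy--Schwarz treatment of the off-event does not touch the problem. (The on-event variance is equally uncontrolled: (\ref{GV-GV'}) also carries a factor $n$.) This is precisely why the paper does not use your scheme: in Lemma \ref{lem:est-compose} the second stage consists of $n_2\to\infty$ blocks of a \emph{fixed} size $n_1$ whose estimates are averaged --- averaging suppresses the variance by $1/n_2$, while the per-block mistuning bias stays of order $n_1^2\left\Vert\hat\theta-\theta\right\Vert^2$ with $n_1$ fixed, hence harmless; only after $n_2\to\infty$ does one let $n_1\to\infty$, which is exactly why the limit over $n$ sits outside the infimum in (\ref{cr-q1})--(\ref{cr-q2}). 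Two further ingredients your plan omits are then needed to close the argument: Lemma \ref{lem:trGV-cont}, giving continuity in the tuning point $\theta_0$ of the infimum (an appeal to local uniform continuity of $\partial_i\rho_\theta$ is not a substitute, since the infimizing family may behave discontinuously in $\theta_0$), and Lemma \ref{lem:est-asym-unbiased}, where verifying the derivative condition in (\ref{asym-unbiased}) for the composed estimator requires the Leibniz rule (\ref{leibniz-pre}) together with (M.2) --- an interchange of differentiation with the first-stage expectation that is not automatic under (M.1) alone.
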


\begin{lemma}
\label{lem:est-cont-2}(E) and (M.1) imply the existence of $\partial
_{j}\mathrm{E}_{\theta}^{\mathsf{M}^{n}}\left[  T_{n}^{i}\right]  $
and$\ \ \partial_{j}\mathrm{E}_{\theta}^{\mathsf{M}^{n}}\left[  T_{n}%
^{i}\right]  =\int T_{n}^{i}\left(  \omega_{n}\right)  \mathrm{tr}%
\,\partial_{j}\rho_{\theta}M^{n}\left(  \mathrm{d}\omega_{n}\right)  $.
\end{lemma}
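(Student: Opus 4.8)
The plan is to differentiate under the integral sign, the interchange being licensed by (E), (M.1), and the operator domination of the parametric derivative recorded in Lemma~\ref{lem:diamond}. Write $g_{i}(\theta):=\mathrm{E}_{\theta}^{\mathsf{M}^{n}}[T_{n}^{i}]=\int T_{n}^{i}(\omega_{n})\,\mathrm{tr}\,\rho_{\theta}^{\otimes n}M^{n}(\mathrm{d}\omega_{n})$ (here, consistently with the definition of $\mathrm{E}_{\theta}^{\mathsf{M}^{n}}$, the symbol $\partial_{j}\rho_{\theta}$ in the statement abbreviates $\partial_{j}(\rho_{\theta}^{\otimes n})$, and $\mathbf{e}_{j}$ denotes the $j$-th coordinate vector of $\mathbb{R}^{m}$). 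The integral is finite: by $|T_{n}^{i}|\leq\|T_{n}-\theta\|+|\theta^{i}|$, Cauchy--Schwarz, and the second estimate in (E), one gets $\int|T_{n}^{i}|\,\mathrm{tr}\,\rho_{\theta}^{\otimes n}M^{n}\leq(na_{4,n}^{2})^{1/2}+|\theta^{i}|<\infty$. By (M.1) and the Leibniz rule, $\theta\mapsto\rho_{\theta}^{\otimes n}$ is $\|\cdot\|_{1}$-differentiable with $\partial_{j}(\rho_{\theta}^{\otimes n})=\sum_{k=1}^{n}\rho_{\theta}^{\otimes(k-1)}\otimes\partial_{j}\rho_{\theta}\otimes\rho_{\theta}^{\otimes(n-k)}$, so for each Borel $\Delta$ the scalar map $\theta\mapsto\mathrm{tr}\,\rho_{\theta}^{\otimes n}M^{n}(\Delta)$ is $C^{1}$ and the fundamental theorem of calculus gives, for small $h$,
\[
\mathrm{tr}\,\rho_{\theta+h\mathbf{e}_{j}}^{\otimes n}M^{n}(\Delta)-\mathrm{tr}\,\rho_{\theta}^{\otimes n}M^{n}(\Delta)=\int_{0}^{h}\mathrm{tr}\,\partial_{j}(\rho_{\theta+s\mathbf{e}_{j}}^{\otimes n})M^{n}(\Delta)\,\mathrm{d}s .
\]

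Multiplying by $T_{n}^{i}(\omega_{n})$, integrating against $M^{n}(\mathrm{d}\omega_{n})$, and exchanging the $s$- and $\omega_{n}$-integrals by Fubini's theorem (which, as noted above, holds for these integrals) then yields
\[
\frac{g_{i}(\theta+h\mathbf{e}_{j})-g_{i}(\theta)}{h}=\frac{1}{h}\int_{0}^{h}\phi(s)\,\mathrm{d}s,\qquad\phi(s):=\int T_{n}^{i}(\omega_{n})\,\mathrm{tr}\,\partial_{j}(\rho_{\theta+s\mathbf{e}_{j}}^{\otimes n})M^{n}(\mathrm{d}\omega_{n}).
\]
The absolute convergence needed for Fubini follows from $|\mathrm{tr}\,AM^{n}(\Delta)|\leq\mathrm{tr}\,|A|M^{n}(\Delta)$ for Hermitian $A$ (as $M^{n}(\Delta)\geq0$), the domination $\bigl|\partial_{j}(\rho_{\theta'}^{\otimes n})\bigr|\leq\Diamond_{j,\theta',n}^{(1)}$ from Lemma~\ref{lem:diamond}, the splitting $|T_{n}^{i}|\leq\|T_{n}-\theta'\|+|(\theta')^{i}|$, and the first estimate in (E), which bounds $\int\|T_{n}-\theta'\|\,\mathrm{tr}\,\Diamond_{\theta',n}^{(1)}M^{n}\leq na_{1}a_{4,n}$; local uniform continuity in (M.1) makes these bounds uniform for $s$ in the compact interval of integration.

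It now suffices to show that $\phi$ is continuous at $s=0$, since then $\frac{1}{h}\int_{0}^{h}\phi(s)\,\mathrm{d}s\to\phi(0)$ as $h\to0$, which is precisely the asserted formula. Writing $T_{n}^{i}=(T_{n}^{i}-\theta^{i})+\theta^{i}$, the constant part contributes $\theta^{i}\,\mathrm{tr}\,\partial_{j}(\rho_{\theta+s\mathbf{e}_{j}}^{\otimes n})=\theta^{i}\,\partial_{j}\mathrm{tr}(\rho_{\theta+s\mathbf{e}_{j}}^{\otimes n})=0$ for every $s$, so with $\Xi_{s}:=\partial_{j}(\rho_{\theta+s\mathbf{e}_{j}}^{\otimes n})-\partial_{j}(\rho_{\theta}^{\otimes n})$,
\[
\phi(s)-\phi(0)=\int(T_{n}^{i}-\theta^{i})\,\mathrm{tr}\,\Xi_{s}\,M^{n}(\mathrm{d}\omega_{n}),\qquad|\phi(s)-\phi(0)|\leq\int\|T_{n}-\theta\|\,\mathrm{tr}\,|\Xi_{s}|\,M^{n}(\mathrm{d}\omega_{n}).
\]
By (M.1) and the Leibniz expansion, $\|\Xi_{s}\|_{1}\to0$ as $s\to0$, whence the measure $\mathrm{tr}\,|\Xi_{s}|M^{n}$ has total mass $\mathrm{tr}\,|\Xi_{s}|=\|\Xi_{s}\|_{1}\to0$. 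Fixing a single $M^{n}$-integrable majorant $\lambda:=\mathrm{tr}\,\bar{\Psi}M^{n}$, with $\bar{\Psi}$ a positive operator dominating $|\Xi_{s}|$ for all small $s$ (obtained from Lemma~\ref{lem:diamond} and the local bounds in (M.1)), one has $\|T_{n}-\theta\|\in L^{1}(\lambda)$ by (E), while the $\lambda$-densities of $\mathrm{tr}\,|\Xi_{s}|M^{n}$ are bounded by $1$ and tend to $0$ in $L^{1}(\lambda)$, hence $\lambda$-a.e.\ along any subsequence; dominated convergence with majorant $\|T_{n}-\theta\|$ then gives $\phi(s)\to\phi(0)$.

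The main obstacle is exactly this last limit. Because $T_{n}^{i}$ is in general unbounded, the trace-norm smallness $\|\Xi_{s}\|_{1}\to0$ does not by itself control $\int\|T_{n}-\theta\|\,\mathrm{tr}\,|\Xi_{s}|M^{n}$; the entire role of (E) together with the $\Diamond$-domination of Lemma~\ref{lem:diamond} is to supply the $s$-independent integrable majorant, equivalently the uniform integrability of $\|T_{n}-\theta\|$ across the vanishing family of measures, that legitimizes passing to the limit. Producing this single majorant $\bar{\Psi}$ in operator order from the locally bounded family $\{\Diamond_{j,\theta+s\mathbf{e}_{j},n}^{(1)}\}$ is the one genuinely delicate point, and is where the precise form of $\Diamond_{i,\theta,n}$ given in Lemma~\ref{lem:diamond} must be used.
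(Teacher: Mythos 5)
Your proposal takes a genuinely different route from the paper's proof, which is a one-line reduction: by Lemma~\ref{lem:diamond}, the model and estimator satisfy the domination hypotheses of Proposition~VI.2.2 of Holevo\thinspace(1982), and that proposition \emph{is} the asserted differentiation under the integral sign. What you wrote is in effect a self-contained proof of the cited proposition: finiteness of $\mathrm{E}_{\theta}^{\mathsf{M}^{n}}[T_{n}^{i}]$ from the second bound in (E) via Cauchy--Schwarz, trace-norm $C^{1}$-smoothness of $\theta\mapsto\rho_{\theta}^{\otimes n}$ plus the fundamental theorem of calculus, Fubini to write the difference quotient as $\frac{1}{h}\int_{0}^{h}\phi(s)\,\mathrm{d}s$, and continuity of $\phi$ at $s=0$ by uniform integrability, with (E) supplying the integrable majorant. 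This skeleton is correct, and it makes explicit exactly what the paper's citation hides.

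There is, however, a genuine defect at the point you yourself flag as delicate: you formulate every domination in \emph{operator order} --- $\bigl|\partial_{j}(\rho_{\theta'}^{\otimes n})\bigr|\leq\Diamond_{j,\theta',n}^{(1)}$, and a single positive $\bar{\Psi}\geq|\Xi_{s}|$ for all small $s$ --- and propose to obtain these from Lemma~\ref{lem:diamond} by dominating the Leibniz/FTC expansions term by term. This cannot work as stated, because the operator absolute value does not satisfy the triangle inequality: already for $2\times2$ Hermitian matrices one can have $|A+B|\not\leq|A|+|B|$, so termwise bounds never yield an operator bound on the absolute value of a sum or of a Bochner integral. Worse, a common trace-class majorant need not exist at all: for $\Xi$'s behaving like $k^{-1}|e_{k}\rangle\langle e_{k}|$, any operator dominating all of them has infinite trace, even though $\|\Xi\|_{1}\to0$. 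Since your density bound $\mathrm{tr}\,|\Xi_{s}|M^{n}(\Delta)\leq\lambda(\Delta)$ is derived precisely from $|\Xi_{s}|\leq\bar{\Psi}$, this step is unavailable. (The same caveat applies to the literal statement of Lemma~\ref{lem:diamond}: its proof only establishes the traced form $|\mathrm{tr}\,\partial_{i}\rho_{\theta_{0}}M|\leq\mathrm{tr}\,\Diamond_{i,\theta}M$ for $M\geq0$, which is all the paper ever uses.)

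The repair is to avoid operator absolute values entirely and work at the level of measures. Put $\mu_{s}(\Delta):=\mathrm{tr}\,\Xi_{s}M^{n}(\Delta)$. For Hermitian operators traced against the positive operator $M^{n}(\Delta)$, splitting each tensor factor into its positive and negative parts gives scalar bounds of the form $|\mathrm{tr}(A_{1}\otimes\cdots\otimes A_{n})M|\leq\mathrm{tr}(|A_{1}|\otimes\cdots\otimes|A_{n}|)M$; applying this to each term of the Leibniz and FTC expansions yields a \emph{setwise} bound $|\mu_{s}(\Delta)|\leq\lambda(\Delta)$ for all $|s|<a_{2}$, where $\lambda$ is the measure built from the $\Diamond$-type dominating operator for the whole coordinate segment (constructed as in Lemma~\ref{lem:diamond}, read in traced form, with the spectator factors also majorized uniformly in $s$ by the same device). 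By Hahn decomposition the total variations satisfy $|\mu_{s}|\leq2\lambda$ setwise, so the densities $\mathrm{d}|\mu_{s}|/\mathrm{d}\lambda$ are bounded, while $|\mu_{s}|(\mathbb{R}^{l_{n}})\leq2\|\Xi_{s}\|_{1}\to0$ and $\|T_{n}-\theta\|\in L^{1}(\lambda)$ by (E). Your subsequence-plus-dominated-convergence argument then runs verbatim with $\int\|T_{n}-\theta\|\,\mathrm{d}|\mu_{s}|$ in place of $\int\|T_{n}-\theta\|\,\mathrm{tr}\,|\Xi_{s}|M^{n}$, and the same substitution legitimizes the Fubini step. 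With this reformulation --- total-variation domination instead of operator-order domination --- your proof is complete and is a faithful, self-contained replacement for the paper's appeal to Holevo.
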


\begin{proof}
Due to Lemma\thinspace\ref{lem:diamond}, this Lemma is equivalent to
Proposition VI.2.2 of Holevo\thinspace(1982).
\end{proof}

\begin{lemma}
\label{lem:diamond}(M.1) implies that $\exists a_{1}\exists a_{2}\forall i$,
$\forall\theta$, $\theta_{0}\in\Theta\,$and $\left\vert \theta^{i}-\theta
_{0}^{i}\right\vert <a_{2}$, $\theta^{j}=\theta_{0}^{j}$ ($j\neq i$),
$\exists\Diamond_{i,\theta}$ such that $\,\left\vert \partial_{i}\rho
_{\theta_{0}}\right\vert \,\leq\Diamond_{i,\theta}\,\,$, $\mathrm{tr}%
\,\Diamond_{i,\theta}\,\leq a_{1}<\infty$.
\end{lemma}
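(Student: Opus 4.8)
The plan is to dominate $\bigl|\partial_i\rho_{\theta_0}\bigr|$ by a trace-class operator assembled from $\rho_{\theta_0}$ and the second-order Taylor remainder, the latter being trace class even though (M.1) controls $\partial_i\rho$ only in operator norm. Fix $i$, write $e_i$ for the $i$-th coordinate direction in $\mathbb{R}^m$, and put $t:=\theta^i-\theta_0^i$. By (M.1) the curve $s\mapsto\rho_{\theta_0+se_i}$ is twice differentiable with $\partial_i^2\rho$ locally uniformly continuous and $\|\partial_i^2\rho\|_1\le a_1$ on the segment, so $\partial_i^2\rho$ is Bochner integrable there and
\[
\rho_{\theta_0\pm t e_i}=\rho_{\theta_0}\pm t\,\partial_i\rho_{\theta_0}+R_{\pm},\qquad R_{\pm}:=\int_0^{\pm t}(\pm t-s)\,\partial_i^2\rho_{\theta_0+se_i}\,\mathrm{d}s,
\]
with $\|R_{\pm}\|_1\le\tfrac12 a_1 t^2$. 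Thus each $R_\pm$ is trace class, which is the whole point.

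Next I extract two-sided operator inequalities from positivity. Assuming $t>0$, both $\rho_{\theta_0+te_i}\ge0$ and $\rho_{\theta_0-te_i}\ge0$, so dividing the two expansions by $t$ gives
\[
-\tfrac1t\bigl(\rho_{\theta_0}+R_{+}\bigr)\le\partial_i\rho_{\theta_0}\le\tfrac1t\bigl(\rho_{\theta_0}+R_{-}\bigr).
\]
Since $R_{\pm}\le|R_{\pm}|$ while $\rho_{\theta_0},|R_{\pm}|\ge0$, I set
\[
\Diamond_{i,\theta}:=\tfrac1t\bigl(\rho_{\theta_0}+|R_{+}|+|R_{-}|\bigr)\ge0,
\]
which is trace class, and verify $-\Diamond_{i,\theta}\le\partial_i\rho_{\theta_0}\le\Diamond_{i,\theta}$ by restoring on each side the nonnegative summand that was dropped. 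As $\partial_i\rho_{\theta_0}$ is Hermitian, this sandwich is exactly $\bigl|\partial_i\rho_{\theta_0}\bigr|\le\Diamond_{i,\theta}$, and $\mathrm{tr}\,\Diamond_{i,\theta}\le\tfrac1t(1+a_1t^2)=\tfrac1t+a_1t$.

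To make the constant uniform I use that, because $\Diamond_{i,\theta}$ need only dominate $|\partial_i\rho_{\theta_0}|$, the step length is at my disposal within $(0,a_2)$: minimizing $\tfrac1t+a_1t$ gives $2\sqrt{a_1}$ at $t=1/\sqrt{a_1}$ when that falls below $a_2$, and $\tfrac1{a_2}+a_1a_2$ otherwise. In either case $\mathrm{tr}\,\Diamond_{i,\theta}$ is bounded by a finite constant depending only on $a_1,a_2$, which I relabel $a_1$; the case $t<0$ is identical after replacing $t$ by $|t|$.

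The main obstacle is the two places where the operator, rather than scalar, nature intervenes. First, there is no triangle inequality for the operator absolute value, so the domination of $|\partial_i\rho_{\theta_0}|$ can be argued only through the sandwich $-\Diamond_{i,\theta}\le\partial_i\rho_{\theta_0}\le\Diamond_{i,\theta}$; this is precisely why both remainders are folded, as positive operators, into a single $\Diamond_{i,\theta}$. Second, the symmetric step needs both $\theta_0\pm te_i\in\Theta$, whereas near $\partial\Theta$ only the inward direction is available. I would handle this by expanding around an interior point $\theta_1:=\theta_0+\tfrac{t}{2}e_i$ of the segment $[\theta_0,\theta]\subset\Theta$, which enjoys two-sided room, bounding $|\partial_i\rho_{\theta_1}|$ there as above, and then transporting to $\theta_0$ through $\partial_i\rho_{\theta_0}=\partial_i\rho_{\theta_1}-D$ with $D:=\int_0^{t/2}\partial_i^2\rho_{\theta_0+re_i}\,\mathrm{d}r$: since $\pm D\le|D|$, the sandwich at $\theta_1$ yields a sandwich at $\theta_0$ whose bound exceeds that at $\theta_1$ by the positive trace-class correction $|D|$, of trace norm at most $a_1 t/2$, which is absorbed into $\Diamond_{i,\theta}$. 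Uniformity of $a_1,a_2$ then rests on the local uniform bounds of (M.1).
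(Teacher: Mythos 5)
Your route is genuinely different from the paper's. The paper writes $\partial_i\rho_{\theta_0}=\partial_i\rho_{\theta-a_2\mathbf{e}_i}+\int_{\theta-a_2\mathbf{e}_i}^{\theta_0}\partial_i^2\rho_x\,\mathrm{d}x$ and sets $\Diamond_{i,\theta}:=\vert\partial_i\rho_{\theta-a_2\mathbf{e}_i}\vert+\int_{\theta-a_2\mathbf{e}_i}^{\theta+a_2\mathbf{e}_i}\vert\partial_i^2\rho_x\vert\,\mathrm{d}x$, whereas you exploit positivity of the neighbouring states $\rho_{\theta_0\pm t\mathbf{e}_i}$ and fold only $\rho_{\theta_0}$ and the second-order remainders into $\Diamond_{i,\theta}$. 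Your version has two real advantages: it needs no control of $\Vert\partial_i\rho\Vert_1$ (which (M.1) as stated does not provide, since it bounds $\partial_i\rho$ only in operator norm, so the trace of the paper's first term is not actually covered by the hypotheses), and it yields explicit constants from $\mathrm{tr}\,\rho_{\theta_0}=1$ and $\Vert\partial_i^2\rho\Vert_1\le a_1$ alone. On the other hand, both proofs are loose at the same spot: your claim that, for Hermitian $X$ and $C\ge0$, the sandwich $-C\le X\le C$ ``is exactly'' $\vert X\vert\le C$ is false (there are $2\times2$ counterexamples; the operator absolute value is not monotone in this way), just as the paper's implicit triangle inequality $\vert A+B\vert\le\vert A\vert+\vert B\vert$ is false. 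What both arguments actually prove is the two-sided bound $-\Diamond_{i,\theta}\le\partial_i\rho_{\theta_0}\le\Diamond_{i,\theta}$; fortunately that is all that is ever needed, since every invocation of the lemma pairs $\partial_i\rho_{\theta_0}$ against positive operators under the trace.

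There remain two genuine gaps, both fixable with pieces you already have. First, your $\Diamond_{i,\theta}$ is manufactured from $\rho_{\theta_0}$, $R_\pm$ and $t=\theta^i-\theta_0^i$, so it dominates the derivative only at the one given point $\theta_0$; but the operator the lemma is after (note the subscript $\Diamond_{i,\theta}$, not $\Diamond_{i,\theta,\theta_0}$) must dominate $\partial_i\rho_{\theta_0}$ simultaneously for \emph{all} $\theta_0$ in the slab $\vert\theta^i-\theta_0^i\vert<a_2$, because it serves as the integrable dominating envelope in the dominated-convergence arguments behind Lemma\thinspace\ref{lem:est-cont-2} (Holevo's Proposition VI.2.2, applied to the difference quotients of $\theta\mapsto\mathrm{E}_\theta^{\mathsf{M}^n}[T_n]$) and in condition (E); a bound tied to a single $\theta_0$ cannot play that role. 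Second, your boundary repair does not work: the midpoint $\theta_1$ has two-sided room only $t/2$, so the resulting trace bound is of order $2/t+a_1t$, which is not uniform --- the quantifiers allow $t$ arbitrarily small, and near $\partial\Theta$ the step length is precisely not ``at your disposal''. Both gaps close at once if you recenter: grant the slab assumption the paper itself tacitly makes, namely $\theta\pm a_2\mathbf{e}_i\in\Theta$, run your positivity argument at the centre $\theta$ with the fixed step $a_2$, and then apply your transport step over the whole slab, defining $\Diamond_{i,\theta}:=\frac{1}{a_2}\left(\rho_\theta+\vert R_+\vert+\vert R_-\vert\right)+\int_{\theta-a_2\mathbf{e}_i}^{\theta+a_2\mathbf{e}_i}\vert\partial_i^2\rho_x\vert\,\mathrm{d}x$. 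Since $\pm\int_{\theta}^{\theta_0}\partial_i^2\rho_x\,\mathrm{d}x\le\int_{\theta-a_2\mathbf{e}_i}^{\theta+a_2\mathbf{e}_i}\vert\partial_i^2\rho_x\vert\,\mathrm{d}x$ for every $\theta_0$ in the slab, this single operator sandwiches $\partial_i\rho_{\theta_0}$ uniformly over the slab, with $\mathrm{tr}\,\Diamond_{i,\theta}\le 1/a_2+3a_1a_2$, which is the lemma in the form the paper actually uses it.
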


\begin{proof}
Since $\partial_{i}\rho_{\theta_{0}}=\partial_{i}\rho_{\theta-a_{2}%
\mathbf{e}_{i}}+\int_{x=\theta-a_{2}\mathbf{e}_{i}}^{\theta_{0}}\partial
_{i}^{2}\rho_{x}\mathrm{d}x$, $\Diamond_{i,\theta}:=\left\vert \partial
_{i}\rho_{\theta-a_{2}\mathbf{e}_{i}}\right\vert +\int_{x=\theta
-a_{2}\mathbf{e}_{i}}^{\theta+a_{2}\mathbf{e}_{i}}\left\vert \partial_{i}%
^{2}\rho_{x}\right\vert \mathrm{d}x$, if exists in the sense of Bochner,
satisfies requirement. This is true since $\left\Vert \partial_{i}^{2}%
\rho_{\theta}\right\Vert _{1}$ is continuous in $\theta$ (hence, measurable
and integrable over the finite interval).
\end{proof}

\begin{lemma}
\label{lem:est-cont}(E'), combined with (M.1), implies
\begin{align}
&  \partial_{j}^{t_{j}}\partial_{k}^{t_{k}}\mathrm{E}_{\theta}^{\mathsf{M}%
^{n}}\left[  T_{n}\right]  =\int T_{n}\left(  \omega_{n}\right)
\mathrm{tr}\,\partial_{j}^{t_{j}}\partial_{k}^{t_{k}}\rho_{\theta}^{\otimes
n}M^{n}\left(  \mathrm{d}\omega_{n}\right)  \quad(t_{j},t_{k}\in\left\{
0,1\right\}  ),\label{dE=Ed}\\
&  \left\vert \mathrm{Tr}\,G_{\theta}\mathrm{V}_{\theta}\left[  \mathcal{E}%
_{n}\right]  -\mathrm{Tr}\,G_{\theta^{\prime}}\mathrm{V}_{\theta^{\prime}%
}\left[  \mathcal{E}_{n}\right]  \right\vert \leq\left(  na_{1}+1\right)
b_{1}\left(  a_{4,n}\right)  ^{2}\left\Vert \theta-\theta^{\prime}\right\Vert
,\label{GV-GV'}\\
&  \left\Vert \mathrm{E}_{\theta}^{\mathsf{M}^{n}}\left[  T_{n}\right]
-\mathrm{E}_{\theta^{\prime}}^{\mathsf{M}^{n}}\left[  T_{n}\right]
\right\Vert \leq m^{2}na_{4,n}a_{1}\left\Vert \theta-\theta^{\prime
}\right\Vert ,\label{E-E'}\\
&  \left\Vert \partial_{j}\mathrm{E}_{\theta}^{\mathsf{M}^{n}}\left[
T_{n}\right]  -\partial_{j}\mathrm{E}_{\theta^{\prime}}^{\mathsf{M}^{n}%
}\left[  T_{n}\right]  \right\Vert \leq m^{2}n^{2}a_{4,n}a_{1}^{2}\left\Vert
\theta-\theta^{\prime}\right\Vert ,\label{dE-dE'-2}\\
&  \lim_{\theta\rightarrow\theta_{0}}\left(  B_{\theta_{0}}\left[
\mathcal{E}_{\theta,n}\right]  \right)  _{j}^{i}=\delta_{j}^{i},\text{ where
}\left\{  \mathcal{E}_{\theta_{0},n}\right\}  _{\theta_{0}\in\Theta}\text{
satisfies (\ref{locally-unbiased})}.\text{ } \label{Bxy}%
\end{align}

\end{lemma}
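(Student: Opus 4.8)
The plan is to establish (\ref{dE=Ed}) first, since the four remaining estimates all rely on writing the $\theta$-derivatives of $\mathrm{E}_\theta^{\mathsf{M}^n}[T_n]$ as integrals against the signed measures $\mathrm{tr}\,\partial^\bullet\rho_\theta^{\otimes n}M^n(\mathrm{d}\omega_n)$. The case $t_j=t_k=0$ is the definition and the one-derivative case is Lemma \ref{lem:est-cont-2}, available because (E')$\Rightarrow$(E). For $t_j=t_k=1$ I would differentiate the one-derivative identity once more under the integral sign. The decisive use of (E') is that $T_n$ takes values in a set of diameter $a_{4,n}$, so its conditional mean lies in the same convex hull and $\|T_n(\omega_n)-c\|\le a_{4,n}$ for any fixed $c$ in the range; since a constant $c$ integrates to zero against $\mathrm{tr}\,\partial_k\rho_\theta^{\otimes n}M^n$, I may center $T_n$ before estimating. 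A mean-value argument on the difference quotient, the pointwise bound $|\mathrm{tr}\,A\,M^n|\le\mathrm{tr}\,|A|\,M^n$ valid for Hermitian $A$ and positive $M^n$, and the total-mass identity $\int\mathrm{tr}\,|\partial_j\partial_k\rho_{\tilde\theta}^{\otimes n}|\,M^n=\|\partial_j\partial_k\rho_{\tilde\theta}^{\otimes n}\|_1$ then furnish a dominating function that is uniform near $\theta$ by the local uniform continuity in (M.1); dominated convergence gives the second-order identity.

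The engine for the last four inequalities is a telescoping trace-norm estimate along the segment $\theta_s:=\theta'+s(\theta-\theta')$. Writing $\rho_\theta^{\otimes n}-\rho_{\theta'}^{\otimes n}=\int_0^1\frac{\mathrm{d}}{\mathrm{d}s}\rho_{\theta_s}^{\otimes n}\,\mathrm{d}s$ and using that $\partial_i\rho^{\otimes n}$ is a sum of $n$ tensor slots each of trace norm at most $a_1$ (here $\|\partial_i\rho_\theta\|_1=\mathrm{tr}\,|\partial_i\rho_\theta|\le\mathrm{tr}\,\Diamond_{i,\theta}\le a_1$ by Lemma \ref{lem:diamond}), I obtain $\|\rho_\theta^{\otimes n}-\rho_{\theta'}^{\otimes n}\|_1\le na_1\|\theta-\theta'\|$ up to a fixed $m$-dependent factor from the coordinate sum; differentiating once more, the $\theta$-derivative of $\partial_j\rho^{\otimes n}$ is governed by the $n$ same-slot second-derivative terms ($\|\partial_j\partial_k\rho\|_1\le a_1$) together with the $n(n-1)$ two-slot products ($\le a_1^2$), giving $\|\partial_j\rho_\theta^{\otimes n}-\partial_j\rho_{\theta'}^{\otimes n}\|_1\le n^2a_1^2\|\theta-\theta'\|$ up to such a factor. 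For (\ref{E-E'}) and (\ref{dE-dE'-2}) I would express the left-hand side through (\ref{dE=Ed}), center $T_n$ by a constant in its range (the constant annihilating the zero-mass signed measure), bound $\|T_n-c\|\le a_{4,n}$, and contract against the two estimates above; the coordinate-to-vector-norm passages collect into the stated $m^2$.

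For (\ref{GV-GV'}) I would use the variational identity $\mathrm{Tr}\,G_\theta\mathrm{V}_\theta[\mathcal{E}_n]=\min_c\int(T_n-c)^TG_\theta(T_n-c)\,\mathrm{tr}\,\rho_\theta^{\otimes n}M^n(\mathrm{d}\omega_n)$, attained at $c=\mathrm{E}_\theta^{\mathsf{M}^n}[T_n]$. Feeding the non-optimal center $c=\mathrm{E}_{\theta'}^{\mathsf{M}^n}[T_n]$ into the $\theta$-functional produces a one-sided bound in which the mean no longer varies, so that only $G$ and the measure move; symmetrizing in $\theta,\theta'$ removes the one-sidedness. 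Splitting the resulting difference into a measure-variation part and a $G$-variation part, the former is at most $\sup_{\omega_n}|(T_n-c)^TG_{\theta'}(T_n-c)|\cdot\|\rho_\theta^{\otimes n}-\rho_{\theta'}^{\otimes n}\|_1\le b_1a_{4,n}^2\cdot na_1\|\theta-\theta'\|$, using $\|G_{\theta'}\|\le\mathrm{Tr}\,G_{\theta'}\le b_1$ and $\|T_n-c\|\le a_{4,n}$, while the latter is controlled by the regularity of $G$ against a probability measure and contributes the remaining $b_1a_{4,n}^2\|\theta-\theta'\|$, giving the bracket $(na_1+1)$. Finally, (\ref{Bxy}) is a corollary of (\ref{dE-dE'-2}): applied to the fixed estimator $\mathcal{E}_{\theta,n}$ it yields $|(B_{\theta_0}[\mathcal{E}_{\theta,n}])_j^i-(B_\theta[\mathcal{E}_{\theta,n}])_j^i|\le m^2n^2a_{4,n}a_1^2\|\theta-\theta_0\|\to0$ as $\theta\to\theta_0$, whereas $(B_\theta[\mathcal{E}_{\theta,n}])_j^i=\delta_j^i$ by (\ref{locally-unbiased}); this step needs $a_{4,n}$ to be uniform over the family $\{\mathcal{E}_{\theta,n}\}_{\theta\in\Theta}$.

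The step I expect to be the main obstacle is the $G$-variation term in (\ref{GV-GV'}): matching the clean coefficient $b_1$ there amounts to bounding $\sup_{\|v\|\le a_{4,n}}|v^T(G_\theta-G_{\theta'})v|$, i.e.\ an operator-norm Lipschitz estimate for $\theta\mapsto G_\theta$, whereas the stated hypotheses only record $\mathrm{Tr}\,G_\theta\le b_1$ and the trace Lipschitz bound $|\mathrm{Tr}\,G_\theta-\mathrm{Tr}\,G_{\theta'}|\le b_1\|\theta-\theta'\|$; pinning down exactly which regularity of $G$ delivers this coefficient, together with securing the neighborhood-uniform domination that legitimizes the second differentiation under the integral in (\ref{dE=Ed}), is where the real care is needed.
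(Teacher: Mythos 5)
Your proposal is correct in substance, and for (\ref{E-E'}), (\ref{dE-dE'-2}) and (\ref{Bxy}) its skeleton coincides with the paper's: mean value theorem along the segment, trace-norm bounds on $\partial_j\rho_\theta^{\otimes n}$ and $\partial_j\partial_k\rho_\theta^{\otimes n}$ supplied by (M.1) and Lemma\thinspace\ref{lem:diamond}, and for (\ref{Bxy}) exactly your reduction: $(B_\theta[\mathcal{E}_{\theta,n}])_j^i=\delta_j^i$ by (\ref{locally-unbiased}) and the discrepancy $\left\vert (B_{\theta_0}[\mathcal{E}_{\theta,n}])_j^i-(B_\theta[\mathcal{E}_{\theta,n}])_j^i\right\vert$ is killed by (\ref{dE-dE'-2}); the uniformity of $a_{4,n}$ over $\theta$ that you require is indeed built into the way (E)/(E') are stated ($\exists a_{4,n}$, $\forall\theta\in\Theta$). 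Where you genuinely diverge is (\ref{dE=Ed}): you differentiate under the integral sign via a dominating function and dominated convergence, and you flag the neighborhood-uniform domination as the delicate step. The paper avoids this machinery entirely with a functional-analytic observation: by (E') the map $\tau\mapsto\int T_n^i(\omega_n)\,\mathrm{tr}\,\tau M^n(\mathrm{d}\omega_n)$ is a bounded, hence continuous, linear functional on the trace-class operators, so it commutes with any trace-norm limit; since (M.1) makes the difference quotients of $\rho_\theta^{\otimes n}$ and of $\partial_j\rho_\theta^{\otimes n}$ converge in $\left\Vert\cdot\right\Vert_1$, both derivative orders in (\ref{dE=Ed}) follow at once. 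This is the cleaner route and it dissolves precisely the obstacle you single out. Concerning (\ref{GV-GV'}), the paper is essentially silent (it subsumes it into ``the first two identities follow''), so your variational/centering argument is a legitimate reconstruction; and your closing concern is well taken: the coefficient $b_1$ requires an operator-norm (or trace-norm) Lipschitz estimate on $\theta\mapsto G_\theta$, whereas the displayed hypotheses only record $\mathrm{Tr}\,G_\theta\le b_1$ and Lipschitz continuity of the scalar $\mathrm{Tr}\,G_\theta$; one must lean on the assumed continuous differentiability of $\theta\mapsto G_\theta$ (or read the hypothesis as bounding $\left\Vert G_\theta-G_{\theta'}\right\Vert$) to close this, which is an imprecision of the paper's formulation rather than a gap in your argument. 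A small point in your favor: centering $T_n$ before invoking the diameter bound is more careful than the paper's own inequality $\left\vert\int\left\vert T_n^i\right\vert\mathrm{tr}\,\tau M^n\right\vert\le\left\Vert\tau\right\Vert_1 a_{4,n}$, which tacitly conflates the diameter bound in (E') with a sup bound on $\left\vert T_n^i\right\vert$.
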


\begin{proof}
(E') implies $\left\vert \int T_{n}^{i}\left(  \omega_{n}\right)
\mathrm{tr}\,\tau M^{n}\left(  \mathrm{d}\omega_{n}\right)  \right\vert
\leq\left\vert \int\left\vert T_{n}^{i}\left(  \omega_{n}\right)  \right\vert
\mathrm{tr}\,\tau M^{n}\left(  \mathrm{d}\omega_{n}\right)  \right\vert
\leq\left\Vert \tau\right\Vert _{1}a_{4,n}$. Therefore, the map $\tau
\rightarrow\int T_{n}^{i}\left(  \omega_{n}\right)  \mathrm{tr}\,\tau
M^{n}\left(  \mathrm{d}\omega_{n}\right)  $ is a continuous linear functional,
and is exchangeable with $\lim$. Therefore, the first two identities follow.
To show (\ref{E-E'}), apply the mean value theorem to the function
$\theta\rightarrow\mathrm{E}_{\theta}^{\mathsf{M}^{n}}\left[  T_{n}\right]  $.
Due to (\ref{dE=Ed}), we obtain $\left\vert \mathrm{E}_{\theta}^{\mathsf{M}%
^{n}}\left[  T_{n}^{i}\right]  -\mathrm{E}_{\theta^{\prime}}^{\mathsf{M}^{n}%
}\left[  T_{n}^{i}\right]  \right\vert \leq\sum_{j=1}^{m}\left\vert \int
T_{n}^{i}\left(  \omega_{n}\right)  \mathrm{tr}\,\partial_{j}\rho
_{\theta_{\ast}}^{\otimes n}M^{n}\left(  \mathrm{d}\omega_{n}\right)
\right\vert \left\vert \theta^{j}-\theta^{\prime j}\right\vert $. Therefore,
due to (M.1) and Lemma\thinspace\ref{lem:diamond}, we have (\ref{E-E'}).
(\ref{dE-dE'-2}) is shown similarly. To show (\ref{Bxy}), observe
\begin{align*}
\left\vert \left(  B_{\theta_{0}}\left[  \mathcal{E}_{\theta,n}\right]
\right)  _{j}^{i}-\delta_{j}^{i}\right\vert  &  =\left\vert \left(
B_{\theta_{0}}\left[  \mathcal{E}_{\theta,n}\right]  \right)  _{j}^{i}-\left(
B_{\theta_{0}}\left[  \mathcal{E}_{\theta_{0},n}\right]  \right)  _{j}%
^{i}\right\vert \\
&  \leq\left\vert \left(  B_{\theta_{0}}\left[  \mathcal{E}_{\theta,n}\right]
\right)  _{j}^{i}-\left(  B_{\theta}\left[  \mathcal{E}_{\theta,n}\right]
\right)  _{j}^{i}\right\vert +\left\vert \left(  B_{\theta}\left[
\mathcal{E}_{\theta,n}\right]  \right)  _{j}^{i}-\left(  B_{\theta_{0}}\left[
\mathcal{E}_{\theta_{0},n}\right]  \right)  _{j}^{i}\right\vert =\left\vert
\left(  B_{\theta_{0}}\left[  \mathcal{E}_{\theta,n}\right]  \right)  _{j}%
^{i}-\left(  B_{\theta}\left[  \mathcal{E}_{\theta,n}\right]  \right)
_{j}^{i}\right\vert .
\end{align*}%
%TCIMACRO{\TeXButton{vskip-lastskip}{\vskip-\lastskip}}%
%BeginExpansion
\vskip-\lastskip
%EndExpansion
Due to (\ref{dE-dE'-2}), we have (\ref{Bxy}).
\end{proof}

\vskip-10mm

\subsection*{\noindent{\protect\normalsize 3.2 LOWERBOUND AND (\ref{cr-q1}%
)=(\ref{cr-q2})}}

\vskip-3mm First we prove that the RHS of {\normalsize (\ref{cr-q2}) is a
lowerbound to }$C_{\theta}^{Q}\left(  G_{\theta},\mathcal{M}\right)  $. Define
locally unbiased estimator $\mathcal{E}_{\theta,n}=\{$\thinspace
$\mathsf{M}^{n},T_{\theta,n}\,\}$ by $T_{n}=B_{\theta}\left[  \mathcal{E}%
_{n}\right]  \left(  T_{\theta,n}-\theta\right)  +\mathrm{E}_{\theta
}^{\,\mathsf{M}^{n}}\left[  T_{n}\right]  $. Obviously,

$\quad\quad\quad\quad\quad\quad\quad n\mathrm{Tr}\,G_{\theta}\mathrm{MSE}%
_{\theta}\left[  \mathcal{E}_{n}\right]  \geq n\mathrm{Tr}\,G_{\theta
}\mathrm{V}_{\theta}\left[  \mathcal{E}_{n}\right]  =n\mathrm{Tr}\,G_{\theta
}B_{\theta}\left[  \mathcal{E}_{n}\right]  \mathrm{V}_{\theta}\left[
\mathcal{E}_{\theta,n}\right]  B_{\theta}\left[  \mathcal{E}_{n}\right]  ^{T}%
$, \newline and letting $n\rightarrow\infty$, we have our assertion due to
(\ref{asym-unbiased}).

Below, we prove {\normalsize (\ref{cr-q1})=(\ref{cr-q2}). Since (E') implies
(E), it suffices to show (\ref{cr-q1})}${\normalsize \leq}$%
{\normalsize (\ref{cr-q2}). } Suppose $\mathcal{E}_{\theta,n}$ satisfies (E)
and (\ref{locally-unbiased}). Let $S_{\theta,n}^{L}$ $:=T_{\theta,n}$ in
$\left\Vert T_{\theta,n}-\theta\right\Vert \leq L$-case and $S_{\theta,n}%
^{L}:=\theta$ otherwise. Let $\mathcal{F}_{\theta,n}^{L}:=\left\{
\mathsf{M}_{\theta}^{n},T_{\theta,n}^{L}\right\}  $, and let $\ \mathcal{E}%
_{\theta,n}^{L}=\left\{  \mathsf{M}_{\theta}^{n},T_{\theta,n}^{L}\right\}  $
be a locally unbiased estimator with $T_{\theta,n}^{L}=B_{\theta}\left[
\mathcal{F}_{\theta,n}^{L}\right]  ^{-1}\left(  S_{\theta,n}^{L}%
-\mathrm{E}_{\theta_{0}}^{\,\mathsf{M}^{n}}\left[  S_{\theta,n}^{L}\right]
\right)  +\theta$. Obviously, $\mathcal{E}_{\theta,n}^{L}$ satisfies (E').
Also, due to Lemma\thinspace\ref{lem:est-cont-2}, Lemma\thinspace
\ref{lem:diamond}, and (E), we have
\begin{align*}
\left\vert \left(  B_{\theta}\left[  \mathcal{\tilde{F}}_{\theta,n}%
^{L}\right]  \right)  _{j}^{i}-\delta_{j}^{i}\right\vert  &  =\left\vert
\partial_{i}\int_{\left\Vert T_{\theta,n}-\theta\right\Vert >L}\left(
T_{\theta,n}^{j}-\theta^{j}\right)  P_{\theta}^{\mathsf{M}_{\theta}^{n}%
}\left(  \mathrm{d}\omega_{n}\right)  \right\vert =\left\vert \int_{\left\Vert
T_{\theta,n}-\theta\right\Vert >L}\left(  T_{\theta,n}^{j}-\theta^{j}\right)
\mathrm{\mathrm{tr}\,}\partial_{i}\rho_{\theta}^{\otimes n}M_{\theta}%
^{n}\left(  \mathrm{d}\omega_{n}\right)  \right\vert \\
&  \leq\int_{\left\Vert T_{\theta,n}-\theta\right\Vert >L}\left\Vert
T_{\theta,n}-\theta\right\Vert \mathrm{\mathrm{tr}\,}\Diamond_{i,\theta
,n}^{\left(  1\right)  }M_{\theta}^{n}\left(  \mathrm{d}\omega_{n}\right)
\rightarrow0\quad\left(  L\rightarrow\infty\right)  .
\end{align*}%
%TCIMACRO{\TeXButton{vskip-lastskip}{\vskip-\lastskip}}%
%BeginExpansion
\vskip-\lastskip
%EndExpansion
Therefore, $\forall\varepsilon>0\exists L$,

$\mathrm{Tr}\,G_{\theta}\mathrm{V}_{\theta}\left[  \mathcal{E}_{\theta
,n}\right]  \geq\mathrm{Tr}\,G_{\theta}\mathrm{V}_{\theta}\left[
\mathcal{F}_{\theta,n}^{L}\right]  =\mathrm{Tr}\,G_{\theta}B_{\theta}\left[
\mathcal{F}_{\theta,n}^{L}\right]  \mathrm{V}_{\theta}\left[  \mathcal{E}%
_{\theta,n}^{L}\right]  B_{\theta}\left[  \mathcal{F}_{\theta,n}^{L}\right]
^{T}\geq\mathrm{Tr}\,G_{\theta}\mathrm{V}_{\theta}\left[  \mathcal{E}%
_{\theta,n}^{L}\right]  -\varepsilon.$

\noindent\ Taking infimum of the both ends, we have {\normalsize (\ref{cr-q1}%
)}${\normalsize \leq}${\normalsize (\ref{cr-q2}). }\vskip-10mm

\subsection*{{\protect\normalsize \noindent}\noindent{\protect\normalsize 3.3
ACHIEVABILITY}}

\vskip-3mm Based on $\{\mathcal{E}_{\theta,n_{1}}\}_{\theta\in\Theta}$
$=\{\mathsf{M}_{\theta}^{n_{1}},T_{\theta,n_{1}}\,\}_{\theta\in\Theta}$ \ such
that (\ref{locally-unbiased}) and (E') with $n=n_{1}$ are satisfied, we
construct a good estimator $\mathcal{E}_{n}^{n_{1}}$ with 2 steps in the
following. Given $\rho_{\theta}^{\otimes n}$, invest $\rho_{\theta}^{\otimes
n_{0}}$ to obtain the data $\vec{\omega}_{1}:=(\omega_{1,1},\cdots
,\omega_{1,n_{0}})$, where $\omega_{1,i}\in%
%TCIMACRO{\U{211d} }%
%BeginExpansion
\mathbb{R}
%EndExpansion
^{l}$. Based on the data, we compute the estimator $\theta_{0}=\widetilde
{T}_{n_{0}}\left(  \vec{\omega}_{1}\right)  $. \ Now, we divide $\rho_{\theta
}^{\otimes n-n_{0}}$ into the ensembles each with $n_{1}$ copies. The number
of ensemble, $\frac{n-n_{0}}{n_{1}}$, is denoted by $n_{2}$. Here, $n_{0}$ and
$n_{2}$ are chosen so that $n_{0}=n_{2}^{3/4}$ is satisfied. We apply
$\mathsf{M}_{\theta_{0}}^{n_{1}}$ \ to $\ $each ensemble $\rho_{\theta
}^{\otimes n_{1}}$, obtain the data $\omega_{2,1}$, $\cdots$, $\omega
_{2,n_{2}}$($\in%
%TCIMACRO{\U{211d} }%
%BeginExpansion
\mathbb{R}
%EndExpansion
^{l_{n_{1}}}$) and compute
\begin{equation}
T_{n}^{n_{1}}:=\frac{1}{n_{2}}\sum_{\kappa=1}^{n_{2}}T_{\theta_{0},n_{1}%
}\,\left(  \omega_{2,\kappa}\right)  . \label{def-est}%
\end{equation}%
%TCIMACRO{\TeXButton{vskip-lastskip}{\vskip-\lastskip}}%
%BeginExpansion
\vskip-\lastskip
%EndExpansion
The measurement defined above is denoted by $\mathsf{M}^{n_{1},n}$.

\begin{lemma}
\label{lem:est-compose}Suppose that (M.1,3) hold. Suppose also that the family
$\{\mathcal{E}_{\theta,n_{1}}\}_{\theta\in\Theta}$ satisfies
(\ref{locally-unbiased}) and (E') with $n=n_{1}$, $\forall\theta\in\Theta$.
Then $\mathcal{E}_{n}^{n_{1}}$ constructed above satisfies $\underset
{n_{2}\rightarrow\infty}{\lim}n\mathrm{Tr}G\,_{\theta}\mathrm{MSE}_{\theta
}\left[  \mathcal{E}_{n}^{n_{1}}\right]  \leq n_{1}\underset{\theta
_{0}\rightarrow\theta}{\varlimsup}\mathrm{Tr}G\,_{\theta}\mathrm{V}_{\theta
}\left[  \mathcal{E}_{\theta_{0},n_{1}}\right]  $.
\end{lemma}

\begin{proof}
Applying mean value theorem to the function $\theta\rightarrow\mathrm{E}%
_{\theta}^{\mathsf{M}_{\theta_{0}}^{n_{1}}}\left[  T_{\theta_{0},n_{1}}%
^{i}\right]  $, we have
\begin{equation}
\mathrm{E}_{\theta}^{\mathsf{M}_{\theta_{0}}^{n_{1}}}\left[  T_{\theta
_{0},n_{1}}^{i}\right]  =\mathrm{E}_{\theta_{0}}^{\mathsf{M}_{\theta_{0}%
}^{n_{1}}}\left[  T_{\theta_{0},n_{1}}^{i}\right]  +\sum_{j=1}^{m}\left(
\theta^{j}-\theta_{0}^{j}\right)  \left.  \partial_{j}\mathrm{E}_{\theta
}^{\mathsf{M}_{\theta_{0}}}\left[  T_{\theta_{0},n_{1}}^{i}\right]
\right\vert _{\theta=\theta_{0}}+\gamma_{\theta,\theta_{0}}^{n_{1},i}%
=\theta_{0}^{i}+\left(  \theta^{i}-\theta_{0}^{i}\right)  +\gamma
_{\theta,\theta_{0}}^{n_{1},i}=\theta^{i}+\gamma_{\theta,\theta_{0}}^{n_{1},i}
\label{tildeT-bias}%
\end{equation}
\vskip-\lastskip where $\gamma_{\theta,\theta_{0}}^{n_{1},i}$ is the reminder
term. With the help of (\ref{dE=Ed}) and (M.1),%
%TCIMACRO{\TeXButton{vskip-lastskip}{\vskip-\lastskip} }%
%BeginExpansion
\vskip-\lastskip
%EndExpansion
\begin{equation}
\left\vert \gamma_{\theta,\theta_{0}}^{n_{1},i}\right\vert =\frac{1}%
{2}\left\vert \sum_{j,k=1}^{m}\left(  \theta^{j}-\theta_{0}^{j}\right)
\left(  \theta^{k}-\theta_{0}^{k}\right)  \int T_{\theta_{0},n_{1}}^{i}\left(
\omega\right)  \mathrm{tr}\,\partial_{j}\partial_{k}\rho_{\theta^{\prime}%
}^{\otimes n_{1}}M_{\theta_{0}}^{n_{1}}\left(  \mathrm{d}\omega\right)
\right\vert \leq n_{1}^{2}m^{2}a_{1}^{2}a_{4,n_{1}}\left\Vert \theta
_{0}-\theta\right\Vert ^{2}, \label{|gamma|<}%
\end{equation}%
%TCIMACRO{\TeXButton{vskip-lastskip}{\vskip-\lastskip}}%
%BeginExpansion
\vskip-\lastskip
%EndExpansion
where $\theta^{\prime}$ lies between $\theta_{0}$ and $\theta$. Since MSE is
the sum of the variance and square of the bias, we have
\begin{align*}
&  \mathrm{Tr}G_{\theta}\mathrm{MSE}_{\theta}\left[  \left.  \mathcal{E}%
_{n}^{n_{1}}\right\vert \widetilde{T}_{n_{0}}=\theta_{0}\right]
=\mathrm{Tr}G_{\theta}\mathrm{V}_{\theta}\left[  \left.  \mathcal{E}%
_{n}^{n_{1}}\right\vert \widetilde{T}_{n_{0}}=\theta_{0}\right]  +\sum
_{i,j=1}^{m}\left(  G_{\theta}\right)  _{i,j}\gamma_{\theta,\theta_{0}}%
^{n_{1},i}\,\gamma_{\theta,\theta_{0}}^{n_{1},j}\\
&  \leq\frac{1}{n_{2}}\mathrm{Tr}G_{\theta}\mathrm{V}_{\theta}\left[
\mathcal{E}_{\theta_{0},n_{1}}\right]  +n_{1}^{4}m^{4}\left(  a_{1}%
^{2}a_{4,n_{1}}\right)  ^{2}\mathrm{Tr}G_{\theta}\left\Vert \theta_{0}%
-\theta\right\Vert ^{4}.
\end{align*}%
%TCIMACRO{\TeXButton{vskip-lastskip}{\vskip-\lastskip}}%
%BeginExpansion
\vskip-\lastskip
%EndExpansion
Taking average over $\widetilde{T}_{n_{0}}$ of the left most and the right
most end,
\begin{align*}
&  \lim_{n_{2}\rightarrow\infty}n\mathrm{Tr}G_{\theta}\mathrm{MSE}_{\theta
}\left[  \mathcal{E}_{n}^{n_{1}}\right]  \leq\lim_{n_{2}\rightarrow\infty
}\left[  \frac{n}{n_{2}}\mathrm{E}_{\theta}^{\mathsf{\tilde{M}}^{n_{0}}%
}\mathrm{Tr}G_{\theta}\mathrm{V}_{\theta}\left[  \mathcal{E}_{\tilde{T}%
_{n_{0}},n_{1}}\right]  +nn_{1}^{4}m^{4}\left(  a_{1}^{2}a_{4,n_{1}}\right)
^{2}\mathrm{Tr}G_{\theta}\mathrm{E}_{\theta}^{\mathsf{\tilde{M}}^{n_{0}}%
}\left\Vert \tilde{T}_{n_{0}}-\theta\right\Vert ^{4}\right] \\
&  \underset{\text{(i)}}{\leq}\lim_{n_{2}\rightarrow\infty}\sup_{\theta
_{0}:\left\Vert \theta_{0}-\theta\right\Vert <\varepsilon}n_{1}\mathrm{Tr}%
G_{\theta}\mathrm{V}_{\theta}\left[  \mathcal{E}_{\theta_{0},n_{1}}\right]
+\frac{D_{\theta,2}}{\varepsilon^{4}n_{0}^{2}}\sup_{\theta_{0}\in%
%TCIMACRO{\U{211d} }%
%BeginExpansion
\mathbb{R}
%EndExpansion
^{m}}\mathrm{Tr}G_{\theta}\mathrm{V}_{\theta}\left[  \mathcal{E}_{\theta
_{0},n_{1}}\right]  +nn_{1}^{4}m^{4}\left(  a_{1}^{2}a_{4,n_{1}}\right)
^{2}\mathrm{Tr}G_{\theta}\mathrm{E}_{\theta}^{\mathsf{\tilde{M}}^{n_{0}}%
}\left\Vert \tilde{T}_{n_{0}}-\theta\right\Vert ^{4}\\
&  \underset{\text{(ii)}}{\leq}\lim_{n_{2}\rightarrow\infty}\sup_{\theta
_{0}:\left\Vert \theta_{0}-\theta\right\Vert <\varepsilon}n_{1}\mathrm{Tr}%
G_{\theta}\mathrm{V}_{\theta}\left[  \mathcal{E}_{\theta_{0},n_{1}}\right]
+\frac{D_{\theta,2}}{\varepsilon^{4}n_{2}^{3/2}}\left(  a_{4,n_{1}}\right)
^{2}\mathrm{Tr}G_{\theta}+\lim_{n_{2}\rightarrow\infty}\left(  n_{2}%
n_{1}+n_{0}\right)  n_{1}^{2}m^{4}\left(  a_{1}a_{4,n_{1}}\right)  ^{2}%
\frac{D_{\theta,2}}{n_{2}^{3/2}}\mathrm{Tr}G_{\theta}\\
&  =\sup_{\theta_{0}:\left\Vert \theta_{0}-\theta\right\Vert <\varepsilon
}n_{1}\mathrm{Tr}G_{\theta}\mathrm{V}_{\theta}\left[  \mathcal{E}_{\theta
_{0},n_{1}}\right]  .
\end{align*}
Here (i) is due to $P_{\theta}^{\mathsf{\tilde{M}}^{n_{0}}}\left\{  \left\Vert
\widetilde{T}_{n_{0}}-\theta\right\Vert \geq\varepsilon\right\}  \leq
\frac{D_{\theta,2}}{\varepsilon^{4}n_{0}^{2}}$ which follows from (M.3.2) and
Chebyshev's inequality, and (ii) is due to (M.3.2). Since $\varepsilon>0$ is
arbitrary, the lemma holds.
\end{proof}

\begin{lemma}
\label{lem:est-asym-unbiased}Suppose that (M.1-3) hold. Then $\left\{
\mathcal{E}_{n}^{n_{1}}\right\}  _{n=1}^{\infty}$ satisfies (E) and
(\ref{asym-unbiased}).
\end{lemma}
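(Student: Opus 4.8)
The plan is to verify the two requirements of the lemma separately, in each case decomposing the relevant quantity along the two stages of $\mathsf{M}^{n_1,n}$: the preliminary block of $n_0$ samples, measured by $\mathsf{\tilde{M}}^{n_0}$ to produce $\theta_0=\tilde{T}_{n_0}(\vec{\omega}_1)$, and the $n_2$ ensembles of $n_1$ samples, each measured by $\mathsf{M}_{\theta_0}^{n_1}$. The structural fact I would use throughout is that, conditioned on the preliminary outcome $\theta_0$, the remaining $n-n_0$ copies are still in the product state $\rho_\theta^{\otimes(n-n_0)}$, so every expectation factors into an iterated integral over the two stages (Fubini, valid here for Bochner and Pettis integrals).

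For (E) I would first note that (E$'$) for the inner family forces each value of $T_{\theta_0,n_1}$ to lie within $a_{4,n_1}$ of its mean $\theta_0$, whence the average obeys the deterministic bound $\Vert T_n^{n_1}-\theta_0\Vert\le a_{4,n_1}$, and therefore $\Vert T_n^{n_1}-\theta\Vert\le a_{4,n_1}+\Vert\tilde{T}_{n_0}-\theta\Vert$. Squaring and taking $\mathrm{E}_\theta$, the second inequality of (E) follows once $\mathrm{E}_\theta\Vert\tilde{T}_{n_0}-\theta\Vert^2$ is controlled by (M.3.2). For the first inequality I would split $\Diamond_{\theta,n}^{(1)}$ into the slots sitting in the preliminary block and those sitting in the ensembles; integrating out the complementary, unperturbed tensor factor reduces the preliminary part to the first inequality of (E) for $\tilde{\mathcal{E}}_{n_0}$ (supplied by (M.3.1)) and the ensemble part to a first moment of $\Vert\tilde{T}_{n_0}-\theta\Vert$ against a positive measure of mass $O(n)$, again bounded through (M.3.2). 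A constant $a_{4,n}$ absorbing both estimates then exists.

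For (\ref{asym-unbiased}), the mean is immediate from (\ref{tildeT-bias}): conditioning on $\theta_0$ gives $\mathrm{E}_\theta^{\mathsf{M}^{n_1,n}}[T_n^{n_1,i}]=\theta^i+\mathrm{E}_\theta^{\mathsf{\tilde{M}}^{n_0}}[\gamma_{\theta,\tilde{T}_{n_0}}^{n_1,i}]$, and since $\vert\gamma_{\theta,\theta_0}^{n_1,i}\vert\le n_1^2m^2a_1^2a_{4,n_1}\Vert\theta_0-\theta\Vert^2$ the remainder is $O(\mathrm{E}_\theta\Vert\tilde{T}_{n_0}-\theta\Vert^2)=O(1/n_0)\to0$ by (M.3.2). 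For the derivative, having secured (E), I would apply Lemma~\ref{lem:est-cont-2} to write $(B_\theta[\mathcal{E}_n^{n_1}])_j^i=\int T_n^{n_1,i}\,\mathrm{tr}\,\partial_j\rho_\theta^{\otimes n}M^{n_1,n}(\mathrm{d}\omega_n)$ and split $\partial_j\rho_\theta^{\otimes n}$ into its preliminary-block and ensemble parts. In the ensemble part only the terms in which $\partial_j$ hits the ensemble whose value is read off survive (the remaining ones carry a factor $\mathrm{tr}\,\partial_j\rho_\theta^{\otimes n_1}=0$), and they sum to $\partial_j\mathrm{E}_\theta^{\mathsf{M}_{\theta_0}^{n_1}}[T_{\theta_0,n_1}^i]$; by local unbiasedness at $\theta_0$ combined with (\ref{dE-dE'-2}) this differs from $\delta_j^i$ by $O(\Vert\tilde{T}_{n_0}-\theta\Vert)$, so averaging over $\theta_0$ yields $\to\delta_j^i$.

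The crux, and the step I expect to be the main obstacle, is the preliminary-block part of $(B_\theta)_j^i$. After integrating out the unperturbed ensembles it reduces to $\int\gamma_{\theta,\tilde{T}_{n_0}}^{n_1,i}\,\mathrm{tr}\,\partial_j\rho_\theta^{\otimes n_0}\mathsf{\tilde{M}}^{n_0}(\mathrm{d}\vec{\omega}_1)$, the $\theta^i$ piece dropping out since $\mathrm{tr}\,\partial_j\rho_\theta^{\otimes n_0}=0$. The naive estimate, bounding $\vert\gamma\vert$ by $O(\Vert\tilde{T}_{n_0}-\theta\Vert^2)$ against the total variation of $\mathrm{tr}\,\partial_j\rho_\theta^{\otimes n_0}\mathsf{\tilde{M}}^{n_0}$, only gives $O(1)$, because that measure has mass $O(n_0)$; it discards the cancellation encoded in $\int\mathrm{tr}\,\partial_j\rho_\theta^{\otimes n_0}\mathsf{\tilde{M}}^{n_0}=0$. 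To recover it I would invoke (M.2): writing $\partial_j\rho_\theta^{\otimes n_0}=\tfrac12(L\rho_\theta^{\otimes n_0}+\rho_\theta^{\otimes n_0}L)$ with the additive SLD $L=\sum_{\kappa=1}^{n_0}\mathbf{1}^{\otimes(\kappa-1)}\otimes L_{\theta,j}\otimes\mathbf{1}^{\otimes(n_0-\kappa)}$ and applying Cauchy--Schwarz (equivalently, monotonicity of the classical Fisher information of the measured model under the SLD bound) gives that this term is at most $(\mathrm{E}_\theta\vert\gamma_{\theta,\tilde{T}_{n_0}}^{n_1,i}\vert^2)^{1/2}\,(\mathrm{tr}\,L^2\rho_\theta^{\otimes n_0})^{1/2}$. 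Here $\mathrm{tr}\,L^2\rho_\theta^{\otimes n_0}=n_0\,\mathrm{tr}\,L_{\theta,j}^2\rho_\theta=O(n_0)$ (the cross terms vanishing since $\mathrm{tr}\,L_{\theta,j}\rho_\theta=\mathrm{tr}\,\partial_j\rho_\theta=0$), while $\mathrm{E}_\theta\vert\gamma\vert^2\le(n_1^2m^2a_1^2a_{4,n_1})^2\,\mathrm{E}_\theta\Vert\tilde{T}_{n_0}-\theta\Vert^4=O(1/n_0^2)$ by (M.3.2), so the product is $O(1/\sqrt{n_0})\to0$. This gain of a factor $\sqrt{n_0}$ from the centred (score) structure, in place of the crude total-variation bound, is precisely what forces $(B_\theta)_j^i\to\delta_j^i$, and is the step that must be carried out with care.
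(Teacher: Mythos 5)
Your proposal is correct, and every substantive estimate in it coincides with the paper's own proof: the reduction of (E) to (M.3.1) via $\left\Vert T_{n}^{n_{1}}-\theta\right\Vert \leq\left\Vert \tilde{T}_{n_{0}}-\theta\right\Vert +a_{4,n_{1}}$; the mean via (\ref{tildeT-bias}), (\ref{|gamma|<}), concavity of $\sqrt{x}$ and (M.3.2); the split of the bias derivative into an ensemble term and a preliminary-block term; the ensemble term handled by local unbiasedness plus the Lipschitz bound (\ref{dE-dE'-2}) (the paper cites (\ref{Bxy}), which is itself proved from (\ref{dE-dE'-2}), so this is the same argument --- your quantitative version, averaged over the random base point $\tilde{T}_{n_{0}}$, is if anything the more careful one); and, crucially, the preliminary-block term controlled through (M.2) by the additive SLD $L_{\theta,j}^{n_{0}}$, the identity $\mathrm{tr}\,\rho_{\theta}^{\otimes n_{0}}\left(  L_{\theta,j}^{n_{0}}\right)  ^{2}=n_{0}\,\mathrm{tr}\,\rho_{\theta}\left(  L_{\theta,j}\right)  ^{2}$, and Cauchy--Schwarz against $\left(  \mathrm{E}\,\gamma^{2}\right)  ^{1/2}=O\left(  1/n_{0}\right)  $, giving $O\left(  1/\sqrt{n_{0}}\right)  \rightarrow0$. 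This is exactly the paper's final display, and you correctly identify it as the crux.

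The one genuine difference is how the two-term decomposition is justified. The paper states it as (\ref{leibniz-pre}) and defers its proof to Subsection 4.3, where it is obtained as a special case of the Leibniz rule (\ref{leibniz}) for general adaptive measurements, proved by a limit-interchange argument resting on Lemmas\thinspace\ref{lem:ec-weak} and \ref{lem:leibniz}. You instead derive it directly: having first secured (E) for the composite estimator, you apply Lemma\thinspace\ref{lem:est-cont-2} to the composite POVM $M^{n_{1},n}$, split $\partial_{j}\rho_{\theta}^{\otimes n}$ slot by slot, kill the off-diagonal contributions using $\mathrm{tr}\,\partial_{j}\rho_{\theta}^{\otimes n_{1}}=0$, and integrate out the unperturbed factors by Fubini. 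This works precisely because the two stages of (\ref{def-est}) act on disjoint tensor factors, so the POVM evaluated on product operators factorizes; what it buys is a self-contained proof with no forward reference to Section 4, while the paper's route buys the generality it needs anyway for Theorem\thinspace\ref{th:cr-cl}, where the adaptivity is not confined to disjoint blocks. The only point your sketch leaves implicit is the measurability of $\vec{\omega}_{1}\rightarrow M_{\tilde{T}_{n_{0}}\left(  \vec{\omega}_{1}\right)  }^{n_{1}}$ underlying the Fubini step; this is supplied by the construction of composite measurements in Subsection 4.2, on which the paper's own argument equally depends.
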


\begin{proof}
Observe $\left\Vert T_{n}^{n_{1}}-\theta\right\Vert \leq\left\Vert \tilde
{T}_{n_{0}}-\theta\right\Vert +a_{4,n}$ holds. Since $\tilde{T}_{n_{0}}$
satisfies (E) due to (M.3.1), $\left\{  \mathcal{E}_{n}^{n_{1}}\right\}
_{n=1}^{\infty}$ satisfies (E), also.

Observe\thinspace\
\begin{align*}
\left\vert \mathrm{E}_{\theta_{0}}^{\mathsf{M}^{n_{1},n}}\left[  T_{n}%
^{n_{1},j}-\theta_{0}^{j}\right]  \right\vert  &  \leq\mathrm{E}_{\theta_{0}%
}^{\mathsf{\tilde{M}}^{n_{0}}}\left\vert \mathrm{E}_{\theta_{0}}%
^{\mathsf{M}_{\tilde{T}_{n_{0}}}^{n_{1}}}\left[  T_{\tilde{T}_{n_{0}},n_{1}%
}^{,j}-\theta_{0}^{j}\right]  \right\vert \underset{\text{(i)}}{\leq
}\mathrm{E}_{\theta_{0}}^{\mathsf{\tilde{M}}^{n_{0}}}\left\vert \gamma
_{\theta_{0},\tilde{T}_{n_{0}}}^{n_{1},j}\right\vert \\
&  \underset{\text{(ii)}}{\leq}n_{1}^{2}m^{2}a_{1}a_{4,n_{1}}\mathrm{E}%
_{\theta_{0}}^{\mathsf{\tilde{M}}^{n_{0}}}\left\Vert \tilde{T}_{n_{0}}%
-\theta_{0}\right\Vert ^{2}\underset{\text{(iii)}}{\leq}n_{1}^{2}m^{2}%
a_{1}a_{4,n_{1}}\sqrt{\mathrm{E}_{\theta_{0}}^{\mathsf{\tilde{M}}^{n_{0}}%
}\left\Vert \tilde{T}_{n_{0}}-\theta_{0}\right\Vert ^{4}}\underset
{\text{(iv)}}{\rightarrow}0.
\end{align*}%
%TCIMACRO{\TeXButton{vskip-lastskip}{\vskip-\lastskip}}%
%BeginExpansion
\vskip-\lastskip
%EndExpansion
Here, (i), (ii), (iii), and (iv) is due to \ (\ref{tildeT-bias}),
(\ref{|gamma|<}), concavity of $\sqrt{x}$, and (M.3.2), respectively.
Therefore, $\mathrm{E}_{\theta_{0}}^{\mathsf{M}^{n_{1},n}}\left[  T_{n}%
^{n_{1},j}\right]  \rightarrow\theta_{0}^{j}$. $\left(  B_{\theta_{0}}\left[
\mathcal{E}_{n}\right]  \right)  _{j}^{i}\rightarrow\delta_{j}^{i}$ is proved
as follows. In Subsection 4.3 right after the statement of Lemma\thinspace
\ref{lem:leibniz}, we will prove
\begin{equation}
\partial_{i}\left(  \mathrm{E}_{\theta}^{\mathsf{\tilde{M}}^{n_{0}}}%
\mathrm{E}_{\theta}^{\mathsf{M}_{\tilde{T}_{n_{0}}}^{n_{1}}}\left[
T_{\tilde{T}_{n_{0}},n_{1}}^{,j}\right]  \right)  _{\theta=\theta_{0}%
}=\partial_{i}\left(  \mathrm{E}_{\theta}^{\mathsf{\tilde{M}}^{n_{0}}%
}\mathrm{E}_{\theta_{0}}^{\mathsf{M}_{\tilde{T}_{n_{0}}}^{n_{1}}}\left[
T_{\tilde{T}_{n_{0}},n_{1}}^{j}\right]  \right)  _{\theta=\theta_{0}%
}+\mathrm{E}_{\theta_{0}}^{\mathsf{\tilde{M}}^{n_{0}}}\partial_{i}%
\mathrm{E}_{\theta}^{\mathsf{M}_{\tilde{T}_{n_{0}}}^{n_{1}}}\left[
T_{\tilde{T}_{n_{0}},n_{1}}^{j}\right]  _{\theta=\theta_{0}}.
\label{leibniz-pre}%
\end{equation}%
%TCIMACRO{\TeXButton{vskip-lastskip}{\vskip-\lastskip}}%
%BeginExpansion
\vskip-\lastskip
%EndExpansion
Defining $L_{\theta,i}^{n}:=L_{\theta,i}\otimes\mathbf{1}^{\otimes
n-1}+\mathbf{1}\otimes L_{\theta,i}\otimes\mathbf{1}^{\otimes n-2}%
+\cdots+\mathbf{1}^{\otimes n-1}\otimes L_{\theta,i}$ , we have $\partial
_{i}\rho_{\theta}^{\otimes n}=\frac{1}{2}\left(  L_{\theta,i}^{n}\rho_{\theta
}^{\otimes n}+\rho_{\theta}^{\otimes n}L_{\theta,i}^{n}\right)  $,
$\mathrm{\mathrm{tr\,}}\rho_{\theta}^{\otimes n}\left(  L_{\theta,i}%
^{n}\right)  ^{2}=n\mathrm{\mathrm{tr\,}}\rho_{\theta}\left(  L_{\theta
,i}\right)  ^{2}$, and%
\begin{equation}
\partial_{i}\mathrm{tr}\,\rho_{\theta}^{\otimes n}A=\mathrm{tr}\,\partial
_{i}\rho_{\theta}^{\otimes n}A\,=\Re\mathrm{tr}\,\rho_{\theta}^{\otimes
n}AL_{\theta,i}^{n},\text{ }\,\,\forall A\text{: bounded Hermitian},
\label{dE}%
\end{equation}
\vskip-\lastskip where the first identity is due to the continuity of linear
functional $X\rightarrow\mathrm{tr}\,XA$ (e.g., Theorem\thinspace II.7.2 of
Holevo\thinspace(1982)\thinspace). (\ref{dE}), in combination with Schwartz's
inequality, leads to $\left\vert \partial_{i}\mathrm{tr}\,\rho_{\theta
}^{\otimes n}X\right\vert \leq n\mathrm{\mathrm{tr\,}}\rho_{\theta}\left(
L_{\theta,i}\right)  ^{2}\mathrm{tr}\,\rho_{\theta}^{\otimes n}\mathrm{tr}%
\,X^{2}$. Observe $\left\vert \gamma_{\theta_{0},\tilde{T}_{n_{0}}}^{n_{1}%
,j}\right\vert =\left\vert \mathrm{E}_{\theta_{0}}^{\mathsf{M}_{\tilde
{T}_{n_{0}}}^{n_{1}}}\left[  T_{\tilde{T}_{n_{0}},n_{1}}^{,j}\right]
-\theta_{0}^{j}\right\vert \leq\left\vert \tilde{T}_{n_{0}}-\theta_{0}%
^{j}\right\vert +a_{4,n_{1}}$. Hence. due to (\ref{|gamma|<}), (M.3.1), and
Lemma\thinspace\ref{lem:est-cont-2}, we have $\partial_{i}\left[
\mathrm{E}_{\theta}^{\mathsf{\tilde{M}}^{n_{0}}}\gamma_{\theta_{0},\tilde
{T}_{n_{0}}}^{n_{1},j}\right]  _{\theta=\theta_{0}}=\int\gamma_{\theta
_{0},\tilde{T}_{n_{0}}}^{n_{1},j}\mathrm{tr}\,\partial_{i}\rho_{\theta_{0}%
}^{\otimes n_{0}}\tilde{M}^{n_{0}}\left(  \mathrm{d}\,\vec{\omega}_{1}\right)
$. Therefore, due to Theorem\thinspace VI.2.1 of Holevo\thinspace(1982) and
(\ref{dE}), the first term of (\ref{leibniz-pre}) is evaluated as follows
(they are used to show (i) below).%
\begin{align*}
&  \left\vert \partial_{i}\left(  \mathrm{E}_{\theta}^{\mathsf{\tilde{M}%
}^{n_{0}}}\mathrm{E}_{\theta_{0}}^{\mathsf{M}_{\tilde{T}_{n_{0}}}^{n_{1}}%
}\left[  T_{\tilde{T}_{n_{0}},n_{1}}^{,j}\right]  \right)  _{\theta=\theta
_{0}}\right\vert =\left\vert \partial_{i}\left(  \mathrm{E}_{\theta
}^{\mathsf{\tilde{M}}^{n_{0}}}\left(  \theta_{0}^{j}+\gamma_{\theta_{0}%
,\tilde{T}_{n_{0}}}^{n_{1},j}\right)  \right)  _{\theta=\theta_{0}}\right\vert
=\left\vert \partial_{i}\left[  \mathrm{E}_{\theta}^{\mathsf{\tilde{M}}%
^{n_{0}}}\gamma_{\theta_{0},\tilde{T}_{n_{0}}}^{n_{1},j}\right]
_{\theta=\theta_{0}}\right\vert \\
&  \leq_{\text{(i)}}\sqrt{\mathrm{\mathrm{tr\,}}\rho_{\theta_{0}}^{\otimes
n_{0}}\left(  L_{\theta_{0},i}^{n_{0}}\right)  ^{2}\mathrm{E}_{\theta_{0}%
}^{\mathsf{\tilde{M}}^{n_{0}}}\left(  \gamma_{\theta_{0},\tilde{T}_{n_{0}}%
}^{n_{1},j}\right)  ^{2}}\underset{\text{(ii)}}{\leq}\sqrt{n_{0}%
\mathrm{\mathrm{tr\,}}\rho_{\theta_{0}}\left(  L_{\theta_{0},i}\right)
^{2}\cdot\frac{n_{1}^{2}m^{2}a_{1}a_{4,n_{1}}D_{2,\theta_{0}}}{n_{0}^{2}}},
\end{align*}
\vskip-\lastskip where (ii) is due to (\ref{|gamma|<}) and (M.3.2). Therefore,
the first term vanishes as $n_{0}\rightarrow\infty$. \ \ Due to (\ref{Bxy}) of
Lemma\thinspace\ref{lem:est-cont}, the second term converges to $\partial
_{i}\mathrm{E}_{\theta}^{\mathsf{M}_{\theta_{0}}^{n_{1}}}\left[  T_{\theta
_{0},n_{1}}^{,j}\right]  _{\theta=\theta_{0}}=\delta_{j}^{i}$, and
(\ref{asym-unbiased}) is proved.
\end{proof}

\begin{lemma}
\label{lem:trGV-cont} $\underset{\theta_{0}\rightarrow\theta}{\varlimsup}%
\inf\left\{  \mathrm{Tr}G\,_{\theta}\mathrm{V}_{\theta}\left[  \mathcal{E}%
_{\theta_{0},n}\right]  ;\text{(\ref{locally-unbiased}), (E')}\right\}
=\inf\left\{  \mathrm{Tr}G\,_{\theta}\mathrm{V}_{\theta}\left[  \mathcal{E}%
_{\theta,n}\right]  ;\text{(\ref{locally-unbiased}), (E')}\right\}  $
\end{lemma}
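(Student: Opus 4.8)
The plan is to read the claim, for the fixed index $n$, as the continuity at $\theta_{0}=\theta$ of the function $g(\theta_{0}):=\inf\{\mathrm{Tr}\,G_{\theta}\mathrm{V}_{\theta}[\mathcal{E}_{\theta_{0},n}];(\ref{locally-unbiased})\text{ holds at }\theta_{0},(\mathrm{E}')\}$. The right-hand side is $g(\theta)$, so the assertion is $\varlimsup_{\theta_{0}\rightarrow\theta}g(\theta_{0})=g(\theta)$, and I would in fact prove the stronger $\lim_{\theta_{0}\rightarrow\theta}g(\theta_{0})=g(\theta)$ by establishing the two one-sided bounds $g(\theta)\leq g(\theta_{0})+o(1)$ and $g(\theta_{0})\leq g(\theta)+o(1)$ as $\theta_{0}\rightarrow\theta$, each through the affine recentering already used for the lowerbound in Subsection 3.2.

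Concretely, given $\mathcal{E}_{\theta_{0},n}=\{\mathsf{M}_{\theta_{0}}^{n},T_{\theta_{0},n}\}$ satisfying (\ref{locally-unbiased}) at $\theta_{0}$ and $(\mathrm{E}')$, I keep the measurement $\mathsf{M}_{\theta_{0}}^{n}$ and replace the processing by $T':=A(T_{\theta_{0},n}-\mathrm{E}_{\theta}^{\mathsf{M}_{\theta_{0}}^{n}}[T_{\theta_{0},n}])+\theta$, where $A:=B_{\theta}[\mathcal{E}_{\theta_{0},n}]^{-1}$. Then $\mathrm{E}_{\theta}[T']=\theta$ and $B_{\theta}[\{\mathsf{M}_{\theta_{0}}^{n},T'\}]=I$, so the new estimator is locally unbiased at $\theta$, and as the range is only affinely rescaled it again satisfies $(\mathrm{E}')$. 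The covariance transforms cleanly, $\mathrm{V}_{\theta}[T']=A\,\mathrm{V}_{\theta}[\mathcal{E}_{\theta_{0},n}]\,A^{T}$. Now (\ref{dE-dE'-2}) together with $B_{\theta_{0}}[\mathcal{E}_{\theta_{0},n}]=I$ gives $\Vert A-I\Vert=O(a_{4,n}\Vert\theta-\theta_{0}\Vert)$ for $\theta_{0}$ near $\theta$, so writing $A=I+E$ one has $\mathrm{Tr}\,G_{\theta}(A\mathrm{V}A^{T}-\mathrm{V})=\mathrm{Tr}\,G_{\theta}(E\mathrm{V}+\mathrm{V}E^{T}+E\mathrm{V}E^{T})$, which is $O(\Vert E\Vert)\cdot\mathrm{Tr}\,G_{\theta}\mathrm{V}_{\theta}[\mathcal{E}_{\theta_{0},n}]$. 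Because $(\mathrm{E}')$ bounds $\mathrm{Tr}\,G_{\theta}\mathrm{V}_{\theta}$ and the near-minimal value is bounded, this tends to $0$; taking the infimum yields $g(\theta)\leq g(\theta_{0})+o(1)$, and running the same recentering from one fixed near-optimal estimator at $\theta$ back to $\theta_{0}$ gives the reverse bound. This second (upper-semicontinuity) direction is in fact free, since $g=\inf_{L}g_{L}$ is an infimum of the continuous bounded-range infima $g_{L}$.

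The hard part is the uniformity hidden in ``$\mathrm{Tr}\,G_{\theta}\mathrm{V}_{\theta}$ bounded'' in the lower-semicontinuity direction $g(\theta)\leq\varliminf_{\theta_{0}\rightarrow\theta}g(\theta_{0})$: the recentering error is $O(a_{4,n}\Vert\theta-\theta_{0}\Vert)$, but the range $a_{4,n}$ in $(\mathrm{E}')$ is estimator-dependent, and a sequence of near-minimizers of $g(\theta_{0})$ could a priori have $a_{4,n}\rightarrow\infty$ faster than $\Vert\theta-\theta_{0}\Vert\rightarrow0$, so the error need not vanish. I would remove this obstacle exactly as in Subsection 3.2: before recentering, truncate each near-minimizer to a fixed radius $L$ by the cut-off $S^{L}$ and re-normalize to $\mathcal{F}^{L},\mathcal{E}_{\theta_{0},n}^{L}$. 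The tail bound $\int_{\Vert T_{\theta_{0},n}-\theta\Vert>L}\Vert T_{\theta_{0},n}-\theta\Vert\,\mathrm{tr}\,\Diamond_{i,\theta,n}^{(1)}M^{n}(\mathrm{d}\omega_{n})\rightarrow0$ supplied by $(\mathrm{E})$, together with Chebyshev applied to the plain-measure second moment also controlled by $(\mathrm{E})$, shows simultaneously that $B_{\theta}[\mathcal{F}^{L}]\rightarrow I$ and that the weighted variance moves by at most $\varepsilon$ once $L$ is large, uniformly over all estimators whose weighted variance is within $1$ of the bounded value $\varliminf_{\theta_{0}\rightarrow\theta}g(\theta_{0})$. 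Hence $g(\theta_{0})$ is attained up to $\varepsilon$ by estimators of range at most a constant $L=L(n,\theta,\varepsilon)$; for those the recentering error is genuinely $O(L\Vert\theta-\theta_{0}\Vert)=o(1)$, and letting $\varepsilon\rightarrow0$ finishes the argument.
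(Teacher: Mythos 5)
Your affine recentering, the covariance transformation rule, and the use of (\ref{dE-dE'-2})/(\ref{Bxy}) to control the bias matrix are exactly the paper's tools, and your second direction --- recentering one \emph{fixed} near-optimal estimator at $\theta$ to the nearby design points --- is in substance the paper's entire proof: the paper assumes $\varlimsup_{\theta_{0}\rightarrow\theta}g(\theta_{0})=g(\theta)+2c$ with $c>0$ (in your notation $g$), picks a fixed $\mathcal{E}_{\theta,n}$ with $\mathrm{Tr}\,G_{\theta}\mathrm{V}_{\theta}[\mathcal{E}_{\theta,n}]\leq g(\theta)+c$, recenters it to each $\theta_{k}$ via $T_{\theta_{k},n}^{\prime}=B_{\theta_{k}}[\mathcal{E}_{\theta,n}]^{-1}\left(T_{\theta,n}-\mathrm{E}_{\theta_{k}}[T_{\theta,n}]\right)+\theta_{k}$, and uses (\ref{Bxy}) --- legitimately, because the underlying estimator and hence its $a_{4,n}$ are fixed --- to conclude $g(\theta_{k})\leq g(\theta)+c+o(1)$, a contradiction. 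Note that this establishes only $\varlimsup_{\theta_{0}\rightarrow\theta}g(\theta_{0})\leq g(\theta)$; the paper never proves the reverse inequality, and only this ``$\leq$'' half is what feeds Lemma \ref{lem:est-compose} to yield the achievability part of Theorem \ref{th:cr}.

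The genuine gap in your proposal lies in the extra, lower-semicontinuity direction $g(\theta)\leq\varliminf_{\theta_{0}\rightarrow\theta}g(\theta_{0})$, and it sits exactly where you placed the repair. The truncation of Subsection 3.2 is a per-estimator argument: for a fixed estimator, $\int_{\Vert T-\theta\Vert>L}\Vert T-\theta\Vert\,\mathrm{tr}\,\Diamond_{i,\theta,n}^{(1)}M^{n}(\mathrm{d}\omega_{n})\rightarrow0$ as $L\rightarrow\infty$ by dominated convergence, since (E) makes the full integral finite; but the rate depends on the estimator, and nothing in (E)/(E') makes it uniform over a sequence of near-minimizers at $\theta_{k}$ --- the only bounds (E) supplies, $\int\Vert T-\theta\Vert\,\mathrm{tr}\,\Diamond^{(1)}M^{n}\leq na_{1}a_{4,n}$ and $\int\Vert T-\theta\Vert^{2}\,\mathrm{tr}\,\rho_{\theta}^{\otimes n}M^{n}\leq na_{4,n}^{2}$, themselves scale with each estimator's own $a_{4,n}$, which is precisely the quantity you cannot control. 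A bound on $\mathrm{Tr}\,G_{\theta}\mathrm{V}_{\theta}$ cannot supply the missing uniformity either: the tail integral is taken with respect to the measure $\mathrm{tr}\,\Diamond_{i,\theta,n}^{(1)}M^{n}(\cdot)$, which is not dominated by $P_{\theta}$; a region at distance $D$ can carry $P_{\theta}$-probability of order $1/D^{2}$ (so its variance contribution stays bounded) while carrying $\Diamond$-mass bounded only by $na_{1}$, making the tail integral of order $D$. Likewise, bounded second moments under $P_{\theta}$ do not give uniform integrability of $\Vert T-\theta\Vert^{2}$, so the claim that truncation moves the weighted variance by at most $\varepsilon$ uniformly also fails. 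Hence ``$g(\theta_{0})$ is attained up to $\varepsilon$ by estimators of range at most $L(n,\theta,\varepsilon)$'' does not follow, and your first direction remains unproven --- as it also is, tacitly, in the paper, whose lemma is stated as an equality but proved (and subsequently used) only as the one-sided inequality that your second direction correctly establishes.
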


\begin{proof}
Suppose the LHS is larger than the RHS (, denoted by $A$ in the proof) by
$2c>0$. Then one can find a sequence $\left\{  \theta_{k}\right\}  $ such that
$\underset{k\rightarrow\infty}{\lim}\inf\left\{  \mathrm{Tr}G\,_{\theta
}\mathrm{V}_{\theta}\left[  \mathcal{E}_{\theta_{k},n}\right]
;\text{(\ref{locally-unbiased}), (E')}\right\}  =A+2c$. We prove this cannot occur.

Obviously, among those satisfying (\ref{locally-unbiased}), (E'), one can find
$\left\{  \mathcal{E}_{\theta,n}\right\}  _{\theta\in\Theta}$ such that
$\mathrm{Tr}G\,_{\theta}\mathrm{V}_{\theta}\left[  \mathcal{E}_{\theta
,n}\right]  \leq A+c$. Define $\mathcal{E}_{\theta_{k},n}^{\prime}:=\left\{
\mathsf{M}_{\theta}^{n},\,T_{\theta_{k},n}^{\prime}\right\}  $ by
$T_{\theta_{k},n}^{\prime}:=B_{\theta_{k}}\left[  \mathcal{E}_{\theta
,n}\right]  ^{-1}\left(  T_{\theta,n}-\mathrm{E}_{\theta_{k}}^{\mathsf{M}%
_{\theta}^{n}}\left[  T_{\theta,n}\right]  \right)  +\theta_{k}$. It is easy
to verify $\mathrm{V}_{\theta}\left[  \mathcal{E}_{\theta_{k},n}^{^{\prime}%
}\right]  =B_{\theta_{k}}\left[  \mathcal{E}_{\theta,n}\right]  ^{-1}%
\mathrm{Tr}G\,_{\theta}\mathrm{V}_{\theta}\left[  \mathcal{E}_{\theta
,n}\right]  \left(  B_{\theta_{k}}\left[  \mathcal{E}_{\theta,n}\right]
^{-1}\right)  ^{T}$and that $\mathcal{E}_{\theta_{k},n}^{\prime}$ satisfies
(\ref{locally-unbiased}) and (E'). (Here note that $a_{4,n}$ has to be
replaced by the other constant.) Therefore, due to (\ref{Bxy}) of
Lemma\thinspace\ref{lem:est-cont}, \thinspace\thinspace\ \newline%
$\underset{k\rightarrow\infty}{\lim}\mathrm{Tr}G\,_{\theta}\mathrm{V}_{\theta
}\left[  \mathcal{E}_{\theta_{k},n}^{^{\prime}}\right]  =\mathrm{Tr}%
G\,_{\theta}\mathrm{V}_{\theta}\left[  \mathcal{E}_{\theta,n}\right]  \leq
A+c<A+2c=\underset{k\rightarrow\infty}{\lim}\inf\left\{  \mathrm{Tr}%
G\,_{\theta}\mathrm{V}_{\theta}\left[  \mathcal{E}_{\theta_{k},n}\right]
;\text{(\ref{locally-unbiased}), (E')}\right\}  $. This is contradiction.
\end{proof}

Due to Lemmas\thinspace\ref{lem:est-compose}-\ref{lem:trGV-cont}, we have
`$\leq$' of (\ref{cr-q1}) of Theorem\thinspace\ref{th:cr}. \vskip-7mm

\subsection*{\noindent{\protect\normalsize 3.4 On asymptotic normality of the
estimator\thinspace(\ref{def-est})}}

\vskip-3mm The estimator (\ref{def-est}) is asymptotically normal. We prove
the assertion in $m=1$-case, supposing that $\inf_{\theta_{0}\in%
%TCIMACRO{\U{211d} }%
%BeginExpansion
\mathbb{R}
%EndExpansion
^{m}}\mathrm{V}_{\theta}\left[  \mathcal{E}_{\theta_{0},n_{1}}\right]  $ is
not 0.%
\begin{align*}
&  \left\vert P_{\theta}^{\mathsf{M}^{n_{1},n}}\left\{  \sqrt{n}%
\mathrm{V}_{\theta}\left[  \mathcal{E}_{\theta,n_{1}}\right]  ^{-\frac{1}{2}%
}\left(  T_{n}^{n_{1}}-\theta\right)  \leq y\right\}  -\Phi\left(  y\right)
\right\vert =\left\vert \mathrm{E}_{\theta}^{\mathsf{\tilde{M}}_{n_{0}}%
}\,P_{\theta}^{\mathsf{M}_{\tilde{T}_{n_{0}}}^{n_{1}}}\left\{  \sqrt
{n}\mathrm{V}_{\theta}\left[  \mathcal{E}_{\theta,n_{1}}\right]  ^{-\frac
{1}{2}}\left(  T_{n}^{n_{1}}-\theta\right)  \leq y\right\}  -\Phi\left(
y\right)  \right\vert \\
&  \leq\mathrm{E}_{\theta}^{\mathsf{\tilde{M}}_{n_{0}}}\left\vert \,P_{\theta
}^{\mathsf{M}_{\tilde{T}_{n_{0}}}^{n_{1}}}\left\{  \sqrt{n}\mathrm{V}_{\theta
}\left[  \mathcal{E}_{\tilde{T}_{n_{0}},n_{1}}\right]  ^{-\frac{1}{2}}\left(
T_{n}^{n_{1}}-\theta-\gamma_{\theta,\tilde{T}_{n_{0}}}^{n_{1}}\right)  \leq
y\right\}  -\Phi\left(  y\right)  \right\vert \\
&  +\mathrm{E}_{\theta}^{\mathsf{\tilde{M}}_{n_{0}}}\left\vert \,P_{\theta
}^{\mathsf{M}_{\tilde{T}_{n_{0}}}^{n_{1}}}\left\{  \sqrt{n}\mathrm{V}_{\theta
}\left[  \mathcal{E}_{\theta,n_{1}}\right]  ^{-\frac{1}{2}}\left(
T_{n}^{n_{1}}-\theta\right)  \leq y\right\}  -P_{\theta}^{\mathsf{M}%
_{\tilde{T}_{n_{0}}}^{n_{1}}}\left\{  \sqrt{n}\mathrm{V}_{\theta}\left[
\mathcal{E}_{\tilde{T}_{n_{0}},n_{1}}\right]  ^{-\frac{1}{2}}\left(
T_{n}^{n_{1}}-\theta-\gamma_{\theta,\tilde{T}_{n_{0}}}^{n_{1}}\right)  \leq
y\right\}  \right\vert
\end{align*}%
%TCIMACRO{\TeXButton{vskip-lastskip}{\vskip-\lastskip}}%
%BeginExpansion
\vskip-\lastskip
%EndExpansion
Due to Berry-Esseen bound (Chapter 11 of DasGupta\thinspace(2008)), the first
term is upperbounded by $0.8\left(  a_{4,n_{1}}\right)  ^{3}n_{2}^{-\frac
{1}{2}}\mathrm{E}_{\theta}^{\mathsf{\tilde{M}}_{n_{0}}}\mathrm{V}_{\theta
}\left[  \mathcal{E}_{\tilde{T}_{n_{0}},n_{1}}\right]  ^{-\frac{3}{2}}$ and
converges to 0 as $n_{2}\rightarrow\infty$ since $\inf_{\theta_{0}\in%
%TCIMACRO{\U{211d} }%
%BeginExpansion
\mathbb{R}
%EndExpansion
^{m}}\mathrm{V}_{\theta}\left[  \mathcal{E}_{\theta_{0},n_{1}}\right]  \neq0$
by assumption. To evaluate the second term, we just have to consider the event
such that $\left\vert \tilde{T}_{n_{0}}-\theta\right\vert <\varepsilon
^{\frac{1}{2}}n^{-\frac{1}{4}}$, since the probability that this does not
occur converges to 0 due to (M.3.2) and Chebyshev's inequality. Due to
Lemma\thinspace\ref{lem:trGV-cont}, we can suppose that $\mathrm{V}_{\theta
}\left[  \mathcal{E}_{\theta_{0},n_{1}}\right]  $ is continuous in $\theta
_{0}$ at $\theta_{0}=\theta$ without loss of generality. Therefore,
$\left\vert \mathrm{V}_{\theta}\left[  \mathcal{E}_{\tilde{T}_{n_{0}},n_{1}%
}\right]  ^{-1/2}-\mathrm{V}_{\theta}\left[  \mathcal{E}_{\theta,n_{1}%
}\right]  ^{-1/2}\right\vert <\varepsilon$ \ for large $n$. Let $c:=y+n_{1}%
^{2}m^{2}a_{1}a_{4,n_{1}}$. Since \newline$\left\vert \frac{1}{\sqrt{n}%
}\left(  \mathrm{V}_{\theta}\left[  \mathcal{E}_{\tilde{T}_{n_{0}},n_{1}%
}\right]  ^{\frac{1}{2}}-\mathrm{V}_{\theta}\left[  \mathcal{E}_{\theta,n_{1}%
}\right]  ^{\frac{1}{2}}\right)  y+\gamma_{\theta,\tilde{T}_{n_{0}}}^{n_{1}%
}\right\vert \leq c\varepsilon n^{-\frac{1}{2}}$ due to (\ref{|gamma|<}), this
is upperbounded by%
\begin{align*}
&  \mathrm{E}_{\theta}^{\mathsf{\tilde{M}}_{n_{0}}}P_{\theta}^{\mathsf{M}%
_{\tilde{T}_{n_{0}}}^{n_{1}}}\left\{  y-c\varepsilon\leq\sqrt{n}%
\mathrm{V}_{\theta}\left[  \mathcal{E}_{\tilde{T}_{n_{0}},n_{1}}\right]
^{-\frac{1}{2}}\left(  T_{n}^{n_{1}}-\theta-\gamma_{\theta,\tilde{T}_{n_{0}}%
}^{n_{1}}\right)  \leq y+c\varepsilon\right\} \\
&  \leq\Phi\left(  y+c\varepsilon\right)  -\Phi\left(  y-c\varepsilon\right)
+0.8\left(  a_{4,n_{1}}\right)  ^{3}n_{2}^{-\frac{1}{2}}\mathrm{E}_{\theta
}^{\mathsf{\tilde{M}}_{n_{0}}}\mathrm{V}_{\theta}\left[  \mathcal{E}%
_{\tilde{T}_{n_{0}},n_{1}}\right]  ^{-\frac{3}{2}}.
\end{align*}%
%TCIMACRO{\TeXButton{vskip-lastskip}{\vskip-\lastskip}}%
%BeginExpansion
\vskip-\lastskip
%EndExpansion
Here the inequality is due to Berry-Esseen bound. Letting $n\rightarrow\infty$
and $\varepsilon\rightarrow0$, the last end converges to 0. After all, we have
our assertion.

In $m\geq2$-case, the first term is evaluated using multi-dimensional version
of Berry-Esseen bound (Chapter 11 of DasGupta\thinspace(2008)). The second
term is evaluated by analogous but more complicated analysis.\quad\vskip-7mm

\subsection*{\noindent{\protect\normalsize 3.5 On logarithmic derivative and
Fisher information}}

\vskip-3mm Hayashi and Matsumoto\thinspace(1998) gives representation of
$C^{Q}\left(  G_{\theta},\mathcal{M}\right)  $ using Fisher information
\ $J_{\theta}^{\mathsf{M}^{n}}$ of the classical statistical model $\left\{
\,P_{\theta}^{\mathsf{M}^{n}}\right\}  _{\theta\in\Theta}$: $C^{Q}\left(
G_{\theta},\mathcal{M}\right)  =\lim_{n\rightarrow\infty}\inf_{\mathsf{M}^{n}%
}n\mathrm{Tr}\,G_{\theta}J_{\theta}^{\mathsf{M}^{n}-1}$.\ They exploits the
fact that the minimum variance of locally unbiased estimators equals $\left(
J_{\theta}^{\mathsf{M}^{n}}\right)  ^{-1}$ and achieved by $T_{n,\theta}%
^{j}=\sum_{i=1}^{m}\left(  J_{\theta}^{\mathsf{M}^{n}-1}\right)
^{ij}l_{\theta,i}^{\mathsf{M}^{n}}+\theta^{j}$, with $l_{\theta,i}%
^{\mathsf{M}^{n}}$'s denoting the logarithmic derivative. Since their
regularity conditions are different from ours, we examine here how far this
statement holds in our setting. First, we define $l_{\theta,i}^{\mathsf{M}%
^{n}}$ as the Radon-Nikodym derivative $\mathrm{d\,tr}\,\partial_{i}%
\rho_{\theta}^{\otimes n}M^{n}/\mathrm{d\,tr}\,\rho_{\theta}^{\otimes n}M^{n}%
$. Let $\mu^{\mathsf{M}^{n}}\left(  \Delta\right)  :=\mathrm{tr}%
\,\sigma^{\otimes n}M^{n}\left(  \Delta\right)  $ ($\sigma>0$) and $p_{\theta
}^{\mathsf{M}^{n}}:=\mathrm{d\,}\left(  \mathrm{tr}\,\rho_{\theta}^{\otimes
n}M^{n}\right)  /\mathrm{d}\mu^{\mathsf{M}^{n}}\,$(, which exists since
$\mathrm{tr}\,\sigma^{\otimes n}M^{n}\left(  \Delta\right)  =0$ implies
$M^{n}\left(  \Delta\right)  =0$). Since $\partial_{i}\int f\,p_{\theta
}^{\mathsf{M}^{n}}\mathrm{d}\mu^{\mathsf{M}^{n}}\leq\left(  \sup\left\vert
f\right\vert \right)  \left\Vert \partial_{i}\rho_{\theta}^{\otimes
n}\right\Vert _{1}$, there is $L^{1}$ function $\partial_{i}p_{\theta
}^{\mathsf{M}^{n}}$ such that $\partial_{i}\int f\,p_{\theta}^{\mathsf{M}^{n}%
}\mathrm{d}\mu^{\mathsf{M}^{n}}=\int f\,\partial_{i}p_{\theta}^{\mathsf{M}%
^{n}}\mathrm{d}\mu^{\mathsf{M}^{n}}$. Using this, $l_{\theta,i}^{\mathsf{M}%
^{n}}=\partial_{i}p_{\theta}^{\mathsf{M}^{n}}/p_{\theta}^{\mathsf{M}^{n}}$, if
the RHS is finite.

\begin{lemma}
\label{lemfisher-finite}Suppose that (M.1) and (M.2) holds. Then $l_{\theta
,i}^{\mathsf{M}^{n}}$ exists and $\sup_{\mathsf{M}^{n}}\mathbf{v}^{T}%
J_{\theta}^{\mathsf{M}^{n}}\mathbf{v}J_{\theta}^{\mathsf{M}^{n}}$ is finite.
Also, if $\partial_{i}\rho_{\theta}\in\tau c\left(  \mathcal{H}\right)  $
exists, and $\rho_{\theta}>0$, $l_{\theta,i}^{\mathsf{M}^{n}}$ exists.
\end{lemma}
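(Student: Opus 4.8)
The plan is to derive both assertions from the symmetric logarithmic derivative furnished by (M.2), lifted to the $n$-fold system through the additive operator $L_{\theta,i}^{n}:=\sum_{k}\mathbf{1}^{\otimes k-1}\otimes L_{\theta,i}\otimes\mathbf{1}^{\otimes n-k}$ already used in the proof of Lemma~\ref{lem:est-asym-unbiased}. From there I would reuse the three identities recorded in that proof: $\partial_{i}\rho_{\theta}^{\otimes n}=\tfrac{1}{2}(L_{\theta,i}^{n}\rho_{\theta}^{\otimes n}+\rho_{\theta}^{\otimes n}L_{\theta,i}^{n})$, the additivity $\mathrm{tr}\,\rho_{\theta}^{\otimes n}(L_{\theta,i}^{n})^{2}=n\,\mathrm{tr}\,\rho_{\theta}(L_{\theta,i})^{2}$, and the score identity (\ref{dE}), namely $\mathrm{tr}\,\partial_{i}\rho_{\theta}^{\otimes n}A=\Re\,\mathrm{tr}\,\rho_{\theta}^{\otimes n}A L_{\theta,i}^{n}$ for every bounded Hermitian $A$.

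Existence of $l_{\theta,i}^{\mathsf{M}^{n}}$ is exactly absolute continuity of the signed measure $\Delta\mapsto\mathrm{tr}\,\partial_{i}\rho_{\theta}^{\otimes n}M^{n}(\Delta)$ with respect to $\Delta\mapsto\mathrm{tr}\,\rho_{\theta}^{\otimes n}M^{n}(\Delta)$. I would apply the Cauchy--Schwarz inequality for the Hilbert--Schmidt inner product to the factorization $\mathrm{tr}\,\rho_{\theta}^{\otimes n}L_{\theta,i}^{n}M^{n}(\Delta)=\mathrm{tr}\,\big(M^{n}(\Delta)^{1/2}(\rho_{\theta}^{\otimes n})^{1/2}\big)\big((\rho_{\theta}^{\otimes n})^{1/2}L_{\theta,i}^{n}M^{n}(\Delta)^{1/2}\big)$, which yields $\big|\mathrm{tr}\,\partial_{i}\rho_{\theta}^{\otimes n}M^{n}(\Delta)\big|\le\sqrt{\mathrm{tr}\,\rho_{\theta}^{\otimes n}M^{n}(\Delta)}\,\sqrt{n\,\mathrm{tr}\,\rho_{\theta}(L_{\theta,i})^{2}}$, where the second factor is controlled using $0\le M^{n}(\Delta)\le\mathbf{1}$ together with additivity. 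This inequality makes the absolute continuity immediate, so the Radon--Nikodym derivative $l_{\theta,i}^{\mathsf{M}^{n}}$ exists and is $P_{\theta}^{\mathsf{M}^{n}}$-integrable.

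For finiteness of the Fisher information I would establish the monotonicity bound $\mathbf{v}^{T}J_{\theta}^{\mathsf{M}^{n}}\mathbf{v}\le n\,\mathrm{tr}\,\rho_{\theta}(L_{\theta,\mathbf{v}})^{2}$ with $L_{\theta,\mathbf{v}}:=\sum_{i}v_{i}L_{\theta,i}$, i.e.\ the classical Fisher information of any measured model never exceeds the SLD quantum Fisher information. Writing $\mathbf{v}^{T}J_{\theta}^{\mathsf{M}^{n}}\mathbf{v}=\int(l_{\theta,\mathbf{v}}^{\mathsf{M}^{n}})^{2}\,\mathrm{tr}\,\rho_{\theta}^{\otimes n}M^{n}(\mathrm{d}\omega_{n})$ and using that $l_{\theta,\mathbf{v}}^{\mathsf{M}^{n}}$ is the density of $\mathrm{tr}\,\partial_{\mathbf{v}}\rho_{\theta}^{\otimes n}M^{n}$ against $\mathrm{tr}\,\rho_{\theta}^{\otimes n}M^{n}$, the same Cauchy--Schwarz step gives, outcome by outcome, $\big|\mathrm{tr}\,\rho_{\theta}^{\otimes n}L_{\theta,\mathbf{v}}^{n}M^{n}(\mathrm{d}\omega_{n})\big|^{2}\le\mathrm{tr}\,\rho_{\theta}^{\otimes n}M^{n}(\mathrm{d}\omega_{n})\cdot\mathrm{tr}\,\rho_{\theta}^{\otimes n}L_{\theta,\mathbf{v}}^{n}M^{n}(\mathrm{d}\omega_{n})L_{\theta,\mathbf{v}}^{n}$; dividing by $\mathrm{tr}\,\rho_{\theta}^{\otimes n}M^{n}(\mathrm{d}\omega_{n})$ and integrating, the resolution of identity $M^{n}(\mathbb{R}^{l_{n}})=\mathbf{1}$ collapses the right-hand side to $\mathrm{tr}\,\rho_{\theta}^{\otimes n}(L_{\theta,\mathbf{v}}^{n})^{2}=n\,\mathrm{tr}\,\rho_{\theta}(L_{\theta,\mathbf{v}})^{2}$. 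As this bound is independent of $\mathsf{M}^{n}$ and finite by (M.2) and Schwarz's inequality, $\sup_{\mathsf{M}^{n}}\mathbf{v}^{T}J_{\theta}^{\mathsf{M}^{n}}\mathbf{v}$ is finite.

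For the last assertion (M.2) is dropped, so only absolute continuity remains to be shown. Leibniz's rule makes $\partial_{i}\rho_{\theta}^{\otimes n}$ trace class once $\partial_{i}\rho_{\theta}\in\tau c(\mathcal{H})$, and $\rho_{\theta}>0$ forces $\rho_{\theta}^{\otimes n}>0$, so $(\rho_{\theta}^{\otimes n})^{1/2}$ has trivial kernel and dense range. If $\mathrm{tr}\,\rho_{\theta}^{\otimes n}M^{n}(\Delta)=0$, then the positive trace-class operator $(\rho_{\theta}^{\otimes n})^{1/2}M^{n}(\Delta)(\rho_{\theta}^{\otimes n})^{1/2}$ has zero trace, hence vanishes, whence $M^{n}(\Delta)^{1/2}(\rho_{\theta}^{\otimes n})^{1/2}=0$; by density of the range of $(\rho_{\theta}^{\otimes n})^{1/2}$ and continuity of the bounded operator $M^{n}(\Delta)^{1/2}$ this gives $M^{n}(\Delta)=0$, so $\mathrm{tr}\,\partial_{i}\rho_{\theta}^{\otimes n}M^{n}(\Delta)=0$ and $l_{\theta,i}^{\mathsf{M}^{n}}$ exists. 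The main obstacle is rigor for a \emph{continuous} POVM: the per-outcome Cauchy--Schwarz inequalities must be reread as statements about Radon--Nikodym densities against the dominating measure $\mu^{\mathsf{M}^{n}}$, and every interchange of the Pettis operator integral with the trace must be justified through the continuity of $X\mapsto\mathrm{tr}\,XA$ already invoked for (\ref{dE}).
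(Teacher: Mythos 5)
Your proposal is correct in substance, but it reaches the central finiteness claim by a genuinely different route than the paper. For existence of $l_{\theta,i}^{\mathsf{M}^{n}}$ you prove the quantitative bound $\bigl|\mathrm{tr}\,\partial_{i}\rho_{\theta}^{\otimes n}M^{n}(\Delta)\bigr|\leq\sqrt{\mathrm{tr}\,\rho_{\theta}^{\otimes n}M^{n}(\Delta)}\,\sqrt{n\,\mathrm{tr}\,\rho_{\theta}(L_{\theta,i})^{2}}$ via Hilbert--Schmidt Cauchy--Schwarz, whereas the paper argues qualitatively: $\mathrm{tr}\,\rho_{\theta}^{\otimes n}M^{n}(\Delta)=0$ forces $\rho_{\theta}^{\otimes n}M^{n}(\Delta)=0$, which kills both terms of the SLD representation; both rest on (M.2) and both suffice, yours being a strengthening. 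For $\sup_{\mathsf{M}^{n}}\mathbf{v}^{T}J_{\theta}^{\mathsf{M}^{n}}\mathbf{v}<\infty$ the paper discretizes the outcome space into level sets $\Delta_{\iota}$ of the score, expands the positive quantity $\mathrm{tr}\,\rho_{\theta}^{\otimes n}\{L_{\theta,\mathbf{v}}^{n}-l_{\theta,\mathbf{v}}^{\mathsf{M}^{n}}(\omega_{\iota})\}M^{n}(\Delta_{\iota})\{L_{\theta,\mathbf{v}}^{n}-l_{\theta,\mathbf{v}}^{\mathsf{M}^{n}}(\omega_{\iota})\}$, and sends the mesh to zero; you instead run Cauchy--Schwarz ``per outcome'' and integrate against $M^{n}(\mathbb{R}^{l_{n}})=\mathbf{1}$. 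Your route is shorter, but the step you flag yourself---upgrading the set-wise inequality to an a.e.\ inequality between Radon--Nikodym densities---is exactly the difficulty the paper's discretization is built to bypass, so it must actually be carried out; it closes cheaply by noting that for the measures $\nu_{p}(\Delta):=\mathrm{tr}\,\rho_{\theta}^{\otimes n}M^{n}(\Delta)$, $\nu_{s}(\Delta):=\Re\mathrm{tr}\,\rho_{\theta}^{\otimes n}L_{\theta,\mathbf{v}}^{n}M^{n}(\Delta)$, $\nu_{q}(\Delta):=\mathrm{tr}\,L_{\theta,\mathbf{v}}^{n}\rho_{\theta}^{\otimes n}L_{\theta,\mathbf{v}}^{n}M^{n}(\Delta)$ one has $a^{2}\nu_{p}(\Delta)+2ab\,\nu_{s}(\Delta)+b^{2}\nu_{q}(\Delta)=\mathrm{tr}\,(a+bL_{\theta,\mathbf{v}}^{n})\rho_{\theta}^{\otimes n}(a+bL_{\theta,\mathbf{v}}^{n})M^{n}(\Delta)\geq0$ for all $\Delta$ and all real $a,b$, so taking densities with respect to $\mu^{\mathsf{M}^{n}}$ and intersecting the a.e.\ sets over rational $(a,b)$ yields $s^{2}\leq pq$ a.e., which is your pointwise Cauchy--Schwarz; the chain $\mathbf{v}^{T}J_{\theta}^{\mathsf{M}^{n}}\mathbf{v}\leq\mathrm{tr}\,\rho_{\theta}^{\otimes n}(L_{\theta,\mathbf{v}}^{n})^{2}=n\,\mathrm{tr}\,\rho_{\theta}(L_{\theta,\mathbf{v}})^{2}$ then gives the same measurement-independent bound as the paper's $\mathbf{v}^{T}J_{\theta}^{S,n}\mathbf{v}$. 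Your treatment of the last assertion ($\rho_{\theta}>0$ forces $M^{n}(\Delta)=0$ whenever $\mathrm{tr}\,\rho_{\theta}^{\otimes n}M^{n}(\Delta)=0$) coincides with the paper's, with the injectivity-and-dense-range argument spelled out more carefully.
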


\begin{proof}
Suppose (M.2) holds. Since $\rho_{\theta}\geq0$ and $M\left(  \cdot\right)
\geq0$, $\mathrm{tr}\,\rho_{\theta}^{\otimes n}M=0$ means $\rho_{\theta
}^{\otimes n}M=0$. Therefore, \ due to (M.2), $\mathrm{tr}\,\partial_{i}%
\rho_{\theta}^{\otimes n}M=\frac{1}{2}$ $\left(  \mathrm{tr}\,L_{\theta,i}%
^{n}\rho_{\theta}^{\otimes n}M+\mathrm{tr}\,M\rho_{\theta}^{\otimes
n}L_{\theta,i}^{n}\right)  =0$. Therefore, $l_{\theta,i}^{\mathsf{M}^{n}}$
exists. Define $l_{\theta,\mathbf{v}}^{\mathsf{M}^{n}}:=\sum_{i=1}^{m}%
v_{i}l_{\theta,i}^{\mathsf{M}^{n}}$. Let $\Delta_{\iota}:=\left\{
\omega;\iota\varepsilon\leq\left(  l_{\theta,\mathbf{v}}^{\mathsf{M}^{n}%
}\left(  \omega\right)  \right)  ^{2}\leq\left(  \iota+1\right)
\varepsilon\right\}  $, and denote by $\omega_{\iota}$ \ the one satisfying
$\left(  l_{\theta,\mathbf{v}}^{\mathsf{M}^{n}}\left(  \omega\right)  \right)
^{2}=\iota\varepsilon$. Observe
\begin{align*}
0  &  \leq\sum_{i,j}v_{i}v_{j}\sum_{\iota}\mathrm{tr}\,\rho_{\theta}^{\otimes
n}\left\{  L_{\theta,i}^{n}-l_{\theta,i}^{\mathsf{M}^{n}}\left(  \omega
_{\iota}\right)  \right\}  M^{n}\left(  \Delta_{\iota}\right)  \left\{
L_{\theta,j}^{n}-l_{\theta,j}^{\mathsf{M}^{n}}\left(  \omega_{\iota}\right)
\right\} \\
&  =\mathbf{v}^{T}J_{\theta}^{S,n}\mathbf{v-}2\sum_{j}\mathbf{\sum_{\iota}%
}v_{j}l_{\theta,\mathbf{v}}^{\mathsf{M}^{n}}\left(  \omega_{\iota}\right)
\Re\mathrm{tr}\,\rho_{\theta}^{\otimes n}L_{\theta,j}^{n}M^{n}\left(
\Delta_{\iota}\right)  +\mathbf{\sum_{\iota}}\left(  l_{\theta,\mathbf{v}%
}^{\mathsf{M}^{n}}\left(  \omega_{\iota}\right)  \right)  ^{2}\mathrm{tr}%
\,\rho_{\theta}^{\otimes n}M^{n}\left(  \Delta_{\iota}\right) \\
&  =\mathbf{v}^{T}J_{\theta}^{S,n}\mathbf{v-}2\mathbf{\sum_{\iota}}%
l_{\theta,\mathbf{v}}^{\mathsf{M}^{n}}\left(  \omega_{\iota}\right)
\int_{\Delta_{\iota}}l_{\theta,\mathbf{v}}^{\mathsf{M}^{n}}\left(
\omega\right)  \mathrm{tr}\,\rho_{\theta}^{\otimes n}M^{n}\left(
\mathrm{d}\omega\right)  +\mathbf{\sum_{\iota}}\left(  l_{\theta,\mathbf{v}%
}^{\mathsf{M}^{n}}\left(  \omega_{\iota}\right)  \right)  ^{2}\mathrm{tr}%
\,\rho_{\theta}^{\otimes n}M^{n}\left(  \Delta_{\iota}\right) \\
&  =\mathbf{v}^{T}J_{\theta}^{S,n}\mathbf{v}-\mathbf{\sum_{\iota}}\left(
l_{\theta,\mathbf{v}}^{\mathsf{M}^{n}}\left(  \omega_{\iota}\right)  \right)
^{2}\mathrm{tr}\,\rho_{\theta}^{\otimes n}M^{n}\left(  \Delta_{\iota}\right)
-2\mathbf{\sum_{\iota}}l_{\theta,\mathbf{v}}^{\mathsf{M}^{n}}\left(
\omega_{\iota}\right)  \left(  \int_{\Delta_{\iota}}l_{\theta,\mathbf{v}%
}^{\mathsf{M}^{n}}\left(  \omega\right)  \mathrm{tr}\,\rho_{\theta}^{\otimes
n}M^{n}\left(  \mathrm{d}\omega\right)  -l_{\theta,\mathbf{v}}^{\mathsf{M}%
^{n}}\left(  \omega_{\iota}\right)  \mathrm{tr}\,\rho_{\theta}^{\otimes
n}M^{n}\left(  \Delta_{\iota}\right)  \right) \\
&  \leq\mathbf{v}^{T}J_{\theta}^{S,n}\mathbf{v}-\int\left(  l_{\theta
,\mathbf{v}}^{\mathsf{M}^{n}}\left(  \omega\right)  \right)  ^{2}\rho_{\theta
}^{\otimes n}M^{n}\left(  \mathrm{d}\omega\right)  +\varepsilon+2\mathbf{\sum
_{\iota}}\int_{\Delta_{\iota}}\left\vert l_{\theta,\mathbf{v}}^{\mathsf{M}%
^{n}}\left(  \omega_{\iota}\right)  \right\vert \frac{\varepsilon}{\left\vert
l_{\theta,\mathbf{v}}^{\mathsf{M}^{n}}\left(  \omega\right)  +l_{\theta
,\mathbf{v}}^{\mathsf{M}^{n}}\left(  \omega_{\iota}\right)  \right\vert
}\mathrm{tr}\,\rho_{\theta}^{\otimes n}M^{n}\left(  \mathrm{d}\omega\right) \\
&  \leq\mathbf{v}^{T}J_{\theta}^{S,n}\mathbf{v}-\int\left(  l_{\theta
,\mathbf{v}}^{\mathsf{M}^{n}}\left(  \omega\right)  \right)  ^{2}\rho_{\theta
}^{\otimes n}M^{n}\left(  \mathrm{d}\omega\right)  +\varepsilon+2\varepsilon
\end{align*}%
%TCIMACRO{\TeXButton{vskip-lastskip}{\vskip-\lastskip}}%
%BeginExpansion
\vskip-\lastskip
%EndExpansion
which, with $\varepsilon\rightarrow0$, implies $\mathbf{v}^{T}J_{\theta
}^{\mathsf{M}^{n}}\mathbf{v}\leq\mathbf{v}^{T}J_{\theta}^{S,n}\mathbf{v}%
<\infty$. Also, suppose $\rho_{\theta}>0$. Then, $\mathrm{tr}\,\rho_{\theta
}^{\otimes n}M=0$ means $M=0$ and $\mathrm{tr}\,\partial_{i}\rho_{\theta
}^{\otimes n}M=0$. Therefore, the second assertion is proved.
\end{proof}

The RHS of (\ref{cr-q1}) and (\ref{cr-q2}) is lowerbounded by $\inf
_{\mathsf{M}^{n}}\mathrm{Tr}\,G_{\theta}\left(  J_{\theta}^{\mathsf{M}^{n}%
}\right)  ^{-1}$, due to Schwartz's inequality. Achievability, in fact, also
holds. Define $\mathcal{E}_{\theta,n}^{L}:=\left\{  \mathsf{M}_{\theta}%
^{n},\,T_{n,\theta}^{L}\right\}  $ by $l_{\theta,i}^{\mathsf{M}^{n}%
\mathsf{,}L}:=\chi_{\left\{  \omega_{n};\left\Vert l_{\theta,i}^{\mathsf{M}%
}\right\Vert \leq L\right\}  }l_{\theta,i}^{\mathsf{M}^{n}}$, $\left(
J_{\theta}^{\mathsf{M}^{n},L}\right)  _{i,j}:=\mathrm{\,E}_{\theta
}^{\mathsf{M}^{n}}\,l_{\theta,i}^{\mathsf{M}^{n}\mathsf{,}L}l_{\theta
,j}^{\mathsf{M}^{n}\mathsf{,}L}$, and $T_{n,\theta}^{L,j}:=\sum_{i=1}%
^{m}\left[  \left(  J_{\theta}^{\mathsf{M}^{n},L}\right)  ^{-1}\right]
^{ij}l_{\theta,i}^{\mathsf{M}^{n},L}+\theta^{j}$. Obviously, $\mathcal{E}%
_{\theta,n}^{L}$ satisfies (E'). Therefore, due to Lemma\thinspace
\ref{lem:est-cont}, \newline$\quad\partial_{j}\mathrm{E}_{\theta}%
^{\mathsf{M}^{n}}\,l_{\theta_{0},i}^{\mathsf{M}^{n}\mathsf{,}L}=\,\int
l_{\theta,i}^{\mathsf{M}^{n}\mathsf{,}L}\mathrm{tr}\,\partial_{j}\rho_{\theta
}^{\otimes n}M^{n}\left(  \mathrm{d}\omega_{n}\right)  =\int l_{\theta
,i}^{\mathsf{M}^{n}\mathsf{,}L}l_{\theta,j}^{\mathsf{M}^{n}}\mathrm{tr}%
\,\rho_{\theta}^{\otimes n}M^{n}\left(  \mathrm{d}\omega_{n}\right)  =\int
l_{\theta,i}^{\mathsf{M}^{n}\mathsf{,}L}l_{\theta,j}^{\mathsf{M}^{n}%
\mathsf{,}L}\mathrm{tr}\,\rho_{\theta}M^{n}\left(  \mathrm{d}\omega
_{n}\right)  =J_{\theta}^{\mathsf{M}^{n},L}$.

\noindent Therefore, $\mathcal{E}_{\theta,n}^{L}$ also satisfies
(\ref{asym-unbiased}). Also, $\mathrm{Tr}\,G_{\theta}\mathrm{V}_{\theta
}\left[  \mathcal{E}_{\theta,n}^{L}\right]  =\,\mathrm{Tr}\,G_{\theta}\left(
J_{\theta}^{\mathsf{M}^{n},L}\right)  ^{-1}$. Hence, it remains to show
$\underset{L\rightarrow\infty}{\lim}\mathbf{v}^{T}J_{\theta}^{\mathsf{M}%
^{n},L}\mathbf{v}=\mathbf{v}^{T}J_{\theta}^{\mathsf{M}^{n}}\mathbf{v}$,
$\forall\mathbf{v}$. This is true since $\mathbf{v}^{T}J_{\theta}%
^{\mathsf{M}^{n}}\mathbf{v}=\int\left(  \sum_{i=1}^{m}v^{i}l_{\theta
,i}^{\mathsf{M}^{n}}\right)  ^{2}P_{\theta}^{\mathsf{M}^{n}}\left(
\mathrm{d}\omega_{n}\right)  <\infty$.

Therefore, logarithmic derivative and the Fisher information can be used to
represent $C^{Q}\left(  G_{\theta},\mathcal{M}\right)  $. However, it is not
possible to show their chain rule, which is at the heart of the argument for
one-way semi-classical setting in Hayashi and Matsumoto\thinspace(1998).
Therefore, in the next subsection, we use somewhat different method to prove
the asymptotic Cramer-Rao type bound in the semi-classical setting.

\vskip-7mm

\section{\noindent{\protect\normalsize 4. SEMI-CLASSICAL MEASUREMENT}}

\vskip-3mm

\subsection*{\noindent{\protect\normalsize 4.1 DEFINITIONS, REGULARITY
CONDITIONS, AND\ MAIN THEOREM}}

\vskip-3mm An important subclass of measurements is \textit{semi-classical
}measurements, which are composed adoptively in $R_{n}$ (\thinspace$<\infty$)
rounds. At each round, we measure each sample separately, and the measurements
of the $r$-th round depend on the previously obtained data. We denote by
$z_{r,\kappa}$($\in$ $%
%TCIMACRO{\U{211d} }%
%BeginExpansion
\mathbb{R}
%EndExpansion
^{l}$) the data obtained at the $r$-th round from $\kappa$-th sample, and
$\mathbf{z}_{r}$ is the data $\left(  z_{1,1},z_{1,2},\cdots,z_{r,n}\right)  $
obtained up to the $r$-th round. The measurement acting in the $r$-th round on
$\kappa$-th sample is denoted by $\mathsf{M}_{r,\kappa}^{\mathbf{z}_{r-1}}$.
Without loss of generality, we suppose that in the first round the measurement
is chosen deterministically. Rigorous mathematical description of such a
process is given in the following subsection.

Define $\mathbf{z}_{r,\kappa}^{\downarrow}:=$ $\left(  \mathbf{z}%
_{r-1},z_{r,1},\cdots z_{r,\kappa}\right)  $, and \ $\mathbf{z}_{r,\kappa
}^{\uparrow}:=$ $\left(  z_{r,\kappa},z_{r,\kappa+1},\cdots,z_{R_{n}%
,n}\right)  $. $\mathfrak{B}_{r}$, $\mathfrak{B}_{r,\kappa}$, $\mathfrak{B}%
_{r,\kappa}^{\downarrow}$ and $\mathfrak{B}_{r,\kappa}^{\uparrow}$ is the
totality of Borel sets over the space where $\mathbf{z}_{r}$ , $z_{r,\kappa}$,
$\mathbf{z}_{r,\kappa}^{\downarrow}$, and $\mathbf{z}_{r,\kappa}^{\uparrow}$
is living in, respectively. The instrument corresponding to successive
application of $\mathsf{M}_{1,1}$, $\mathsf{M}_{1,2}$, $\cdots$,
$\mathsf{M}_{r,\kappa}^{\mathbf{z}_{r-1}}$ and $\mathsf{M}_{r,\kappa
}^{\mathbf{z}_{r-1}}$, $\mathsf{M}_{r,\kappa+1}^{\mathbf{z}_{r-1}}$, $\cdots$,
$\mathsf{M}_{R_{n},n}^{\mathbf{z}_{R_{n}-1}}$ is denoted by $\mathsf{M}%
_{r,\kappa}^{\downarrow}$ and $\mathsf{M}_{r,\kappa}^{\uparrow\,\mathbf{z}%
_{r-1}}$, respectively. Note that they depend on $n$, although we do not
denote the fact explicitly for the sake of simplicity. Note also that $R_{n}$
is arbitrary but finite.

Note that in other literatures such as Hayashi and Matsumoto\thinspace(1999),
the term `semi-classical measurement' refers to more restricted class of
measurement, which is called \textit{one-way semi-classical} measurement in
this paper. The restriction is that in $r$-th round, we measure $r$-th sample
only (Hence, $R_{n}=n$).

The \textit{asymptotic semi-classical Cramer-Rao} type bound $C_{\theta
}\left(  G_{\theta},\mathcal{M}\right)  $ is defined by \newline$\quad
\quad\quad C_{\theta}\left(  G_{\theta},\mathcal{M}\right)  :=\underset
{n\rightarrow\infty}{\varlimsup}\inf\left\{  n\mathrm{Tr}\,G_{\theta
}\mathrm{MSE}_{\theta}\left[  \mathcal{E}_{n}\right]  \,;\mathsf{M}^{n}\text{
in }\mathcal{H}^{\otimes n}\text{, semi-classical, (\ref{asym-unbiased}),
(E)}\right\}  .$\newline

\begin{theorem}
\label{th:cr-cl}Suppose (M.1-3) hold. Then, \ \newline%
$\ \ \ \ \ \ \ \ \ \ \ \ \ C_{\theta}\left(  G_{\theta},\mathcal{M}\right)
=\inf\left\{  \mathrm{Tr}\,G_{\theta}\mathrm{V}_{\theta}\left[  \mathcal{E}%
_{\theta_{0},1}\right]  \,\text{\thinspace};\text{ }\mathsf{M}^{1}\text{in
}\mathcal{H}\text{, (\ref{locally-unbiased}), (E') with }n=1\right\}  ,$

$\ \ \ \ \ \ \ \ \ \ \ \ \ \ \ \ \ \ \ \ \ \ \ =\inf\left\{  \mathrm{Tr}%
\,G_{\theta}\mathrm{V}_{\theta}\left[  \mathcal{E}_{\theta_{0},1}\right]
\,\text{\thinspace};\text{ }\mathsf{M}^{1}\text{in }\mathcal{H}\text{,
(\ref{locally-unbiased}), (E) with }n=1\right\}  .$
\end{theorem}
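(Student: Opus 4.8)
The plan is to prove both displayed equalities by establishing, for the (E$'$)-version, matching upper and lower bounds, and then deriving the (E)-version from it exactly as in Subsection 3.2. Recall that (E$'$)$\Longrightarrow$(E), so enlarging the feasible set from (E$'$) to (E) can only decrease the infimum; the reverse inequality is obtained by the truncation-and-rescaling reduction of Subsection 3.2 (set $S_{\theta,n}^{L}:=T_{\theta,n}$ on $\{\|T_{\theta,n}-\theta\|\le L\}$ and rescale), which modifies only the estimator function $T_{n}$ and therefore sends a semi-classical measurement $\mathsf{M}^{n}$ to a semi-classical one. Hence the two lines of the theorem coincide once either is identified with $C_{\theta}(G_{\theta},\mathcal{M})$.

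For the upper bound $C_{\theta}(G_{\theta},\mathcal{M})\le\inf\{\mathrm{Tr}\,G_{\theta}\mathrm{V}_{\theta}[\mathcal{E}_{\theta_{0},1}];(\ref{locally-unbiased}),(\mathrm{E}')\text{ with }n=1\}$ I would simply specialize the two-stage construction of Subsection 3.3 to $n_{1}=1$. With $n_{1}=1$ the estimator $\mathcal{E}_{n}^{1}$ of (\ref{def-est}) first measures $n_{0}$ samples one at a time by $\mathsf{\tilde M}$ (legitimate by (M.3.3)) to form $\theta_{0}=\tilde T_{n_{0}}$, and then applies the single-copy measurement $\mathsf{M}_{\theta_{0}}^{1}$ separately to each remaining sample; this is a two-round semi-classical measurement. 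Lemma\thinspace\ref{lem:est-compose} with $n_{1}=1$ gives $\lim_{n_{2}\to\infty}n\,\mathrm{Tr}\,G_{\theta}\mathrm{MSE}_{\theta}[\mathcal{E}_{n}^{1}]\le\varlimsup_{\theta_{0}\to\theta}\mathrm{Tr}\,G_{\theta}\mathrm{V}_{\theta}[\mathcal{E}_{\theta_{0},1}]$, Lemma\thinspace\ref{lem:est-asym-unbiased} shows $\mathcal{E}_{n}^{1}$ is admissible (satisfies (E) and (\ref{asym-unbiased})), and Lemma\thinspace\ref{lem:trGV-cont} at $n=1$ turns the $\varlimsup_{\theta_{0}\to\theta}$ of the optimized value into the infimum at $\theta$, i.e.\ the right-hand side. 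This is verbatim the `$\le$' half of (\ref{cr-q1}) restricted to $n_{1}=1$.

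The substance is the converse $C_{\theta}(G_{\theta},\mathcal{M})\ge\inf\{\mathrm{Tr}\,G_{\theta}\mathrm{V}_{\theta}[\mathcal{E}_{\theta_{0},1}];(\ref{locally-unbiased}),(\mathrm{E}')\text{ with }n=1\}$. First I would reduce asymptotic unbiasedness to local unbiasedness by the Subsection 3.2 argument, which again alters only $T_{n}$ and so keeps $\mathsf{M}^{n}$ semi-classical; it then suffices to bound $n\,\mathrm{Tr}\,G_{\theta}\mathrm{V}_{\theta}[\mathcal{E}_{n}]$ below for every locally unbiased estimator with semi-classical measurement. Here I would pass to the classical Fisher information $J_{\theta}^{\mathsf{M}^{n}}$ of Subsection 3.5: by the classical Cramer--Rao inequality $\mathrm{Tr}\,G_{\theta}\mathrm{V}_{\theta}[\mathcal{E}_{n}]\ge\mathrm{Tr}\,G_{\theta}(J_{\theta}^{\mathsf{M}^{n}})^{-1}$, so it is enough to show that $\tfrac{1}{n}J_{\theta}^{\mathsf{M}^{n}}$ is an achievable single-copy Fisher information. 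The heart of the matter is that the data of a semi-classical scheme are produced round by round and sample by sample, each individual instrument $\mathsf{M}_{r,\kappa}^{\mathbf{z}_{r-1}}$ acting on a single copy; the chain rule would then give $J_{\theta}^{\mathsf{M}^{n}}=\sum_{r,\kappa}\mathrm{E}_{\theta}[J_{\theta}^{\mathsf{M}_{r,\kappa}^{\mathbf{z}_{r-1}}}]$, an average of single-copy Fisher informations. Since the set of achievable single-copy Fisher informations is convex (mix two POVMs and record the label, so the Fisher information mixes linearly), $\tfrac{1}{n}J_{\theta}^{\mathsf{M}^{n}}$ would lie in its closure, whence $\mathrm{Tr}\,G_{\theta}(\tfrac{1}{n}J_{\theta}^{\mathsf{M}^{n}})^{-1}\ge\inf_{\mathsf{M}^{1}}\mathrm{Tr}\,G_{\theta}(J_{\theta}^{\mathsf{M}^{1}})^{-1}$, which by the closing remarks of Subsection 3.5 equals the right-hand side.

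The hard part will be making this decomposition rigorous, which -- as flagged at the end of Subsection 3.5 -- cannot be done directly under (M.1)--(M.2): the conditional logarithmic derivatives need not be jointly measurable or square-integrable, and in later rounds $\mathsf{M}_{r,\kappa}^{\mathbf{z}_{r-1}}$ acts on a \emph{post-measurement} state rather than on a fresh $\rho_{\theta}$, so the naive per-sample additivity fails. The crude Loewner bound $\mathbf{v}^{T}J_{\theta}^{\mathsf{M}^{n}}\mathbf{v}\le n\,\mathbf{v}^{T}J_{\theta}^{S,1}\mathbf{v}$ from the proof of Lemma\thinspace\ref{lemfisher-finite} is available but not tight for $m\ge2$, so it will not suffice. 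I would therefore avoid the exact chain rule and run the estimate on the truncated quantities $J_{\theta}^{\mathsf{M}^{n},L}$ of Subsection 3.5, whose logarithmic derivatives are bounded, prove the additivity-plus-convexity inequality at this truncated level, and then let $L\to\infty$ using $\lim_{L\to\infty}\mathbf{v}^{T}J_{\theta}^{\mathsf{M}^{n},L}\mathbf{v}=\mathbf{v}^{T}J_{\theta}^{\mathsf{M}^{n}}\mathbf{v}$ together with the uniform finiteness $\sup_{\mathsf{M}^{n}}\mathbf{v}^{T}J_{\theta}^{\mathsf{M}^{n}}\mathbf{v}<\infty$ of Lemma\thinspace\ref{lemfisher-finite}. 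A cleaner alternative, better suited to the post-measurement-state issue and to infinite dimensions, is a direct recursion on the variance: condition on $\mathbf{z}_{r-1}$, apply the single-copy locally unbiased bound to the conditional model generated by each $\mathsf{M}_{r,\kappa}^{\mathbf{z}_{r-1}}$, and accumulate through the law of total variance, thereby sidestepping logarithmic derivatives entirely. I expect the correct handling of the adaptive, multi-round dependence to be the genuine obstacle, with the convexity step being comparatively routine once a legitimate additive decomposition is in hand.
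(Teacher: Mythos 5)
Your upper-bound half (the two-stage scheme of Subsection 3.3 specialized to $n_{1}=1$, plus Lemmas\thinspace\ref{lem:est-compose}--\ref{lem:trGV-cont}) and your reduction of the (E)-version to the (E')-version by truncation are exactly the paper's argument and are fine. The genuine gap is in the lower bound. Your main route --- classical Cramer--Rao plus the chain rule $J_{\theta}^{\mathsf{M}^{n}}=\sum_{r,\kappa}\mathrm{E}_{\theta}\bigl[J_{\theta}^{\mathsf{M}_{r,\kappa}^{\mathbf{z}_{r-1}}}\bigr]$ plus convexity of achievable single-copy Fisher informations --- is precisely the route the paper shows to be unavailable under (M.1)--(M.3): as stated at the end of Subsection 3.5 and in Subsection 4.4, one cannot even prove existence of $\partial_{i}P_{\theta}^{\mathsf{M}_{r,\kappa+1}^{\uparrow,\mathbf{z}_{r}}}\left(\Delta^{\prime}\right)$, so the conditional logarithmic derivative, and hence each summand of the chain rule, is simply undefined; only the symmetrized Leibniz-rule combination of derivatives exists. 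Your repair (a) (working with the truncated $J_{\theta}^{\mathsf{M}^{n},L}$) does not touch this obstruction: truncation cures integrability of a logarithmic derivative that exists, not non-existence of the derivative of the conditional law. Moreover, even granting a chain rule, the conditional summands are Fisher informations of measurements on \emph{post-measurement} states, whose $\theta$-dependence is not that of a fresh $\rho_{\theta}$, so they are not achievable single-copy Fisher informations and convexity gives nothing; you noted this obstacle yourself but did not resolve it.

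Your repair (b), a recursion via the law of total variance, points in the general direction of the paper's actual proof but omits every idea that makes it work. The paper needs: (i) Lemma\thinspace\ref{lem:ec-weak} and the Leibniz rule, Lemma\thinspace\ref{lem:leibniz}, as the rigorous substitute for the chain rule (this is where the hypothesis $\rho_{\theta}>0$, or one-wayness, enters); (ii) the martingale-difference decomposition $f_{\theta_{0},r,\kappa}$ with its orthogonality relations, which splits both the local-unbiasedness constraint and the variance additively over rounds and samples (Lemma\thinspace\ref{lem:semi-classical}); (iii) the \emph{fake-ensemble} construction, which replaces the conditional measurement on post-measurement states by a measurement on $\rho_{\theta}\otimes\rho_{\theta_{0}}^{\otimes n-1}$ --- a legitimate single-copy object because the frozen factors are known states and can be absorbed into the apparatus; (iv) the randomization of Lemma\thinspace\ref{lem:semi-classical-2}, compressing the resulting independent scheme into one estimator on a single sample with $\mathrm{V}_{\theta_{0}}\left[\mathcal{E}_{\theta_{0},1}^{\prime}\right]=n\mathrm{V}_{\theta_{0}}\left[\mathcal{F}_{n,\theta_{0}}\right]$; and (v) Lemma\thinspace\ref{lem:approx-variance}, the perturbation $\rho_{\theta,\varepsilon}=\left(1-\varepsilon\right)\rho_{\theta}+\varepsilon\sigma$, which removes the assumption $\rho_{\theta}>0$ --- an assumption your sketch never addresses, even though the conditional-expectation continuity you would need fails without it (or without one-wayness). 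Since you explicitly defer ``the correct handling of the adaptive, multi-round dependence,'' which is exactly items (i)--(iii), the core of the lower-bound proof is absent from your proposal.
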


\vskip-5mm

\subsection*{{\protect\normalsize \noindent4.2 ADAPTIVE MEASUREMENT}}

\vskip-3mm In this subsection, we give mathematically rigorous account on a
composite measurement $\mathsf{NM}$ of measurement $\mathsf{M}$ followed by
$\mathsf{N}^{\omega}$, where $\mathsf{N}^{\omega}$ is composed depending on
the data $\omega\in%
%TCIMACRO{\U{211d} }%
%BeginExpansion
\mathbb{R}
%EndExpansion
^{l}$ from $\mathsf{M}$. More specifically, $\omega\rightarrow\mathsf{N}%
^{\omega}\left[  \Delta^{\prime}\right]  $ can be approximated by a sequence
of simple functions except for $\omega\in\Delta$ where $M\left(
\Delta\right)  =0$, so that the function is strongly measurable with respect
to $P_{\rho}^{\mathsf{M}}\mathrm{\,}$\ for any $\rho$. We show this composite
$\mathsf{NM}$ can be described using an instrument. (The contents of this
subsection should be well-known to specialists of the field of measurement
theory. The author, however, could not find a proper reference.)

The key fact is Theorem\thinspace4.5 of Ozawa\thinspace(1985), or that there
is a family of density operators $\left\{  \rho_{\omega}^{\mathsf{M}}\right\}
_{\omega\in%
%TCIMACRO{\U{211d} }%
%BeginExpansion
\mathbb{R}
%EndExpansion
^{l}}$ (\textit{a posteriori states}) with $\int_{\omega\in\Delta}%
\mathrm{tr}\,A\rho_{\omega}^{\mathsf{M}}P_{\rho}^{\mathsf{M}}\left(
\mathrm{\,d}\,\omega\right)  =$ $\mathrm{tr}\,A\,\mathsf{M}\left[
\Delta\right]  \left(  \rho\right)  $ ($\forall A\in\mathcal{B}\left(
\mathcal{H}\right)  $). Since $\mathcal{B}\left(  \mathcal{H}\right)  $ is the
dual of $\mathcal{\tau}c\left(  \mathcal{H}\right)  $ with the pairing
$\left\langle \rho,A\right\rangle :=\mathrm{tr}\,\rho A$ (Theorem\thinspace
II.7.2 of Holevo\thinspace(1982)), Ozawa's statement is equivalent to the weak
measurability and the existence of Pettis integral of the function
$\omega\rightarrow\rho_{\omega}^{\mathsf{M}}$. As summarized in the end of
Subsection\thinspace2.1, $\omega\rightarrow\rho_{\omega}^{\mathsf{M}}$ in fact
is strongly measurable. Also, since $\int_{\omega\in\Delta}\left\Vert
\rho_{\omega}^{\mathsf{M}}\right\Vert _{1}P_{\rho}^{\mathsf{M}}\left(
\mathrm{\,d}\,\omega\right)  =1<\infty$, the Bochner integral $\int_{\omega
\in\Delta}\rho_{\omega}^{\mathsf{M}}P_{\rho}^{\mathsf{M}}\left(
\mathrm{\,d}\,\omega\right)  =$ $\mathsf{M}\left[  \Delta\right]  \left(
\rho\right)  $ is convergent.

First, we show $P_{\rho}^{\mathsf{NM}}$ is well-defined. Since \ $\omega
\rightarrow\mathsf{N}^{\omega}\left[  \Delta^{\prime}\right]  $ and
$\omega\rightarrow\rho_{\omega}^{\mathsf{M}}$ are strongly measurable, they
can be approximated by simple functions. Therefore, $\omega\rightarrow
\mathrm{tr}\,\mathsf{N}^{\omega}\left[  \Delta^{\prime}\right]  \rho_{\omega
}^{\mathsf{M}}$ is a measurable function for any $\Delta^{\prime}%
\in\mathfrak{B}\left(
%TCIMACRO{\U{211d} }%
%BeginExpansion
\mathbb{R}
%EndExpansion
^{l}\right)  $, and $P_{\rho}^{\mathsf{NM}}\left(  \Delta\times\Delta^{\prime
}\right)  :=\int_{\omega\in\Delta}$ $\mathrm{tr}\,\mathsf{N}^{\omega}\left[
\Delta^{\prime}\right]  \rho_{\omega}^{\mathsf{M}}P_{\rho}^{\mathsf{M}}\left(
\mathrm{\,d}\,\omega\right)  $ is well-defined and $\sigma$-additive.
Therefore, $P_{\rho}^{\mathsf{NM}}$ can be extended to $\mathfrak{B}\left(
%TCIMACRO{\U{211d} }%
%BeginExpansion
\mathbb{R}
%EndExpansion
^{l}\times%
%TCIMACRO{\U{211d} }%
%BeginExpansion
\mathbb{R}
%EndExpansion
^{l}\right)  $ due to Hopf's extension theorem. Moreover, with $\tilde{\Delta
}_{\omega}:=\{\omega^{\prime};\left(  \omega,\omega^{\prime}\right)  \in
\tilde{\Delta}\}$, $\int_{\omega}\mathrm{tr}\,\mathsf{N}^{\omega}\left[
\tilde{\Delta}_{\omega}\right]  \rho_{\omega}^{\mathsf{M}}\,P_{\rho
}^{\mathsf{M}}\left(  \mathrm{\,d}\,\omega\right)  $ exists and equals
$P_{\rho}^{\mathsf{NM}}\left\{  \tilde{\Delta}\right\}  $ for any Borel set
$\tilde{\Delta}$; Let $\mathfrak{D}$ be the totality of $\tilde{\Delta}$ such
that the assertion is true. Obviously, $\mathfrak{D}$ is a Dynkin system, and
contains cylinder sets. Therefore, due to Dynkin's lemma, $\mathfrak{D=B}%
\left(
%TCIMACRO{\U{211d} }%
%BeginExpansion
\mathbb{R}
%EndExpansion
^{l}\times%
%TCIMACRO{\U{211d} }%
%BeginExpansion
\mathbb{R}
%EndExpansion
^{l}\right)  $.

Next, we show that $\rho_{\tilde{\Delta}}^{\mathsf{NM}}$ is well-defined.
Since \ $\int_{\omega}\left\Vert \mathsf{N}^{\omega}\left[  \tilde{\Delta
}_{\omega}\right]  \rho_{\omega}^{\mathsf{M}}\right\Vert _{1}P_{\rho
}^{\mathsf{M}}\left(  \mathrm{\,d}\,\omega\right)  \leq1$, the Bochner
integral $\mathsf{NM}\left[  \tilde{\Delta}\right]  \left(  \rho\right)
:=\int_{\omega}\mathsf{N}^{\omega}\left[  \tilde{\Delta}_{\omega}\right]
\rho_{\omega}^{\mathsf{M}}\,P_{\rho}^{\mathsf{M}}\left(  \mathrm{\,d}%
\,\omega\right)  $ is convergent. Also, its trace equals $P_{\rho
}^{\mathsf{NM}}\left\{  \tilde{\Delta}\right\}  $, since $\mathrm{tr}\,$\ and
$\int$ can be exchanged due to Fubini's theorem. $\mathsf{\ }$

In addition, $\rho\rightarrow\mathsf{NM}\left[  \tilde{\Delta}\right]  \left(
\rho\right)  $ is affine and completely positive, as is proved in the
following. Observe Bochner integral $\int_{\omega}\mathsf{N}^{\omega}\left[
\tilde{\Delta}_{\omega}\right]  P_{\rho}^{\mathsf{M}}\left(  \mathrm{\,d}%
\,\omega\right)  $ in $\mathcal{B}\left(  \tau c\left(  \mathcal{H}\right)
\right)  $ is well-defined in terms of $\left\Vert \cdot\right\Vert _{cb}$,
due to \newline$\int_{\omega}\left\Vert \mathsf{N}^{\omega}\left[
\tilde{\Delta}_{\omega}\right]  \right\Vert _{cb}P_{\rho}^{\mathsf{M}}\left(
\mathrm{\,d}\,\omega\right)  \leq1$. Therefore, there exist sequences of
families $\left\{  \mathsf{N}_{j}^{\left(  k\right)  }\right\}  _{j}$ and
$\left\{  \Delta_{j}^{\left(  k\right)  }\right\}  _{j}$ ($k=1$,$\cdots$,
$\infty$) of completely positive maps and Borel sets, such that for any
$\rho$
\begin{align*}
&  \left\Vert \mathsf{NM}\left[  \tilde{\Delta}\right]  \left(  \rho\right)
-\sum_{j}\mathsf{N}_{j}^{\left(  k\right)  }\mathsf{M}\left[  \Delta
_{j}^{\left(  k\right)  }\right]  \left(  \rho\right)  \right\Vert
_{1}=\left\Vert \int_{\omega}\left\{  \,\mathsf{N}^{\omega}\left[
\tilde{\Delta}_{\omega}\right]  -\sum_{j}\mathsf{N}_{j}^{\left(  k\right)
}\chi_{\Delta_{j}^{\left(  k\right)  }}\right\}  \rho_{\omega}^{\mathsf{M}%
}P_{\rho}^{\mathsf{M}}\left(  \mathrm{\,d}\,\omega\right)  \right\Vert _{1}\\
&  \leq\int_{\omega}\,\left\Vert \left\{  \mathsf{N}^{\omega}\left[
\tilde{\Delta}_{\omega}\right]  -\sum_{j}\mathsf{N}_{j}^{\left(  k\right)
}\chi_{\Delta_{j}^{\left(  k\right)  }}\right\}  \rho_{\omega}^{\mathsf{M}%
}\right\Vert _{1}P_{\rho}^{\mathsf{M}}\left(  \mathrm{\,d}\,\omega\right)
\leq\int_{\omega}\,\left\Vert \mathsf{N}^{\omega}\left[  \tilde{\Delta
}_{\omega}\right]  -\sum_{j}\mathsf{N}_{j}^{\left(  k\right)  }\chi
_{\Delta_{j}^{\left(  k\right)  }}\right\Vert _{cb}P_{\rho}^{\mathsf{M}%
}\left(  \mathrm{\,d}\,\omega\right)  \rightarrow0,
\end{align*}%
%TCIMACRO{\TeXButton{vskip-lastskip}{\vskip-\lastskip}}%
%BeginExpansion
\vskip-\lastskip
%EndExpansion
as $k\rightarrow\infty$. Since $\sum_{j}\mathsf{N}_{j}^{\left(  k\right)
}\mathsf{M}\left[  \Delta_{j}^{\left(  k\right)  }\right]  $ is affine and
completely positive, we have our assertion.

Finally, $\tilde{\Delta}\rightarrow\mathsf{NM}\left[  \tilde{\Delta}\right]  $
is an instrument. Obviously, $\mathrm{\mathrm{tr}\,}\mathsf{NM}\left[
%TCIMACRO{\U{211d} }%
%BeginExpansion
\mathbb{R}
%EndExpansion
^{l}\times%
%TCIMACRO{\U{211d} }%
%BeginExpansion
\mathbb{R}
%EndExpansion
^{l}\right]  \left(  \rho\right)  =1$. Also,
\begin{align*}
\mathsf{NM}\left[  \bigcup_{j=1}^{\infty}\tilde{\Delta}_{j}\right]  \left(
\rho\right)   &  =\int_{\omega}\mathsf{N}^{\omega}\left[  \bigcup
_{j=1}^{\infty}\tilde{\Delta}_{j,\omega}\right]  \rho_{\omega}^{\mathsf{M}%
}P_{\rho}^{\mathsf{M}}\left(  \mathrm{\,d}\,\omega\right)  =\int_{\omega}%
\sum_{j=1}^{\infty}\mathsf{N}^{\omega}\left[  \tilde{\Delta}_{j,\omega
}\right]  \rho_{\omega}^{\mathsf{M}}P_{\rho}^{\mathsf{M}}\left(
\mathrm{\,d}\,\omega\right) \\
&  =\sum_{j=1}^{\infty}\int_{\omega}\mathsf{N}^{\omega}\left[  \tilde{\Delta
}_{j,\omega}\right]  \rho_{\omega}^{\mathsf{M}}P_{\rho}^{\mathsf{M}}\left(
\mathrm{\,d}\,\omega\right)  =\sum_{j=1}^{\infty}\mathsf{NM}\left[
\tilde{\Delta}_{j}\right]  \left(  \rho\right)  ,
\end{align*}%
%TCIMACRO{\TeXButton{vskip-lastskip}{\vskip-\lastskip}}%
%BeginExpansion
\vskip-\lastskip
%EndExpansion
where the third identity is due to Fubini's theorem of Bochner integral.
Therefore, $\tilde{\Delta}\rightarrow\mathsf{NM}\left[  \tilde{\Delta}\right]
$ is $\sigma$-additive in terms of strong operator topology in $\mathcal{B}%
\left(  \tau c\left(  \mathcal{H}\right)  \right)  $. \vskip-3mm

\subsection*{{\protect\normalsize \noindent4.3 LEIBNIZ RULE}}

\vskip-3mm In this subsection and the next, so far as no confusion is likely
to arise, we denote $P_{\theta}^{\mathsf{M}_{\theta}^{n}}$ and $\mathrm{E}%
_{\theta}^{\mathsf{M}_{\theta}^{n}}$ by $P_{\theta}$ and $\mathrm{E}_{\theta}%
$, respectively, where $\mathsf{M}_{\theta}^{n}$ is a semi-classical measurement.

\begin{lemma}
\label{lem:ec-weak}Suppose $T_{n}$ satisfies (E'). Suppose \ also (a)
$\rho_{\theta}>0$, $\forall\theta\in\Theta$, or (b) the estimator is one-way
semi-classical. Then, $\exists\Delta\in\mathfrak{B}_{n}$ s.t. $M^{n}%
(\Delta\mathsf{)=0}$ and $\mathrm{E}_{\theta}\left[  T_{n}|\,\mathfrak{B}%
_{r,\kappa}^{\downarrow}\right]  \left(  \mathbf{z}_{r,\kappa}^{\downarrow
}\right)  $ is continuous in $\theta$ for $\forall r$ $\forall\kappa$, if
$\mathbf{z}_{n}\notin\Delta$.
\end{lemma}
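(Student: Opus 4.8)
The plan is to write the conditional expectation as a ratio of two $\theta$-continuous quantities and to gather all degenerate data into one $M^{n}$-null set $\Delta$. Write $\mathbf{x}:=\mathbf{z}_{r,\kappa}^{\downarrow}$ for the conditioning data and $\mathbf{y}:=\mathbf{z}_{r,\kappa+1}^{\uparrow}$ for the integrated-out future, and let $M^{\uparrow}$ be the POVM of the upward instrument $\mathsf{M}_{r,\kappa+1}^{\uparrow\,\mathbf{z}_{r-1}}$ producing $\mathbf{y}$ (it depends on $\mathbf{x}$ but never on $\theta$). By the composite-instrument and a posteriori-state machinery of Subsection~4.2, for a.e.\ $\mathbf{x}$,
\[
\mathrm{E}_{\theta}\!\left[T_{n}|\,\mathfrak{B}_{r,\kappa}^{\downarrow}\right](\mathbf{x})=\frac{\int T_{n}(\mathbf{x},\mathbf{y})\,\mathrm{tr}\,M^{\uparrow}(\mathrm{d}\mathbf{y})\,\tilde{\rho}_{\theta,\mathbf{x}}}{\mathrm{tr}\,\tilde{\rho}_{\theta,\mathbf{x}}},
\]
where $\tilde{\rho}_{\theta,\mathbf{x}}$ is the \emph{unnormalized} a posteriori state, namely the Radon--Nikodym density---with respect to the reference $\mu(\cdot):=\mathrm{tr}\,\mathsf{M}_{r,\kappa}^{\downarrow}[\,\cdot\,](\sigma^{\otimes n})$, $\sigma>0$---of the $\tau c(\mathcal{H}^{\otimes n})$-valued measure $\Delta\mapsto\mathsf{M}_{r,\kappa}^{\downarrow}[\Delta](\rho_{\theta}^{\otimes n})$. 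Such a density exists in the Bochner sense since $\tau c(\mathcal{H}^{\otimes n})$, being the separable dual of the compact operators, has the Radon--Nikodym property; because $\sigma>0$ the reference dominates $\mathrm{tr}\,\mathsf{M}_{r,\kappa}^{\downarrow}[\,\cdot\,](\rho_{\theta}^{\otimes n})$, and $\mathrm{tr}\,\tilde{\rho}_{\theta,\mathbf{x}}$ is exactly the marginal density of $\mathbf{x}$.

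First I would dispatch the functional continuity of the numerator. By (E') the range of $T_{n}$ has diameter at most $a_{4,n}$; fixing a center $c$ in that range, for Hermitian $\xi,\xi'\in\tau c(\mathcal{H}^{\otimes n})$,
\[
\left|\int (T_{n}^{i}-c^{i})\,\mathrm{tr}\,M^{\uparrow}(\mathrm{d}\mathbf{y})\,(\xi-\xi')\right|\le a_{4,n}\int\left|\mathrm{tr}\,M^{\uparrow}(\mathrm{d}\mathbf{y})\,(\xi-\xi')\right|\le a_{4,n}\left\Vert \xi-\xi'\right\Vert _{1},
\]
the last inequality because $M^{\uparrow}$ is a POVM, so (spectrally decomposing $\xi-\xi'$) the total variation of the signed measure $\mathrm{tr}\,M^{\uparrow}(\cdot)(\xi-\xi')$ is at most $\left\Vert \xi-\xi'\right\Vert _{1}$. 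Thus the numerator is Lipschitz in the state with constant $a_{4,n}$ and the denominator trivially so, and it suffices to prove that $\theta\mapsto\tilde{\rho}_{\theta,\mathbf{x}}$ is $\left\Vert \cdot\right\Vert _{1}$-continuous for a.e.\ $\mathbf{x}$.

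The decisive point is that, for a.e.\ $\mathbf{x}$, the density $\tilde{\rho}_{\cdot,\mathbf{x}}$ is \emph{linear in the operator fed to the downward instrument and independent of $\theta$}: each a posteriori state in the chain is a Radon--Nikodym density of some completely positive $\rho\mapsto\mathsf{N}[\,\cdot\,](\rho)$, hence linear in $\rho$, and composing the finitely many (at most $R_{n}n$) steps gives $\tilde{\rho}_{\theta,\mathbf{x}}=\Phi_{\mathbf{x}}(\rho_{\theta}^{\otimes n})$ for a fixed bounded linear $\Phi_{\mathbf{x}}$. A single $\theta$-independent exceptional set is secured by verifying this identity on a countable $\left\Vert \cdot\right\Vert _{1}$-dense set of inputs (a countable union of null sets is null) and extending by boundedness, the domination $\mathrm{tr}\,\Phi_{\mathbf{x}}(\rho)\le\lambda$ for $0\le\rho\le\lambda\sigma^{\otimes n}$ coming from $\sigma>0$. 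Continuity of $\theta\mapsto\tilde{\rho}_{\theta,\mathbf{x}}$ then follows from (M.1), which makes $\theta\mapsto\rho_{\theta}^{\otimes n}$ continuous in $\left\Vert \cdot\right\Vert _{1}$, composed with the bounded $\Phi_{\mathbf{x}}$; this linear representation is precisely what yields pointwise (a.e.\ $\mathbf{x}$) continuity, which mere $L^{1}(\mu)$-continuity of $\theta\mapsto\tilde{\rho}_{\theta,\cdot}$ would not. With the functional estimate above, the ratio is continuous wherever $\mathrm{tr}\,\tilde{\rho}_{\theta,\mathbf{x}}>0$. In case (b) no upward sample has yet been measured, so $\Phi_{\mathbf{x}}$ merely sends $\rho_{\theta}^{\otimes n}$ to an unnormalized fresh tensor power of $\rho_{\theta}$ on the untouched samples, the marginal cancels between numerator and denominator, and continuity holds with no positivity.

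The genuine obstacle---and the reason the hypotheses split---is to make $\Delta$ \emph{$\theta$-independent}, i.e.\ to upgrade ``$P_{\theta}$-null'' (the points where a density in the chain is undefined or $\mathrm{tr}\,\tilde{\rho}_{\theta,\mathbf{x}}=0$) to ``$M^{n}(\Delta)=0$''. Under (a) $\rho_{\theta}>0$ this is free: exactly as in the proof of Lemma~\ref{lemfisher-finite}, $\mathrm{tr}\,\rho_{\theta}^{\otimes n}M^{n}(\Delta)=0$ forces $M^{n}(\Delta)=0$, so every $P_{\theta}$-null set is already $M^{n}$-null. Under (b) the one-way structure gives the explicit representation with a fresh tensor power of $\rho_{\theta}$ in place of $\tilde{\rho}_{\theta,\mathbf{x}}$, whose only degeneracy lies where $M^{\uparrow}$ itself is null---an $M^{n}$-null set by construction. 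Taking $\Delta$ to be the finite union, over all $r,\kappa$, of these $M^{n}$-null sets completes the proof.
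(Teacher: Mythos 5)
Your case (b) and your final conversion of $P_{\theta}$-null sets into $M^{n}$-null sets via $\rho_{\theta}>0$ coincide with the paper, and your reduction of the problem to trace-norm continuity of $\theta\mapsto\tilde{\rho}_{\theta,\mathbf{x}}$ for a.e.\ $\mathbf{x}$ (the Lipschitz estimate for the numerator under (E') is fine) is a legitimate reformulation. The gap is exactly the step you call decisive: the existence, for a.e.\ $\mathbf{x}$, of a single \emph{bounded} linear $\Phi_{\mathbf{x}}$ with $\tilde{\rho}_{\theta,\mathbf{x}}=\Phi_{\mathbf{x}}\left(  \rho_{\theta}^{\otimes n}\right)  $ for all $\theta$. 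Verifying linearity on a countable $\left\Vert \cdot\right\Vert _{1}$-dense set and then ``extending by boundedness'' does not go through, because the only bound you have is the domination bound $\mathrm{tr}\,\Phi_{\mathbf{x}}\left(  \rho\right)  \leq\lambda$ for $0\leq\rho\leq\lambda\sigma^{\otimes n}$: in infinite dimensions (which the paper explicitly covers) the cone of operators dominated by a multiple of $\sigma^{\otimes n}$ is a thin subset of the unit ball, trace-norm approximation gives no control on domination constants, and $\rho_{\theta}^{\otimes n}$ itself need not satisfy $\rho_{\theta}^{\otimes n}\leq\lambda\sigma^{\otimes n}$ for any finite $\lambda$. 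Worse, no bounded $\Phi_{\mathbf{x}}$ need exist at all: for the instrument on $L^{2}[0,1]$ that measures position, discards the system and re-prepares a fixed state, $\mathsf{M}\left[  \Delta\right]  \left(  \rho\right)  =\left(  \mathrm{tr}\,\rho\chi_{\Delta}\right)  \sigma_{0}$, the density at outcome $x$ is $\rho\mapsto\rho(x,x)\,\sigma_{0}/\sigma(x,x)$, and $\rho\mapsto\rho(x,x)$ is unbounded and only densely defined on $\tau c\left(  \mathcal{H}\right)  $ (normalized bumps of width $1/k$ centered at $x$ give value $k$; placing such bumps at all rational centers shows that any a.e.-defined representing family of functionals has infinite norm a.e.). Note that in this example the lemma itself holds trivially, since the offending scalar cancels between your numerator and denominator; this shows a correct proof must exploit the ratio (equivalently, set-integrals of the conditional expectation), not pointwise control of the unnormalized a posteriori states. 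Your argument is sound when $\dim\mathcal{H}<\infty$, where $\sigma>0$ gives $\rho\leq\lambda_{\min}\left(  \sigma^{\otimes n}\right)  ^{-1}\left\Vert \rho\right\Vert _{1}\sigma^{\otimes n}$, but not in the generality the lemma is stated.

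For contrast, the paper's case (a) never touches densities. With $g_{\theta,\theta^{\prime}}:=\mathrm{E}_{\theta^{\prime}}\left[  T_{n}|\,\mathfrak{B}_{r,\kappa}^{\downarrow}\right]  -\mathrm{E}_{\theta}\left[  T_{n}|\,\mathfrak{B}_{r,\kappa}^{\downarrow}\right]  $, it writes $\int_{\Delta}g_{\theta,\theta^{\prime}}\,\mathrm{d}P_{\theta}$, for an arbitrary Borel set $\Delta$, as a sum of two integrals of functions bounded by $a_{4,n}$ (this is where (E') enters) against $\mathrm{tr}\,\left(  \rho_{\theta^{\prime}}^{\otimes n}-\rho_{\theta}^{\otimes n}\right)  M^{n}\left(  \cdot\right)  $ and $\mathrm{tr}\,\left(  \rho_{\theta^{\prime}}^{\otimes n}-\rho_{\theta}^{\otimes n}\right)  M_{r,\kappa}^{\downarrow}\left(  \cdot\right)  $, so that (M.1) and (\ref{E-E'}) of Lemma~\ref{lem:est-cont} force these to $0$ as $\theta^{\prime}\rightarrow\theta$, uniformly in $\Delta$; the bounded convergence theorem then yields $g_{\theta,\theta^{\prime}}\rightarrow0$ at $P_{\theta}$-a.e.\ point, and $\rho_{\theta}>0$ finishes exactly as in your last step. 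Your instinct that mere $L^{1}(\mu)$-continuity of densities cannot give a.e.\ pointwise continuity was correct, but the remedy the paper uses is the uniform (E') bound plus set-wise convergence of conditional expectations, not a pointwise bounded disintegration of the instrument, which in infinite dimensions is generally unavailable.
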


\begin{proof}
The case (b) is due to the fact that $\mathsf{M}_{r,r}^{\uparrow
\mathbf{z}_{r-1}}$ acts on $\rho_{\theta}^{\otimes\left(  n-r+1\right)  }$:
\newline$\mathrm{E}_{\theta}\left[  T_{n}|\,\mathfrak{B}_{r,\kappa
}^{\downarrow}\right]  \left(  \mathbf{z}_{r,\kappa}^{\downarrow}\right)
=\int T_{n}\left(  \mathbf{z}_{r,r}^{\uparrow},\mathbf{z}_{r-1}\right)
\mathrm{tr}\,\rho_{\theta}^{\otimes\left(  n-r+1\right)  }M_{r,r}%
^{\uparrow\mathbf{z}_{r-1}}\left(  \mathrm{d}\mathbf{z}_{r,r}^{\uparrow
}\right)  $. Here, by abuse of notation, $M_{r,r}^{\uparrow\mathbf{z}_{r-1}%
}\left(  \Delta\right)  \in\mathcal{B}\left(  \mathcal{H}^{\otimes
n-r+1}\right)  $ in case of one-way semi-classical measurements.

The case (a) is proved as follows. \ Let $g_{\theta,\theta^{\prime}}\left(
\mathbf{z}_{r,\kappa}^{\downarrow}\right)  :=\mathrm{E}_{\theta^{\prime}%
}\left[  T_{n}|\,\mathfrak{B}_{r,\kappa}^{\downarrow}\right]  \left(
\mathbf{z}_{r,\kappa}^{\downarrow}\right)  -\mathrm{E}_{\theta}\left[
T_{n}|\,\mathfrak{B}_{r,\kappa}^{\downarrow}\right]  \left(  \mathbf{z}%
_{r,\kappa}^{\downarrow}\right)  $. Observe \ $\int_{\mathbf{z}_{r,\kappa
}^{\downarrow}\in\Delta}\mathrm{E}_{\theta}\left[  T_{n}|\,\mathfrak{B}%
_{r,\kappa}^{\downarrow}\right]  \left(  \mathbf{z}_{r,\kappa}^{\downarrow
}\right)  P_{\theta}^{\mathsf{M}_{r,\kappa}^{\downarrow}}\left(
\mathrm{d}\mathbf{z}_{r,\kappa}^{\downarrow}\right)  $, which equals\newline%
$\int_{\mathbf{z}_{r,\kappa}^{\downarrow}\in\Delta}T_{n}\left(  \mathbf{z}%
_{r,\kappa}^{\downarrow},\mathbf{z}_{r,\kappa+1}^{\uparrow}\right)
\mathrm{\mathrm{tr}\,}\rho_{\theta}^{\otimes n}M^{n}\left(  \mathrm{d}%
\mathbf{z}_{r,\kappa}^{\downarrow}\mathrm{d}\mathbf{z}_{r,\kappa+1}^{\uparrow
}\right)  $, is continuous in $\theta$ for any Borel set $\Delta$, due to
(\ref{E-E'}) of Lemma\thinspace\ref{lem:est-cont}. Also, observe
\begin{align*}
&  \int_{\mathbf{z}_{r,\kappa}^{\downarrow}\in\Delta}g_{\theta,\theta^{\prime
}}\left(  \mathbf{z}_{r,\kappa}^{\downarrow}\right)  P_{\theta}^{\mathsf{M}%
_{r,\kappa}^{\downarrow}}\left(  \mathrm{d}\mathbf{z}_{r,\kappa}^{\downarrow
}\right) \\
&  =\int_{\mathbf{z}_{r,\kappa}^{\downarrow}\in\Delta}\left\{  \mathrm{E}%
_{\theta^{\prime}}\left[  T_{n}|\,\mathfrak{B}_{r,\kappa}^{\downarrow}\right]
\left(  \mathbf{z}_{r,\kappa}^{\downarrow}\right)  P_{\theta^{\prime}%
}^{\mathsf{M}_{r,\kappa}^{\downarrow}}\left(  \mathrm{d}\mathbf{z}_{r,\kappa
}^{\downarrow}\right)  -\mathrm{E}_{\theta}\left[  T_{n}|\,\mathfrak{B}%
_{r,\kappa}^{\downarrow}\right]  \left(  \mathbf{z}_{r,\kappa}^{\downarrow
}\right)  P_{\theta}^{\mathsf{M}_{r,\kappa}^{\downarrow}}\left(
\mathrm{d}\mathbf{z}_{r,\kappa}^{\downarrow}\right)  \right\} \\
&  +\int_{\mathbf{z}_{r,\kappa}^{\downarrow}\in\Delta}\mathrm{E}%
_{\theta^{\prime}}\left[  T_{n}|\,\mathfrak{B}_{r,\kappa}^{\downarrow}\right]
\left(  \mathbf{z}_{r,\kappa}^{\downarrow}\right)  \left\{  P_{\theta
}^{\mathsf{M}_{r,\kappa}^{\downarrow}}\left(  \mathrm{d}\mathbf{z}_{r,\kappa
}^{\downarrow}\right)  -P_{\theta^{\prime}}^{\mathsf{M}_{r,\kappa}%
^{\downarrow}}\left(  \mathrm{d}\mathbf{z}_{r,\kappa}^{\downarrow}\right)
\right\}  .\\
&  =\int_{\mathbf{z}_{r,\kappa}^{\downarrow}\in\Delta}T_{n}\left(
\mathbf{z}_{r,\kappa}^{\downarrow},\mathbf{z}_{r,\kappa+1}^{\uparrow}\right)
\mathrm{\mathrm{tr}\,}\left(  \rho_{\theta^{\prime}}^{\otimes n}-\rho_{\theta
}^{\otimes n}\right)  M^{n}\left(  \mathrm{d}\mathbf{z}_{r,\kappa}%
^{\downarrow}\mathrm{d}\mathbf{z}_{r,\kappa+1}^{\uparrow}\right) \\
&  +\int_{\mathbf{z}_{r,\kappa}^{\downarrow}\in\Delta}\mathrm{E}%
_{\theta^{\prime}}\left[  T_{n}|\,\mathfrak{B}_{r,\kappa}^{\downarrow}\right]
\left(  \mathbf{z}_{r,\kappa}^{\downarrow}\right)  \mathrm{\mathrm{tr}%
\,}\left(  \rho_{\theta^{\prime}}^{\otimes n}-\rho_{\theta}^{\otimes
n}\right)  M_{r,\kappa}^{\downarrow}\left(  \mathrm{d}\mathbf{z}_{r,\kappa
}^{\downarrow}\mathrm{d}\mathbf{z}_{r,\kappa+1}^{\uparrow}\right)
\end{align*}%
%TCIMACRO{\TeXButton{vskip-lastskip}{\vskip-\lastskip}}%
%BeginExpansion
\vskip-\lastskip
%EndExpansion
Tending $\theta^{\prime}\rightarrow\theta$, the last end converges to $0$, and
so does the left most side. Hence, due to bounded convergence theorem, we have
$\int_{\mathbf{z}_{r,\kappa}^{\downarrow}\in\Delta}\left[  \lim_{\theta
^{\prime}\rightarrow\theta}g_{\theta,\theta^{\prime}}\left(  \mathbf{z}%
_{r,\kappa}^{\downarrow}\right)  \right]  P_{\theta}^{\mathsf{M}_{r,\kappa
}^{\downarrow}}\left(  \mathrm{d}\mathbf{z}_{r,\kappa}^{\downarrow}\right)
=0$ for any Borel set $\Delta$. Therefore, $g_{\theta,\theta^{\prime}}\left(
\mathbf{z}_{r,\kappa}^{\downarrow}\right)  =0$ for $P_{\theta}$-a.e. Since
$P_{\theta}\left(  \Delta\right)  =\mathrm{tr}\,\rho_{\theta}M\left(
\Delta\right)  =0$ $\Rightarrow\rho_{\theta}M\left(  \Delta\right)  =0$ due to
$\rho_{\theta}>0$, the proof is complete.
\end{proof}

\begin{lemma}
\label{lem:leibniz}Suppose (E') is satisfied. Suppose \ also (a) $\rho
_{\theta}>0$, $\forall\theta\in\Theta$, or (b) the estimator is one-way
semi-classical. Then, we have Leibniz rule:%
\begin{equation}
\left.  \partial_{i}\mathrm{E}_{\theta}\left[  T_{n}\right]  \right\vert
_{\theta=\theta_{0}}=\left[  \partial_{i}\mathrm{E}_{\theta}\mathrm{E}%
_{\theta_{0}}\left[  T_{n}|\,\mathfrak{B}_{r,\kappa}^{\downarrow}\right]
+\partial_{i}\mathrm{E}_{\theta_{0}}\mathrm{E}_{\theta}\left[  T_{n}%
|\,\mathfrak{B}_{r,\kappa}^{\downarrow}\right]  \,\right]  _{\theta=\theta
_{0}}. \label{leibniz}%
\end{equation}

\end{lemma}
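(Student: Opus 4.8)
The plan is to recognize (\ref{leibniz}) as the chain rule for a two–parameter function restricted to the diagonal. Using the \textit{a posteriori} states of Subsection~4.2, the conditional expectation $\mathrm{E}_{\theta'}[T_n\,|\,\mathfrak{B}_{r,\kappa}^{\downarrow}]$ is well defined, so by the law of iterated expectation $\mathrm{E}_\theta[T_n]=F(\theta,\theta)$, where I set $F(\theta,\theta'):=\mathrm{E}_\theta\,\mathrm{E}_{\theta'}[T_n\,|\,\mathfrak{B}_{r,\kappa}^{\downarrow}]$; here the outer $\theta$ enters only through the measure $P_\theta^{\mathsf{M}_{r,\kappa}^{\downarrow}}(\mathrm{d}\mathbf{z}_{r,\kappa}^{\downarrow})=\mathrm{tr}\,\rho_\theta^{\otimes n}M_{r,\kappa}^{\downarrow}(\mathrm{d}\mathbf{z}_{r,\kappa}^{\downarrow})$ on the conditioning data, while the inner $\theta'$ enters through the conditional expectation. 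The identity (\ref{leibniz}) is then exactly $\frac{\mathrm{d}}{\mathrm{d}\theta^i}F(\theta,\theta)\big|_{\theta=\theta_0}=\partial_i^{(1)}F(\theta_0,\theta_0)+\partial_i^{(2)}F(\theta_0,\theta_0)$, the first summand being $\partial_i\mathrm{E}_\theta\mathrm{E}_{\theta_0}[T_n\,|\,\mathfrak{B}_{r,\kappa}^{\downarrow}]|_{\theta=\theta_0}$ (outer argument varied) and the second $\partial_i\mathrm{E}_{\theta_0}\mathrm{E}_\theta[T_n\,|\,\mathfrak{B}_{r,\kappa}^{\downarrow}]|_{\theta=\theta_0}$ (inner argument varied).

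To carry this out I would telescope along the single coordinate $\theta^i$: writing
\[
F(\theta,\theta)-F(\theta_0,\theta_0)=\bigl[F(\theta,\theta)-F(\theta_0,\theta)\bigr]+\bigl[F(\theta_0,\theta)-F(\theta_0,\theta_0)\bigr],
\]
divide by $\theta^i-\theta_0^i$ and let $\theta\to\theta_0$. The second bracket tends to $\partial_i^{(2)}F(\theta_0,\theta_0)$ directly from the definition of the partial derivative, whose existence is the conditional analogue of Lemma~\ref{lem:est-cont-2}: differentiation under the now-fixed measure $P_{\theta_0}^{\mathsf{M}_{r,\kappa}^{\downarrow}}$ is legitimate because $T_n$ is bounded by (E') and the relevant derivatives are dominated by the finite measures built from $\Diamond_{i,\theta,n}^{(1)}$ of Lemma~\ref{lem:diamond}. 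For the first bracket I would apply the mean value theorem in the first argument, obtaining $\partial_i^{(1)}F(\xi,\theta)$ with $\xi$ between $\theta_0$ and $\theta$; since $\xi,\theta\to\theta_0$, this converges to $\partial_i^{(1)}F(\theta_0,\theta_0)$ provided $\partial_i^{(1)}F$ is jointly continuous at the diagonal point.

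Establishing that joint continuity is the crux. Writing $\partial_i^{(1)}F(\theta,\theta')=\int \mathrm{E}_{\theta'}[T_n\,|\,\mathfrak{B}_{r,\kappa}^{\downarrow}]\,\mathrm{tr}\,\partial_i\rho_\theta^{\otimes n}M_{r,\kappa}^{\downarrow}(\mathrm{d}\mathbf{z}_{r,\kappa}^{\downarrow})$, continuity in the outer variable $\theta$ is routine: the integrand is bounded by $a_{4,n}$ via (E') and $\partial_i\rho_\theta^{\otimes n}\to\partial_i\rho_{\theta_0}^{\otimes n}$ in $\left\Vert \cdot\right\Vert_1$ by (M.1). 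Continuity in the inner variable $\theta'$ is where the hypotheses (a) $\rho_\theta>0$ or (b) one-way semi-classical enter: Lemma~\ref{lem:ec-weak} furnishes a single set $\Delta$ with $M^n(\Delta)=0$ off which $\mathrm{E}_{\theta'}[T_n\,|\,\mathfrak{B}_{r,\kappa}^{\downarrow}]\to\mathrm{E}_{\theta_0}[T_n\,|\,\mathfrak{B}_{r,\kappa}^{\downarrow}]$ as $\theta'\to\theta_0$. Since $M^n(\Delta)=0$ forces $\mathrm{tr}\,\Diamond_{i,\theta,n}^{(1)}M^n(\Delta)=0$, this $M^n$-null set is null for the dominating measure as well, so dominated convergence (uniform bound $a_{4,n}$, finiteness from Lemma~\ref{lem:diamond}) yields the desired inner continuity. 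I expect this almost-everywhere convergence of the conditional expectations — precisely the output of Lemma~\ref{lem:ec-weak}, and the reason (a)/(b) must be imposed — to be the only genuinely delicate point; the remainder is the semi-classical transcription of the identity (\ref{leibniz-pre}) already employed in the basic setting.
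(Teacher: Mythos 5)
Your telescoping is the mirror image of the paper's, and your treatment of the cross term (mean value theorem, the (E') bound, domination by the finite measure $\mathrm{tr}\,\Diamond_{i,\theta_{0},n}^{(1)}M^{n}(\cdot)$, Lemma\thinspace\ref{lem:ec-weak} plus dominated convergence) is exactly the paper's machinery. The genuine gap is in your ``easy'' step, the second bracket. You claim that $\lim_{\theta\rightarrow\theta_{0}}\left[  F(\theta_{0},\theta)-F(\theta_{0},\theta_{0})\right]  /(\theta^{i}-\theta_{0}^{i})$ exists by a ``conditional analogue of Lemma\thinspace\ref{lem:est-cont-2}'', i.e.\ by differentiating under the now-fixed measure $P_{\theta_{0}}^{\mathsf{M}_{r,\kappa}^{\downarrow}}$. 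But Lemma\thinspace\ref{lem:est-cont-2} differentiates a \emph{fixed} integrand against the $\theta$-dependent measure $\mathrm{tr}\,\rho_{\theta}^{\otimes n}M^{n}(\cdot)$; in your second bracket the situation is reversed: the measure is fixed and the integrand $\theta\mapsto\mathrm{E}_{\theta}\left[  T_{n}|\,\mathfrak{B}_{r,\kappa}^{\downarrow}\right]  (\mathbf{z})$ carries the parameter. Pushing $\partial_{i}$ inside would require pointwise-in-$\mathbf{z}$ differentiability of the conditional expectation, and for adaptive semi-classical measurements in case (a) this is precisely what cannot be established: the conditional law of the future data given $\mathbf{z}_{r,\kappa}^{\downarrow}$ is built from a posteriori states, defined only up to $P_{\theta}$-null sets, and Subsection\thinspace4.4 of the paper states explicitly that one cannot prove existence of $\partial_{i}P_{\theta}^{\mathsf{M}_{r,\kappa+1}^{\uparrow,\mathbf{z}_{r}}}\left(  \Delta^{\prime}\right)  $. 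Your direct argument does survive in case (b), where $\mathrm{E}_{\theta}\left[  T_{n}|\,\mathfrak{B}_{r,\kappa}^{\downarrow}\right]  $ is an honest integral against $\mathrm{tr}\,\rho_{\theta}^{\otimes(n-r+1)}M_{r,r}^{\uparrow\mathbf{z}_{r-1}}(\cdot)$; but case (a) is the one Lemma\thinspace\ref{lem:semi-classical} actually needs, and there your step assumes the very differentiability whose unavailability is the reason this lemma is delicate.

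The paper avoids this by telescoping the other way: its directly-estimated term is $\left[  \mathrm{E}_{\theta^{\prime}}\mathrm{E}_{\theta_{0}}\left[  T_{n}|\,\mathfrak{B}_{r,\kappa}^{\downarrow}\right]  -\mathrm{E}_{\theta_{0}}\mathrm{E}_{\theta_{0}}\left[  T_{n}|\,\mathfrak{B}_{r,\kappa}^{\downarrow}\right]  \right]  /\left\Vert \theta^{\prime}-\theta_{0}\right\Vert $, in which only the measure varies (so Lemma\thinspace\ref{lem:est-cont} applies, the integrand being bounded by (E')); the inner-variation term is then \emph{forced} to converge, because the total difference quotient converges by Lemma\thinspace\ref{lem:est-cont-2}, and its limit is identified with $\partial_{i}\mathrm{E}_{\theta_{0}}\mathrm{E}_{\theta}\left[  T_{n}|\,\mathfrak{B}_{r,\kappa}^{\downarrow}\right]  _{\theta=\theta_{0}}$ by showing that swapping the outer measure from $\theta^{\prime}$ to $\theta_{0}$ costs nothing --- the same cross-term estimate you perform. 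Your proposal is repairable in exactly this spirit, using only what you already have: your first bracket converges to $\partial_{i}^{(1)}F(\theta_{0},\theta_{0})$ by your joint-continuity argument (which is sound), the total converges by Lemma\thinspace\ref{lem:est-cont-2}, hence the second bracket converges by subtraction, and its limit is, by the very definition of the derivative, $\partial_{i}\mathrm{E}_{\theta_{0}}\mathrm{E}_{\theta}\left[  T_{n}|\,\mathfrak{B}_{r,\kappa}^{\downarrow}\right]  _{\theta=\theta_{0}}$. Replace the appeal to a ``conditional analogue of Lemma\thinspace\ref{lem:est-cont-2}'' by this subtraction argument and your proof becomes correct, and essentially the paper's up to the order of telescoping.
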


%

%TCIMACRO{\TeXButton{vskip-lastskip}{\vskip-\lastskip}}%
%BeginExpansion
\vskip-\lastskip
%EndExpansion
(\ref{leibniz-pre}) is a special case of (\ref{leibniz}). To see this, observe
that the estimator (\ref{def-est}) is viewed as semi-classical considering the
quantum statistical model $\left\{  \rho_{\theta}^{\otimes n_{1}}\right\}
_{\theta\in\Theta}$.

\begin{proof}%
\begin{align}
&  \left.  \partial_{i}\mathrm{E}_{\theta}\left[  T_{n}\right]  \right\vert
_{\theta=\theta_{0}}=\left.  \partial_{i}\mathrm{E}_{\theta}\mathrm{E}%
_{\theta}\left[  T_{n}|\,\mathfrak{B}_{r,\kappa}^{\downarrow}\right]
\right\vert _{\theta=\theta_{0}}\nonumber\\
&  =\lim_{\theta^{\prime}\rightarrow\theta_{0}}\left[  \frac{\mathrm{E}%
_{\theta^{\prime}}\mathrm{E}_{\theta_{0}}\left[  T_{n}|\,\mathfrak{B}%
_{r,\kappa}^{\downarrow}\right]  -\mathrm{E}_{\theta_{0}}\mathrm{E}%
_{\theta_{0}}\left[  T_{n}|\,\mathfrak{B}_{r,\kappa}^{\downarrow}\right]
}{\left\Vert \theta^{\prime}-\theta_{0}\right\Vert }+\frac{\mathrm{E}%
_{\theta^{\prime}}\left[  \,\mathrm{E}_{\theta^{\prime}}\left[  T_{n}%
|\,\mathfrak{B}_{r,\kappa}^{\downarrow}\right]  -\mathrm{E}_{\theta_{0}%
}\left[  T_{n}|\,\mathfrak{B}_{r,\kappa}^{\downarrow}\right]  \,\right]
}{\left\Vert \theta^{\prime}-\theta_{0}\right\Vert }\right]  ,
\label{leibniz-proof}%
\end{align}
where the convention is that $\theta^{\prime j}=\theta_{0}^{j}$ ($j\neq i$).
The first term converges to $\partial_{i}\mathrm{E}_{\theta}\mathrm{E}%
_{\theta_{0}}\left[  T_{n}|\,\mathfrak{B}_{r,\kappa}^{\downarrow}\right]
_{\theta=\theta_{0}}$ , due \ $\left\vert \mathrm{E}_{\theta_{0}}\left[
T_{n}|\,\mathfrak{B}_{r,\kappa}^{\downarrow}\right]  \right\vert <a_{4,n}$ and
Lemma\thinspace\ref{lem:est-cont}. Observe that the second term should
converge due to the convergence of (\ref{leibniz-proof}) and the first term.
Moreover,
\begin{align*}
&  \left\vert \frac{\mathrm{E}_{\theta^{\prime}}\left[  \,\mathrm{E}%
_{\theta^{\prime}}\left[  T_{n}^{j}|\,\mathfrak{B}_{r,\kappa}^{\downarrow
}\right]  -\mathrm{E}_{\theta_{0}}\left[  T_{n}^{j}|\,\mathfrak{B}_{r,\kappa
}^{\downarrow}\right]  \,\right]  }{\left\Vert \theta^{\prime}-\theta
_{0}\right\Vert }-\frac{\mathrm{E}_{\theta_{0}}\left[  \,\mathrm{E}%
_{\theta^{\prime}}\left[  T_{n}^{j}|\,\mathfrak{B}_{r,\kappa}^{\downarrow
}\right]  -\mathrm{E}_{\theta_{0}}\left[  T_{n}^{j}|\,\mathfrak{B}_{r,\kappa
}^{\downarrow}\right]  \,\right]  }{\left\Vert \theta^{\prime}-\theta
_{0}\right\Vert }\right\vert \\
&  =\left\vert \int\left(  \mathrm{E}_{\theta^{\prime}}\left[  T_{n}%
^{j}|\,\mathfrak{B}_{r,\kappa}^{\downarrow}\right]  -\mathrm{E}_{\theta_{0}%
}\left[  T_{n}^{j}|\,\mathfrak{B}_{r,\kappa}^{\downarrow}\right]  \right)
\mathrm{tr}\,\left(  \frac{\rho_{\theta^{\prime}}^{\otimes n}-\rho_{\theta
_{0}}^{\otimes n}}{\left\Vert \theta^{\prime}-\theta_{0}\right\Vert }\right)
M_{\theta_{0},r,\kappa}^{\downarrow}\left(  \mathrm{d}\,\mathbf{z}_{r,\kappa
}^{\downarrow}\right)  \right\vert \\
&  \underset{\text{(i)}}{=}\left\vert \frac{\partial}{\partial\theta^{i}%
}\left[  \int\left(  \mathrm{E}_{\theta^{\prime}}\left[  T_{n}^{j}%
|\,\mathfrak{B}_{r,\kappa}^{\downarrow}\right]  -\mathrm{E}_{\theta_{0}%
}\left[  T_{n}^{j}|\,\mathfrak{B}_{r,\kappa}^{\downarrow}\right]  \right)
\mathrm{tr}\,\rho_{\theta}^{\otimes n}M_{\theta_{0},r,\kappa}^{\downarrow
}\left(  \mathrm{d}\,\mathbf{z}_{r,\kappa}^{\downarrow}\right)  \right]
_{\theta=\tilde{\theta}}\right\vert \\
&  \underset{\text{(ii)}}{=}\left\vert \int\left(  \mathrm{E}_{\theta^{\prime
}}\left[  T_{n}^{j}|\,\mathfrak{B}_{r,\kappa}^{\downarrow}\right]
-\mathrm{E}_{\theta_{0}}\left[  T_{n}^{j}|\,\mathfrak{B}_{r,\kappa
}^{\downarrow}\right]  \right)  \mathrm{tr}\,\partial_{i}\rho_{\tilde{\theta}%
}^{\otimes n}M_{\theta_{0},r,\kappa}^{\downarrow}\left(  \mathrm{d}%
\,\mathbf{z}_{r,\kappa}^{\downarrow}\right)  \right\vert \\
&  \leq\int\left\vert \mathrm{E}_{\theta^{\prime}}\left[  T_{n}^{j}%
|\,\mathfrak{B}_{r,\kappa}^{\downarrow}\right]  -\mathrm{E}_{\theta_{0}%
}\left[  T_{n}^{j}|\,\mathfrak{B}_{r,\kappa}^{\downarrow}\right]  \right\vert
\mathrm{tr}\,\Diamond_{i,\theta_{0},n}^{\left(  1\right)  }M_{\theta
_{0},r,\kappa}^{\downarrow}\left(  \mathrm{d}\,\mathbf{z}_{r,\kappa
}^{\downarrow}\right)  \underset{\text{(iii)}}{\rightarrow}0\quad\left(
\theta^{\prime}\rightarrow\theta_{0}\right)  .
\end{align*}
Here, (i) is due to mean value theorem, where $\tilde{\theta}$ is a point
between $\theta^{\prime}$ and $\theta_{0}$. (ii) is due to \newline$\left\vert
\mathrm{E}_{\theta^{\prime}}\left[  T_{n}^{j}|\,\mathfrak{B}_{r,\kappa
}^{\downarrow}\right]  -\mathrm{E}_{\theta_{0}}\left[  T_{n}^{j}%
|\,\mathfrak{B}_{r,\kappa}^{\downarrow}\right]  \right\vert \leq2a_{4,n}$ and
Lemma\thinspace\ref{lem:est-cont}. \ (iii) is due to Lemma\thinspace
\ref{lem:ec-weak}. Therefore, the second term equals \newline$\underset
{\theta^{\prime}\rightarrow\theta_{0}}{\lim}\frac{\mathrm{E}_{\theta_{0}%
}\left[  \,\mathrm{E}_{\theta^{\prime}}\left[  T_{n}^{j}|\,\mathfrak{B}%
_{r,\kappa}^{\downarrow}\right]  -\mathrm{E}_{\theta_{0}}\left[  T_{n}%
^{j}|\,\mathfrak{B}_{r,\kappa}^{\downarrow}\right]  \,\right]  }{\left\Vert
\theta^{\prime}-\theta_{0}\right\Vert }=\partial_{i}\mathrm{E}_{\theta_{0}%
}\mathrm{E}_{\theta}\left[  T_{n}^{j}|\,\mathfrak{B}_{r,\kappa}^{\downarrow
}\right]  \,_{\theta=\theta_{0}}$. After all, we have (\ref{leibniz}).
\end{proof}

\vskip-4mm

\subsection*{{\protect\normalsize \noindent}\noindent{\protect\normalsize 4.4
ON LOGARITHMIC\ DERIVATIVE}}

\vskip-4mm

Applying Leibniz rule to the indicator function, we can prove that $\left[
\partial_{i}\int_{\Delta}P_{\theta}^{\mathsf{M}_{r,\kappa+1}^{\uparrow
,\mathbf{z}_{r}}}\left(  \Delta^{\prime}\right)  \mathrm{d}P_{\theta_{0}%
}^{\mathsf{M}_{r,\kappa}^{\downarrow}}\left(  z_{r,\kappa}\right)  \,\right]
_{\theta=\theta_{0}}$ is finite. However, in general, one cannot prove
existence of $\partial_{i}P_{\theta}^{\mathsf{M}_{r,\kappa+1}^{\uparrow
,\mathbf{z}_{r}}}\left(  \Delta^{\prime}\right)  $. Therefore, we cannot
define logarithmic derivative of the conditional probability distribution
$P_{\theta}^{\mathsf{M}_{r,\kappa+1}^{\uparrow,\mathbf{z}_{r}}}$, nor cannot
use the argument in Hayashi and Matsumot\.{o}\thinspace(1998) in
semi-classical case.

In one-way semi-classical case, which is treated in Hayashi and Matsumot\.{o}%
\thinspace(1998), one can safely define the logarithmic derivative of
$P_{\theta}^{\mathsf{M}_{r,\kappa+1}^{\uparrow,\mathbf{z}_{r}}}$, since
$P_{\theta}^{\mathsf{M}_{r+1,r+1}^{\uparrow,\mathbf{z}_{r}}}\left(
\mathrm{d}\,z_{r+1,r+1}\right)  =\mathrm{tr}\,\rho_{\theta}M_{r,\kappa
+1}^{\uparrow,\mathbf{z}_{r}}\left(  \mathrm{d}\,z_{r+1,r+1}\right)  $.
Therefore, their argument can be made regorous, though we do not go into detail.

\subsection*{{\protect\normalsize \noindent}\noindent{\protect\normalsize 4.5
PROOF OF THEOREM\thinspace\ref{th:cr-cl}}}

\vskip-3mm Observe that the estimator (\ref{def-est}) with $n_{1}=1$ is
one-way semi-classical. Therefore, the achievability by (one-way)
semi-classical measurement follows from Lemmas\thinspace\ref{lem:est-compose}%
-\ref{lem:est-asym-unbiased}. Therefore, below we prove the lowerbound.

In case , \ $\rho_{\theta}>0$ ($\forall\theta\in\Theta$), due to the proof of
lowerbound part of Theorem\thinspace\ref{th:cr}, we have the following lowerbound.

$\quad\quad C_{\theta}\left(  G_{\theta},\mathcal{M}\right)  \geq
\underset{n\rightarrow\infty}{\varlimsup}\inf\left\{  n\mathrm{Tr}\,G_{\theta
}\mathrm{V}_{\theta}\left[  \mathcal{E}_{\theta,n}\right]  \,\text{\thinspace
};\,\text{semi-classical, (\ref{locally-unbiased}), (E') }\right\}  $\newline
In the following, we reduce the optimization over semi-classical measurements
to the one over \textit{independent semi-classical measurements}, or one-way
semi-classical measurements such that $\mathsf{N}_{\theta_{0},\kappa}$ acting
on $\kappa$-th sample cannot depend on the data $y_{\kappa^{\prime}}$ from
$\mathsf{N}_{\theta_{0},\kappa^{\prime}}$ ($\kappa\neq\kappa^{\prime}$).

\begin{lemma}
\label{lem:semi-classical}Suppose that semi-classical estimator $\mathcal{E}%
_{\theta_{0},n}=\{T_{\theta_{0},n},\mathsf{M}_{\theta_{0}}^{n}\}$ satisfies
(\ref{locally-unbiased}), (E'). Suppose also $\rho_{\theta}>0$, $\forall
\theta\in\Theta$. Then, we can find an estimator $\mathcal{F}_{n,\theta_{0}%
}=\{S_{\theta_{0},n},\mathsf{N}_{\theta_{0}}^{n}\}$ , such that $\mathsf{N}%
_{\theta_{0}}^{n}$ is independent semi-classical, $\mathrm{V}_{\theta_{0}%
}\left[  \mathcal{F}_{\theta_{0},n}\right]  \leq\mathrm{V}_{\theta_{0}}\left[
\mathcal{E}_{\theta_{0},n}\right]  $, (\ref{locally-unbiased}), and (E') hold.
Moreover, $S_{\theta_{0},n}$ is in the form of (\ref{est-sum}), where
$F_{\theta_{0},\kappa}$ is the function of the data $y_{\kappa}$ from
$\mathsf{N}_{\theta_{0},\kappa}$, such that $\mathrm{E}_{\theta_{0}}\left[
F_{\theta_{0},\kappa}\right]  =0$ :
\begin{equation}
S_{\theta_{0},n}=\sum_{\kappa=1}^{n}F_{\theta_{0},\kappa}\left(  y_{\kappa
}\right)  +\theta_{0}\,. \label{est-sum}%
\end{equation}

\end{lemma}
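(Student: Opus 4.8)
The plan is to avoid the Fisher information altogether (as elsewhere in the paper) and to produce $\mathcal{F}_{n,\theta_0}$ by a \emph{feedback–decoupling} construction combined with a Rao--Blackwell/martingale estimate, so that the additive form~(\ref{est-sum}) and the zero–mean condition are built in from the start and only two things remain to be checked: variance non-increase and preservation of the local unbiasedness~(\ref{locally-unbiased}). Concretely, I would define $\mathsf{N}_{\theta_0,\kappa}$ to be the composite instrument that the given semi-classical protocol applies to the $\kappa$-th sample \emph{in isolation}, with the data of the other samples (which is all the adaptive protocol ever uses from them) supplied by an independent simulation drawn from its $\theta_0$-law. By the instrument calculus of Subsection~4.2 this composite is again an instrument, and averaging over the simulated feedback yields a genuine fixed POVM on $\mathcal{H}$ that does not see the other samples' outcomes; hence $\mathsf{N}_{\theta_0}^{n}=\bigotimes_{\kappa}\mathsf{N}_{\theta_0,\kappa}$ is independent semi-classical. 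Writing $g_s:=\mathrm{E}_{\theta_0}[T_{\theta_0,n}\mid\mathfrak{B}_{r,\kappa}^{\downarrow}]$ for the conditional expectations along the filtration (which exist and are continuous in $\theta$ off a $P_{\theta}$-null set by Lemma~\ref{lem:ec-weak}) and $d_s:=g_s-g_{s-1}$ for the associated martingale increments, I would take $F_{\theta_0,\kappa}(y_\kappa)$ to be the conditional expectation, over the simulated feedback of $\mathsf{N}_{\theta_0,\kappa}$, of $\sum_{s:\,\kappa(s)=\kappa}d_s$; this is a bona fide $y_\kappa$-measurable function with $\mathrm{E}_{\theta_0}[F_{\theta_0,\kappa}]=0$, and $S_{\theta_0,n}=\sum_\kappa F_{\theta_0,\kappa}(y_\kappa)+\theta_0$ is exactly of the form~(\ref{est-sum}).

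For the variance I would use that the $d_s$ are orthogonal martingale increments, so that $\mathrm{V}_{\theta_0}[\mathcal{E}_{\theta_0,n}]=\sum_s\mathrm{E}_{\theta_0}[d_s d_s^{T}]$ and, regrouping by sample, $\sum_\kappa\mathrm{E}_{\theta_0}\big[(\sum_{s\in\kappa}d_s)(\sum_{s\in\kappa}d_s)^{T}\big]=\mathrm{V}_{\theta_0}[\mathcal{E}_{\theta_0,n}]$. Because the simulated feedback shares its $\theta_0$-law with the genuine feedback, the joint $\theta_0$-law of $\big(y_\kappa,\sum_{s\in\kappa}d_s\big)$ is the same in the decoupled and the original schemes; conditional Jensen (Rao--Blackwell) then gives $\mathrm{E}_{\theta_0}[F_{\theta_0,\kappa}F_{\theta_0,\kappa}^{T}]\le\mathrm{E}_{\theta_0}\big[(\sum_{s\in\kappa}d_s)(\sum_{s\in\kappa}d_s)^{T}\big]$. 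Summing over $\kappa$ and using independence, $\mathrm{V}_{\theta_0}[\mathcal{F}_{\theta_0,n}]=\sum_\kappa\mathrm{E}_{\theta_0}[F_{\theta_0,\kappa}F_{\theta_0,\kappa}^{T}]\le\mathrm{V}_{\theta_0}[\mathcal{E}_{\theta_0,n}]$ in the Loewner order. Condition (E') for $\mathcal{F}_{\theta_0,n}$ is obtained exactly as in Subsection~3.5, by truncating to $\{\|F_{\theta_0,\kappa}\|\le L\}$, re-centering, and letting $L\to\infty$, which does not increase the limiting variance.

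Local unbiasedness is where the Leibniz rule is essential. Iterating Lemma~\ref{lem:leibniz} across the elementary measurement steps expresses the bias derivative as $\delta^i_j=\partial_j\mathrm{E}_\theta[T_{\theta_0,n}^{i}]\big|_{\theta_0}=\sum_s A^{(s),i}_{\,j}$, where each $A^{(s)}$ differentiates \emph{only} the step-$s$ conditional law against the future conditional expectation frozen at $\theta_0$, i.e. it pairs $d_s$ with a $\theta_0$-derivative that acts solely on $\rho_\theta$ through $L_{\theta,i}$ of (M.2). Since the decoupled per-sample law of $y_\kappa$ differs from the genuine one only through the $\theta$-independent simulated feedback, differentiating $\mathrm{E}_\theta[F_{\theta_0,\kappa}^{i}]=\int F_{\theta_0,\kappa}^{i}\,\mathrm{tr}\,\rho_\theta\mathsf{N}_{\theta_0,\kappa}(\mathrm{d}y_\kappa)$ at $\theta_0$ reproduces exactly $\sum_{s\in\kappa}A^{(s)}$. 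Summing over $\kappa$ then yields $\partial_j\mathrm{E}_\theta[S_{\theta_0,n}^{i}]|_{\theta_0}=\sum_s A^{(s),i}_{\,j}=\delta^i_j$ and $\mathrm{E}_{\theta_0}[S_{\theta_0,n}]=\theta_0$, so~(\ref{locally-unbiased}) holds for $\mathcal{F}_{\theta_0,n}$.

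The main obstacle is precisely the interplay flagged in Subsection~4.4: for a genuinely adaptive (non one-way) scheme the conditional distributions $P_\theta^{\mathsf{M}_{r,\kappa+1}^{\uparrow}}$ need not be differentiable, so there is no conditional score to decompose, and the adaptive feedback a priori correlates the different samples' contributions. My way around both points is to never differentiate a conditional law—only the joint quantities $\mathrm{tr}\,\rho_\theta^{\otimes n}M^{n}$ are ever differentiated, via Lemma~\ref{lem:leibniz}—and to annihilate cross-sample terms by the martingale orthogonality of the increments $d_s$ supplied by Lemma~\ref{lem:ec-weak}; simulating the feedback at $\theta_0$ is exactly what makes the law-matching, and hence the Rao--Blackwell and sensitivity identities, exact to first order. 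The remaining care is measure-theoretic: the strong measurability and Bochner/Pettis integrability of the a posteriori states from Subsections~2.1 and~4.2 are what license the exchanges of $\mathrm{tr}$, $\int$ and $\partial_i$ at each step, and the hypothesis $\rho_\theta>0$ is used both to guarantee (through Lemma~\ref{lem:ec-weak}) that the decoupling disturbs the relevant quantities only on $P_\theta$-null sets and to keep the a posteriori-state machinery available throughout.
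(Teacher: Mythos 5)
Your architecture coincides with the paper's up to one step: your ``$\theta_0$-simulation of the feedback'' is exactly the paper's fake-ensemble construction of $\mathsf{N}_{\theta_0,\kappa}$ (apply $\mathsf{M}_{\theta_0}^{n}$ to $\rho_{\theta_0}\otimes\cdots\otimes\underset{\kappa}{\rho_{\theta}}\otimes\cdots\otimes\rho_{\theta_0}$), the paper's building blocks $f_{\theta_0,r,\kappa}=\mathrm{E}_{\theta_0}\left[T_{\theta_0,n}|\left\langle\mathfrak{B}_{r-1},\mathfrak{B}_{r,\kappa}\right\rangle\right]-\mathrm{E}_{\theta_0}\left[T_{\theta_0,n}|\mathfrak{B}_{r-1}\right]$ are conditional expectations of your martingale increments $d_s$, the variance comparison is the same orthogonality argument, and local unbiasedness is checked through Lemma\thinspace\ref{lem:leibniz} in both. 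The genuine gap is your extra Rao--Blackwell step, in which you average over the simulated feedback so that $F_{\theta_0,\kappa}$ becomes measurable with respect to the $\kappa$-th sample's own outcomes alone (equivalently, you replace the decoupled instrument by its marginal POVM $N_{\theta_0,\kappa}$ on $\mathcal{H}$). That step preserves the zero mean and can only decrease the variance, but it destroys (\ref{locally-unbiased}): the simulated feedback, although $\theta$-independent in law, decides \emph{which} measurement is applied to the real sample, so the $\theta$-sensitivity of the decoupled run lives in the correlation between the feedback and the real outcome, and marginalizing over the feedback can cancel it entirely. Your claim that differentiating $\int F_{\theta_0,\kappa}\,\mathrm{tr}\,\rho_{\theta}N_{\theta_0,\kappa}\left(\mathrm{d}y_{\kappa}\right)$ at $\theta_0$ ``reproduces exactly $\sum_{s\in\kappa}A^{(s)}$'' is precisely the point that fails: the derivative of the marginal law pairs $F_{\theta_0,\kappa}$ only with the part of the sensitivity that survives conditioning on $y_\kappa$, and the discarded cross term is in general nonzero.

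Concretely: take $n=2$, $m=1$, $\theta_0=0$, $\rho_{\theta}=\frac{1}{2}\left(\mathbf{1}+\theta\sigma_{x}\right)$ on a qubit. In round 1 measure sample 2 in the $\sigma_{z}$-basis, outcome $b=\pm1$, so $P_{\theta}\left(b\right)=\frac{1}{2}$ for all $\theta$; in round 2 measure sample 1 with the POVM $\left\{\frac{1}{2}\left(\mathbf{1}+bz\sigma_{x}\right)\right\}_{z=\pm1}$, so $P_{\theta}\left(z|b\right)=\frac{1}{2}\left(1+bz\theta\right)$; set $T_{\theta_0,2}:=bz$. Then $\mathrm{E}_{\theta}\left[T_{\theta_0,2}\right]=\theta$, so (\ref{locally-unbiased}) and (E') hold and the scheme is genuinely adaptive semi-classical. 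At $\theta_0$ the variables $b$ and $z$ are independent and uniform, hence your $F_{\theta_0,1}\left(z\right)=\mathrm{E}_{\theta_0}\left[bz|z\right]=0$ and $F_{\theta_0,2}\left(b\right)=\mathrm{E}_{\theta_0}\left[T_{\theta_0,2}|b\right]=0$: your estimator is the constant $\theta_0$ and $B_{\theta_0}\left[\mathcal{F}_{2,\theta_0}\right]=0\neq1$. Indeed your marginal POVM on sample 1 is $\frac{1}{2}\sum_{b=\pm1}\frac{1}{2}\left(\mathbf{1}+bz\sigma_{x}\right)=\frac{1}{2}\mathbf{1}$, completely insensitive to $\theta$, so \emph{no} function of $y_1$ alone can be locally unbiased. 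The paper avoids this by letting $y_{\kappa}$ be the whole record $\mathbf{z}_{R_{n}}^{\left(\kappa\right)}$ of the fake run, simulated data included, and taking $F_{\theta_0,\kappa}=\sum_{r}f_{\theta_0,r,\kappa}$ as a function of that record; in the example $F_{\theta_0,1}=bz$ with $b$ simulated and $z$ real, giving $\mathrm{E}_{\theta}^{\mathsf{N}_{\theta_0,1}}\left[F_{\theta_0,1}\right]=\theta$. This still fulfills the lemma, since the fake samples are internal to the apparatus of $\mathsf{N}_{\theta_0,\kappa}$, which acts only on the $\kappa$-th true sample, and the $n$ runs share no data, so $\mathsf{N}_{\theta_0}^{n}$ is independent semi-classical. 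If you delete the final conditioning and keep $F_{\theta_0,\kappa}=\sum_{s:\kappa\left(s\right)=\kappa}d_{s}$ as a function of the full decoupled record, the rest of your argument (orthogonality for the variance; Leibniz rule plus the vanishing of $\mathrm{E}_{\theta_0}\left[d_{s}|\cdot\right]$ against earlier conditioning for the bias) goes through and is then essentially the paper's proof.
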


\begin{proof}
Since $\mathrm{E}_{\theta}\left[  T_{\theta,n}|\mathfrak{B}_{r,\kappa
+1}^{\downarrow}\right]  $ satisfies (E'), we apply Leibniz rule
(\ref{leibniz}) recursively to obtain%
\begin{align*}
&  \partial_{i}\mathrm{E}_{\theta}\left[  T_{\theta_{0},n}\right]
_{\theta=\theta_{0}}=\frac{\partial}{\partial\theta^{i}}\left(  \mathrm{E}%
_{\theta_{0}}\left[  \mathrm{E}_{\theta}\left[  T_{\theta_{0},n}%
|\mathfrak{B}_{R_{n},n-1}^{\downarrow}\right]  \right]  \right)
_{\theta=\theta_{0}}+\frac{\partial}{\partial\theta^{i}}\left(  \mathrm{E}%
_{\theta}\left[  \mathrm{E}_{\theta_{0}}\left[  T_{\theta_{0},n}%
|\mathfrak{B}_{R_{n},n-1}^{\downarrow}\right]  \right]  \right)
_{\theta=\theta_{0}}\\
&  =\frac{\partial}{\partial\theta^{i}}\left(  \mathrm{E}_{\theta_{0}}\left[
\mathrm{E}_{\theta}\left[  T_{\theta_{0},n}|\mathfrak{B}_{R_{n},n-1}%
^{\downarrow}\right]  \right]  \right)  _{\theta=\theta_{0}}+\frac{\partial
}{\partial\theta^{i}}\left(  \mathrm{E}_{\theta_{0}}\left[  \mathrm{E}%
_{\theta}\left[  \mathrm{E}_{\theta_{0}}\left[  T_{\theta_{0},n}%
|\mathfrak{B}_{R_{n},n-1}^{\downarrow}\right]  |\mathfrak{B}_{R_{n}%
,n-2}^{\downarrow}\right]  \right]  \right)  _{\theta=\theta_{0}}\\
&  +\frac{\partial}{\partial\theta^{i}}\left(  \mathrm{E}_{\theta}\left[
\mathrm{E}_{\theta_{0}}\left[  \mathrm{E}_{\theta_{0}}\left[  T_{\theta_{0}%
,n}|\mathfrak{B}_{R_{n},n-1}^{\downarrow}\right]  |\mathfrak{B}_{R_{n}%
,n-2}^{\downarrow}\right]  \right]  \right)  _{\theta=\theta_{0}}\\
&  =\sum_{r=1}^{R_{n}}\sum_{\kappa=1}^{n}\frac{\partial}{\partial\theta^{i}%
}\left(  \mathrm{E}_{\theta_{0}}\left[  \mathrm{E}_{\theta_{0}}\left[
\cdots\mathrm{E}_{\theta}\left[  \mathrm{E}_{\theta_{0}}\left[  \cdots
\mathrm{E}_{\theta_{0}}\left[  T_{\theta_{0},n}|\mathfrak{B}_{R_{n}%
,n-1}^{\downarrow}\right]  \cdots|\mathfrak{B}_{r,\kappa+1}^{\downarrow
}\right]  |\mathfrak{B}_{r,\kappa}^{\downarrow}\right]  \cdots|\mathfrak{B}%
_{1,1}^{\downarrow}\right]  \right]  \right)  _{\theta=\theta_{0}}\\
&  =\sum_{r=1}^{R_{n}}\sum_{\kappa=1}^{n}\frac{\partial}{\partial\theta^{i}%
}\left(  \mathrm{E}_{\theta_{0}}\left[  \mathrm{E}_{\theta}\left[
\mathrm{E}_{\theta_{0}}\left[  T_{\theta_{0},n}|\mathfrak{B}_{r,\kappa
+1}^{\downarrow}\right]  |\mathfrak{B}_{r,\kappa}^{\downarrow}\right]
\right]  \right)  _{\theta=\theta_{0}}.
\end{align*}%
%TCIMACRO{\TeXButton{vskip-lastskip}{\vskip-\lastskip}}%
%BeginExpansion
\vskip-\lastskip
%EndExpansion
Observe that, conditioned by $\mathfrak{B}_{r-1}$, $\ $the random variable
$Z_{r,\kappa}$ and $Z_{r,\kappa^{\prime}}$ are independent, due to the
composition of the measurement. Therefore, due to Fubini's theorem,
\begin{align*}
&  \mathrm{E}_{\theta_{0}}\left[  \mathrm{E}_{\theta}\left[  \mathrm{E}%
_{\theta_{0}}\left[  T_{\theta_{0},n}|\mathfrak{B}_{r,\kappa+1}^{\downarrow
}\right]  |\mathfrak{B}_{r,\kappa}^{\downarrow}\right]  |\mathfrak{B}%
_{r-1}\right]  =\int\mathrm{E}_{\theta_{0}}\left[  T_{\theta_{0}%
,n}|\mathfrak{B}_{r}\right]  \prod_{\kappa^{\prime}:\kappa^{\prime}\neq\kappa
}\mathrm{d}\,P_{\theta_{0}}\left(  z_{r,\kappa^{\prime}}|\mathfrak{B}%
_{r-1}\right)  \mathrm{d}\,P_{\theta}\left(  z_{r,\kappa}|\mathfrak{B}%
_{r-1}\right) \\
&  =\mathrm{E}_{\theta}\left[  \left[  \mathrm{E}_{\theta_{0}}\left[
T_{\theta_{0},n}|\left\langle \mathfrak{B}_{r-1},\mathfrak{B}_{r,\kappa
}\right\rangle \right]  |\mathfrak{B}_{r-1}\right]  \right]  .
\end{align*}%
%TCIMACRO{\TeXButton{vskip-lastskip}{\vskip-\lastskip}}%
%BeginExpansion
\vskip-\lastskip
%EndExpansion
Therefore,
\begin{equation}
\partial_{i}\mathrm{E}_{\theta}\left[  T_{\theta_{0},n}\right]  _{\theta
=\theta_{0}}=\sum_{r=1}^{R_{n}}\sum_{\kappa=1}^{n}\frac{\partial}%
{\partial\theta^{i}}\left(  \mathrm{E}_{\theta_{0}}\left[  \mathrm{E}_{\theta
}\left[  \mathrm{E}_{\theta_{0}}\left[  T_{\theta_{0},n}|\left\langle
\mathfrak{B}_{r-1},\mathfrak{B}_{r,\kappa}\right\rangle \right]
|\mathfrak{B}_{r-1}\right]  \right]  \right)  _{\theta=\theta_{0}}.
\label{decom-derivative}%
\end{equation}%
%TCIMACRO{\TeXButton{vskip-lastskip}{\vskip-\lastskip}}%
%BeginExpansion
\vskip-\lastskip
%EndExpansion
Let us define, with the convention $\mathfrak{B}_{0}=\{\emptyset,%
%TCIMACRO{\U{211d} }%
%BeginExpansion
\mathbb{R}
%EndExpansion
^{l}\}$,

$\quad\quad f_{\theta_{0},r,\kappa}:=\mathrm{E}_{\theta_{0}}\left[
T_{\theta_{0},n}|\,\left\langle \mathfrak{B}_{r-1},\mathfrak{B}_{r,\kappa
}\right\rangle \right]  -\mathrm{E}_{\theta_{0}}\left[  T_{n}|\mathfrak{B}%
_{r-1}\right]  ,\quad F_{\theta_{0},\kappa}:=\sum_{r=1}^{R_{n}}f_{\theta
_{0},r,\kappa}\,.$ $\ $\newline Since $f_{\theta_{0},r,\kappa}$ also satisfies
(E'), we can apply Leibniz rule (\ref{leibniz}). Therefore,
\begin{align}
&  \left.  \partial_{i}\mathrm{E}_{\theta}f_{\theta_{0},r,\kappa}\right\vert
_{\theta=\theta_{0}}=\left.  \partial_{i}\mathrm{E}_{\theta}\left[
\mathrm{E}_{\theta}\left[  f_{\theta_{0},r,\kappa}\,|\mathfrak{B}%
_{r-1}\right]  \right]  \right\vert _{\theta=\theta_{0}}\nonumber\\
&  =\left(  \partial_{i}\mathrm{E}_{\theta_{0}}\left[  \mathrm{E}_{\theta
}\left[  f_{\theta_{0},r,\kappa}\,|\mathfrak{B}_{r-1}\right]  \right]
+\partial_{i}\mathrm{E}_{\theta}\left[  \mathrm{E}_{\theta_{0}}\left[
f_{\theta_{0},r,\kappa}\,|\mathfrak{B}_{r-1}\right]  \right]  \,\,\right)
_{\theta=\theta_{0}}\underset{\text{(i)}}{=}\left.  \partial_{i}%
\mathrm{E}_{\theta_{0}}\left[  \mathrm{E}_{\theta}\left[  f_{\theta
_{0},r,\kappa}\,|\mathfrak{B}_{r-1}\right]  \right]  \right\vert
_{\theta=\theta_{0}}\nonumber\\
&  =\partial_{i}\left(  \mathrm{E}_{\theta_{0}}\left[  \mathrm{E}_{\theta
}\left[  \left(  \mathrm{E}_{\theta_{0}}\left[  T_{\theta_{0},n}%
|\,\left\langle \mathfrak{B}_{r-1},\mathfrak{B}_{r,\kappa}\right\rangle
\right]  -\mathrm{E}_{\theta_{0}}\left[  T_{\theta_{0},n}|\,\mathfrak{B}%
_{r-1}\right]  \right)  |\mathfrak{B}_{r-1}\right]  \,\right]  \right)
_{\theta=\theta_{0}}\nonumber\\
&  \underset{\text{(ii)}}{=}\left.  \partial_{i}\mathrm{E}_{\theta_{0}}\left[
\mathrm{E}_{\theta}\left[  \mathrm{E}_{\theta_{0}}\left[  T_{\theta_{0}%
,n}|\,\left\langle \mathfrak{B}_{r-1},\mathfrak{B}_{r,\kappa}\right\rangle
\right]  |\mathfrak{B}_{r-1}\right]  \right]  \right\vert _{\theta=\theta_{0}%
}. \label{dEf=}%
\end{align}%
%TCIMACRO{\TeXButton{vskip-lastskip}{\vskip-\lastskip}}%
%BeginExpansion
\vskip-\lastskip
%EndExpansion
Here, (i) is due to $\mathrm{E}_{\theta_{0}}\left[  f_{\theta_{0},r,\kappa
}\,|\mathfrak{B}_{r-1}\right]  =\mathrm{E}_{\theta_{0}}\left[  \left(
\mathrm{E}_{\theta_{0}}\left[  T_{\theta_{0},n}|\,\left\langle \mathfrak{B}%
_{r-1},\mathfrak{B}_{r,\kappa}\right\rangle \right]  -\mathrm{E}_{\theta_{0}%
}\left[  T_{n}|\mathfrak{B}_{r-1}\right]  \,\right)  \,|\mathfrak{B}%
_{r-1}\right]  =0$, and (ii) is due to $\mathrm{E}_{\theta}\left[
\mathrm{E}_{\theta_{0}}\left[  T_{\theta_{0},n}|\mathfrak{B}_{r-1}\right]
|\mathfrak{B}_{r-1}\right]  =\mathrm{E}_{\theta_{0}}\left[  T_{\theta_{0}%
,n}|\mathfrak{B}_{r-1}\right]  $. Combining (\ref{decom-derivative}) and
(\ref{dEf=}), we have $\left.  \partial_{i}\mathrm{E}_{\theta}\left[
T_{\theta_{0},n}\right]  \right\vert _{\theta=\theta_{0}}=\sum_{r=1}^{R_{n}%
}\sum_{\kappa=1}^{n}\left.  \partial_{i}\mathrm{E}_{\theta}\left[
f_{\theta_{0},r,\kappa}\right]  \right\vert _{\theta=\theta_{0}}$. Therefore,
with $S_{\theta_{0},n}^{\prime}:=\sum_{\kappa=1}^{n}F_{\theta_{0},\kappa
}\left(  \mathbf{Z}_{R_{n}}\right)  +\theta_{0}$, $\left\{  \mathsf{M}%
_{\theta_{0}}^{n},S_{\theta_{0},n}^{\prime}\right\}  $ is locally unbiased at
$\theta_{0}$. Also, observe the following relations:

$\quad\mathrm{E}_{\theta_{0}}f_{\theta_{0},r,\kappa}\left(  f_{\theta
_{0},r^{\prime},\kappa^{\prime}}\right)  ^{T}=0\quad\left(  \kappa\neq
\kappa^{\prime}\quad\text{or}\quad r\neq r^{\prime}\right)  ,\quad
\mathrm{E}_{\theta_{0}}\,f_{\theta_{0},r,\kappa}\left(  T_{\theta_{0}%
,n}\right)  ^{T}=\mathrm{E}_{\theta_{0}}\,f_{\theta_{0},r,\kappa}\left(
f_{\theta_{0},r,\kappa}\right)  ^{T}.$\newline Due to them, the variance of
this estimate is not larger than the one of $T_{\theta_{0},n}$:\newline%
$\quad\quad\mathrm{V}_{\theta_{0}}\left[  \left\{  \mathsf{M}_{\theta_{0}}%
^{n},S_{\theta_{0},n}^{\prime}\right\}  \right]  =\sum_{\kappa=1}^{n}%
\sum_{r=1}^{R_{n}}\mathrm{E}_{\theta_{0}}f_{\theta_{0},r,\kappa}\left(
f_{\theta_{0},r,\kappa}\right)  ^{T}=\sum_{\kappa=1}^{n}\sum_{r=1}^{R_{n}%
}\mathrm{E}_{\theta_{0}}\left[  f_{\theta_{0},r,\kappa}\left(  T_{\theta
_{0},n}\right)  ^{T}\right]  \leq\mathrm{V}_{\theta_{0}}\left[  \mathcal{E}%
_{\theta_{0},n}\right]  $.

Below, we define $\mathsf{N}_{\theta_{0},\kappa}^{n}$ . First, using
$\rho_{\theta}^{\otimes n}$, we prepare $n$ of \textit{fake ensembles}
$\rho_{\theta_{0}}\otimes\rho_{\theta_{0}}\cdots\otimes\underset{\kappa}%
{\rho_{\theta}}\otimes\cdots\otimes\rho_{\theta_{0}}$ ($\kappa=1$,$\cdots$%
,$n$), composed with single $\rho_{\theta}$ and $n-1$ of $\rho_{\theta_{0}}$.
Then $\mathsf{N}_{\theta_{0},\kappa}^{n}$ is the application of $\mathsf{M}%
_{\theta_{0}}^{n}$ to $\kappa$-th fake ensemble. We denote by $z_{r,\kappa
^{\prime}}^{\left(  \kappa\right)  }$ the data obtained at $r$-th round from
the $\kappa^{\prime}$-th (possibly fake) sample in the $\kappa$-th fake
ensemble. The data $y_{\kappa}$ from $\mathsf{N}_{\theta_{0},\kappa}^{n}$ is
$y_{\kappa}:=\mathbf{z}_{R_{n}}^{\left(  \kappa\right)  }$.

If $\theta=\theta_{0}$, $Y_{\kappa}=\mathbf{Z}_{R_{n}}^{\left(  \kappa\right)
}$ obeys the same probability distribution as $\mathbf{Z}_{R_{n}}$, for any
$\kappa$. Therefore, $\mathrm{V}_{\theta_{0}}\left[  \left\{  \mathsf{N}%
_{\theta_{0}}^{n},F_{\theta_{0},\kappa}\left(  Y_{\kappa}\right)  \right\}
\right]  $ equals $\mathrm{V}_{\theta_{0}}\left[  \left\{  \mathsf{M}%
_{\theta_{0}}^{n},F_{\theta_{0},\kappa}\left(  \mathbf{Z}_{R_{n}}\right)
\right\}  \right]  $. Therefore, due to $\mathrm{V}_{\theta_{0}}\left[
\left\{  \mathsf{M}_{\theta_{0}}^{n},S_{\theta_{0},n}^{\prime}\right\}
\right]  =\sum_{\kappa=1}^{n}\mathrm{V}_{\theta_{0}}\left[  F_{\theta
_{0},\kappa}\right]  $, \ we have $\mathrm{V}_{\theta_{0}}\left[
\mathcal{F}_{n,\theta_{0}}\right]  =\mathrm{V}_{\theta_{0}}\left[  \left\{
\mathsf{M}_{\theta_{0}}^{n},\,S_{\theta_{0},n}^{\prime}\right\}  \right]
\leq\mathrm{V}_{\theta_{0}}\left[  \mathcal{E}_{\theta_{0},n}\right]
$\textrm{. \ \ }Analogously, we can also show \textrm{\ }$\mathrm{E}%
_{\theta_{0}}^{\mathsf{N}_{\theta_{0}}^{n}}\left[  S_{\theta_{0},n}\right]
=\mathrm{E}_{\theta_{0}}\left[  S_{\theta_{0},n}^{\prime}\right]  =\theta_{0}$.

Finally, we show $\left.  \partial_{i}\mathrm{E}_{\theta}^{\mathsf{N}%
_{\theta_{0}}^{n}}\left[  S_{\theta_{0},n}^{j}\right]  \right\vert
_{\theta=\theta_{0}}=\left.  \partial_{i}\mathrm{E}_{\theta}\left[
S_{\theta_{0},n}^{\prime\,j}\right]  \right\vert _{\theta=\theta_{0}}%
=\delta_{i}^{j}$. Observe
\begin{align*}
&  \frac{\partial}{\partial\theta^{i}}\left(  \mathrm{E}_{\theta}%
^{\mathsf{N}_{\theta_{0}}^{n}}f_{\theta_{0},\kappa,r}\left(  \mathbf{Z}%
_{r-1}^{\left(  \kappa\right)  },Z_{r,\kappa}^{\left(  \kappa\right)
}\right)  \right)  _{\theta=\theta_{0}}\\
&  =\frac{\partial}{\partial\theta^{i}}\left[  \int f_{\theta_{0},\kappa
,r}\left(  \mathbf{z}_{r-1}^{\left(  \kappa\right)  },z_{r,\kappa}^{\left(
\kappa\right)  }\right)  \prod_{r^{\prime}=1}^{r}\left(  \prod_{\kappa
^{\prime}:\kappa^{\prime}\neq\kappa}\mathrm{\,d}P_{\theta_{0}}\left(
z_{r^{\prime},\kappa^{\prime}}^{\left(  \kappa\right)  }|\mathbf{z}%
_{r^{\prime}-1}^{\left(  \kappa\right)  }\right)  \mathrm{d}P_{\theta}\left(
z_{r^{\prime},\kappa}^{\left(  \kappa\right)  }|\mathbf{z}_{r^{\prime}%
-1}^{\left(  \kappa\right)  }\right)  \right)  \right]  _{\theta=\theta_{0}}\\
&  =\partial_{i}\left(  \mathrm{E}_{\theta}\left[  \mathrm{E}_{\theta_{0}%
}\left[  \cdots\mathrm{E}_{\theta}\left[  \mathrm{E}_{\theta_{0}}\left[
\mathrm{E}_{\theta}\left[  f_{\theta_{0},\kappa,r}|\mathfrak{B}_{r-1}\right]
|\left\langle \mathfrak{B}_{r-1,\kappa},\mathfrak{B}_{r-2}\right\rangle
\right]  \,|\,\mathfrak{B}_{r-2}\right]  \cdots|\mathfrak{B}_{1,\kappa
}\right]  \right]  \right)  _{\theta=\theta_{0}}\\
&  =\partial_{i}\left(  \mathrm{E}_{\theta_{0}}\left[  \mathrm{E}_{\theta
}\left[  f_{\theta_{0},\kappa,r}|\mathfrak{B}_{r-1}\right]  \right]  \right)
_{\theta=\theta_{0}}+\sum_{r^{\prime}=3}^{r-1}\partial_{i}\left(
\mathrm{E}_{\theta_{0}}\left[  \mathrm{E}_{\theta}\left[  \mathrm{E}%
_{\theta_{0}}\left[  f_{\theta_{0},\kappa,r}|\left\langle \mathfrak{B}%
_{r^{\prime},\kappa},\mathfrak{B}_{r^{\prime}-1}\right\rangle \right]
\,|\,\mathfrak{B}_{r^{\prime}-2}\right]  \right]  \right)  _{\theta=\theta
_{0}}\\
&  +\partial_{i}\left(  \mathrm{E}_{\theta}\left[  \mathrm{E}_{\theta_{0}%
}\left[  f_{\theta_{0},\kappa,r}|\mathfrak{B}_{1,\kappa},\right]  \,\right]
\right)  _{\theta=\theta_{0}}%
\end{align*}%
%TCIMACRO{\TeXButton{vskip-lastskip}{\vskip-\lastskip}}%
%BeginExpansion
\vskip-\lastskip
%EndExpansion
where the last equality is due to Leibniz rule. Due to the definition of
$f_{\theta_{0},\kappa,r}$, $\mathrm{E}_{\theta_{0}}\left[  f_{\theta
_{0},\kappa,r}|\left\langle \mathfrak{B}_{r^{\prime},\kappa},\mathfrak{B}%
_{r^{\prime}-1}\right\rangle \right]  $ ($r^{\prime}\leq r-1$) and
$\mathrm{E}_{\theta_{0}}\left[  f_{\theta_{0},\kappa,r}|\mathfrak{B}%
_{1,\kappa},\right]  $ are zero. Therefore, $\left.  \partial_{i}%
\mathrm{E}_{\theta}^{\mathsf{N}_{\theta_{0}}^{n}}\left[  S_{\theta_{0},n}%
^{j}\right]  \right\vert _{\theta=\theta_{0}}=\left.  \partial_{i}%
\mathrm{E}_{\theta}\left[  S_{\theta_{0},n}^{\prime\,j}\right]  \right\vert
_{\theta=\theta_{0}}=\delta_{i}^{j}$ follows from (\ref{dEf=}). \ Trivially,
$\mathcal{F}_{n,\theta_{0}}=\{\mathsf{N}_{\theta_{0}}^{n},\,S_{\theta_{0}%
,n}\}$ satisfies (E'). After all, we have the lemma.
\end{proof}

\begin{lemma}
\label{lem:semi-classical-2}Suppose $\mathsf{N}_{\theta_{0}}^{n}$ is
independent semi-classical. Suppose also that $S_{\theta_{0},n}$ is in the
form of (\ref{est-sum}), and that $\mathcal{F}_{n,\theta_{0}}=\{\mathsf{N}%
_{\theta_{0}}^{n},\,S_{\theta_{0},n}\}$ satisfies (\ref{locally-unbiased}) and
(E'). Then, we can find an estimator $\mathcal{E}_{\theta_{0},1}^{\prime
}=\{\mathsf{M}_{\theta_{0}}^{^{\prime}},T_{\theta_{0},1}^{\prime}\}$ \ acting
on single sample, with (\ref{locally-unbiased}), (E'), and $\mathrm{V}%
_{\theta_{0}}\left[  \mathcal{E}_{\theta_{0},1}^{\prime}\right]
=n\mathrm{V}_{\theta_{0}}\left[  \mathcal{F}_{n,\theta_{0}}\right]  $.
\end{lemma}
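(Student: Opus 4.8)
The plan is to make the independent single-sample structure explicit and then collapse the sum $\sum_{\kappa}F_{\theta_{0},\kappa}$ into one contribution by randomizing over the index $\kappa$ and rescaling. First I would extract, for each $\kappa$, the effective single-sample POVM governing $Y_{\kappa}$. Because in the construction of Lemma\,\ref{lem:semi-classical} the datum $Y_{\kappa}$ is produced by applying $\mathsf{M}_{\theta_{0}}^{n}$ to the $\kappa$-th fake ensemble, in which only the $\kappa$-th slot carries the unknown $\rho_{\theta}$ while the other $n-1$ slots are fixed at $\rho_{\theta_{0}}$, partial-tracing those fixed slots out of the overall POVM yields $N_{\theta_{0},\kappa}(\mathrm{d}y):=\mathrm{tr}_{\neq\kappa}\left[(\rho_{\theta_{0}}^{\otimes n-1}\otimes\mathbf{1}_{\kappa})\,M_{\theta_{0}}^{n}(\mathrm{d}y)\right]$, a positive operator on $\mathcal{H}$ normalized to $\mathbf{1}$, i.e.\ a genuine single-sample POVM. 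By construction $P_{\theta}(\mathrm{d}y_{\kappa})=\mathrm{tr}\,\rho_{\theta}N_{\theta_{0},\kappa}(\mathrm{d}y_{\kappa})$, so the entire $\theta$-dependence of $Y_{\kappa}$ passes through one copy of $\rho_{\theta}$; moreover, since $\mathsf{N}_{\theta_{0}}^{n}$ is independent semi-classical, the $Y_{\kappa}$ are mutually independent for every $\theta$.

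Next I would define the single-sample measurement $\mathsf{M}_{\theta_{0}}^{\prime}$ on $\mathcal{H}$ by drawing $\kappa$ uniformly on $\{1,\dots,n\}$, adjoining the known ancilla $\rho_{\theta_{0}}^{\otimes n-1}$ in the complementary slots, and applying $\mathsf{M}_{\theta_{0}}^{n}$; its POVM is $M^{\prime}(\mathrm{d}\kappa,\mathrm{d}y)=\tfrac{1}{n}N_{\theta_{0},\kappa}(\mathrm{d}y)$, which sums to $\mathbf{1}$, and the estimator is $T_{\theta_{0},1}^{\prime}(\kappa,y):=n\,F_{\theta_{0},\kappa}(y)+\theta_{0}$. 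To see that (E') holds, note that independence and $\mathrm{E}_{\theta_{0}}[F_{\theta_{0},\kappa}]=0$ give $F_{\theta_{0},\kappa}(Y_{\kappa})=\mathrm{E}_{\theta_{0}}[\,S_{\theta_{0},n}-\theta_{0}\,|\,Y_{\kappa}\,]$, so each $F_{\theta_{0},\kappa}$ is bounded by the (E')-diameter of $S_{\theta_{0},n}$; hence $T_{\theta_{0},1}^{\prime}$ takes values in a set of finite diameter (of order $n\,a_{4,n}$), so (E') holds with a constant larger than $a_{4,n}$.

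Verification of (\ref{locally-unbiased}) and of the variance identity are then short computations. For the mean, $\mathrm{E}_{\theta_{0}}^{\mathsf{M}_{\theta_{0}}^{\prime}}[T_{\theta_{0},1}^{\prime}]=\tfrac{1}{n}\sum_{\kappa}\big(n\,\mathrm{E}_{\theta_{0}}[F_{\theta_{0},\kappa}]+\theta_{0}\big)=\theta_{0}$. Differentiating under the integral, which is licensed by (E') and (\ref{dE=Ed}) of Lemma\,\ref{lem:est-cont}, and using $\mathrm{tr}\,\partial_{i}\rho_{\theta_{0}}\mathbf{1}=\partial_{i}\,\mathrm{tr}\,\rho_{\theta}\big|_{\theta_{0}}=0$, I obtain $\partial_{i}\mathrm{E}_{\theta}^{\mathsf{M}_{\theta_{0}}^{\prime}}[T_{\theta_{0},1}^{\prime\,j}]\big|_{\theta_{0}}=\sum_{\kappa}\int F_{\theta_{0},\kappa}^{j}(y)\,\mathrm{tr}\,\partial_{i}\rho_{\theta_{0}}N_{\theta_{0},\kappa}(\mathrm{d}y)$. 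This coincides term-by-term with $\partial_{i}\mathrm{E}_{\theta}[S_{\theta_{0},n}^{j}]\big|_{\theta_{0}}$ computed from $S_{\theta_{0},n}=\sum_{\kappa}F_{\theta_{0},\kappa}(Y_{\kappa})+\theta_{0}$, which equals $\delta_{i}^{j}$ by the assumed local unbiasedness of $\mathcal{F}_{n,\theta_{0}}$; thus (\ref{locally-unbiased}) holds for $\mathcal{E}_{\theta_{0},1}^{\prime}$.

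For the variance, since the mean is $\theta_{0}$ we have $\mathrm{V}_{\theta_{0}}[\mathcal{E}_{\theta_{0},1}^{\prime}]=\tfrac{1}{n}\sum_{\kappa}n^{2}\,\mathrm{E}_{\theta_{0}}[F_{\theta_{0},\kappa}F_{\theta_{0},\kappa}^{T}]=n\sum_{\kappa}\mathrm{V}_{\theta_{0}}[F_{\theta_{0},\kappa}]$, while independence and $\mathrm{E}_{\theta_{0}}[F_{\theta_{0},\kappa}]=0$ give $\mathrm{V}_{\theta_{0}}[\mathcal{F}_{n,\theta_{0}}]=\sum_{\kappa}\mathrm{V}_{\theta_{0}}[F_{\theta_{0},\kappa}]$, so $\mathrm{V}_{\theta_{0}}[\mathcal{E}_{\theta_{0},1}^{\prime}]=n\,\mathrm{V}_{\theta_{0}}[\mathcal{F}_{n,\theta_{0}}]$, as required. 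I expect the only genuine obstacle to be the first step: one must argue carefully that tracing the fixed ancilla slots out of the overall semi-classical POVM produces a bona fide single-sample POVM that still carries the full $\theta$-dependence, and that the differentiation-under-the-integral used in the unbiasedness check is justified through (E') and Lemma\,\ref{lem:est-cont}; the randomize-and-rescale bookkeeping and the independence-based additivity of the variance are then routine.
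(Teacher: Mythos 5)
Your proposal is correct and follows essentially the same route as the paper: draw $\kappa$ uniformly from $\{1,\dots,n\}$, apply the single-sample measurement $\mathsf{N}_{\theta_{0},\kappa}$ to $\rho_{\theta}$, set $T_{\theta_{0},1}^{\prime}=nF_{\theta_{0},\kappa}(y_{\kappa})+\theta_{0}$, and use independence plus $\mathrm{E}_{\theta_{0}}[F_{\theta_{0},\kappa}]=0$ to transfer local unbiasedness and to get $\mathrm{V}_{\theta_{0}}[\mathcal{E}_{\theta_{0},1}^{\prime}]=n\mathrm{V}_{\theta_{0}}[\mathcal{F}_{n,\theta_{0}}]$. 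Your opening partial-trace extraction of the single-sample POVM is redundant (independent semi-classical measurements already come with single-sample components $\mathsf{N}_{\theta_{0},\kappa}$ by definition), but it is harmless, and your explicit verifications of (E') and of differentiation under the integral only flesh out what the paper dismisses as trivial.
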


\begin{proof}
$\mathsf{M}_{\theta_{0}}^{^{\prime}}$ is constructed as follows; generate
$x_{\kappa}\in\left\{  1,\cdots,n\right\}  $ according to uniform
distribution, and apply $\mathsf{N}_{\theta_{0},x_{\kappa}}$ to $\rho_{\theta
}$, generating the data $y_{\kappa}$.\ The data resulting from $\mathsf{M}%
_{\theta_{0}}^{^{\prime}}$ is the pair $y_{\kappa}^{\prime}:=(x_{\kappa
},y_{x_{\kappa}})$. \ $T_{\theta_{0},1}^{\prime}$ is defined by $T_{\theta
_{0},1}^{\prime}\left(  y_{\kappa}^{\prime}\right)  :=nF_{\theta_{0}%
,x_{\kappa}}\left(  y_{x_{\kappa}}\right)  +\theta_{0}$.

Observe $\mathrm{E}_{\theta}^{\mathsf{M}_{\theta_{0}}^{^{\prime}}}\left[
T_{\theta_{0},1}^{\prime}\right]  =\frac{1}{n}\sum_{\kappa=1}^{n}%
n\mathrm{E}_{\theta}^{\mathsf{N}_{\theta_{0},\kappa}}\left[  F_{\theta
_{0},\kappa}\right]  +\theta_{0}=\mathrm{E}_{\theta}^{\mathsf{N}_{\theta_{0}%
}^{n}}\left[  S_{\theta_{0},n}\right]  $, implying (\ref{locally-unbiased})
for $\mathcal{E}_{\theta_{0}}^{\prime}$. MSE of $\mathcal{E}_{\theta_{0}%
}^{\prime}$ is computed as follows: $\mathrm{V}_{\theta_{0}}\left[
\mathcal{E}_{\theta_{0},1}^{\prime}\right]  =\frac{1}{n}\sum_{\kappa=1}%
^{n}n^{2}\mathrm{E}_{\theta}^{\mathsf{N}_{\theta_{0}}^{n}}\left[
F_{\theta_{0},\kappa}\left(  Y_{\kappa}\right)  F_{\theta_{0},\kappa}\left(
Y_{\kappa}\right)  ^{T}\right]  =n\mathrm{V}_{\theta_{0}}\left[
\mathcal{F}_{n,\theta_{0}}\right]  $. (\ref{locally-unbiased}) and (E') for
$\mathcal{E}_{\theta_{0}}^{\prime}$ are trivial.
\end{proof}

Due to Lemma\thinspace\ref{lem:ec-weak}, the above two lemmas leads to `$\geq
$'- part of the first identity of Theorem\thinspace\ref{th:cr-cl}, in the case
where $\rho_{\theta}>0$, $\forall\theta\in\Theta$. The statement for the
general case is straightforward consequence of the following lemma.

\begin{lemma}
\label{lem:approx-variance}Let $\mathcal{E}_{\theta,n}:=\left\{
\mathsf{M}_{\theta}^{n},\,T_{\theta,n}\right\}  $ be a locally unbiased
estimator at $\theta$ with (E'). Then, $\exists V\geq0$, $\forall
\varepsilon>0$, $\exists\mathcal{E}_{\varepsilon,\theta,1}^{\prime
}:=\{\mathsf{M}_{\theta}^{\prime},T_{\varepsilon,\theta,1}^{\prime}\}$\ acting
on a single sample with (E') and $\mathrm{V}_{\theta}\left[  \mathcal{E}%
_{\theta,n}\right]  \geq\frac{1}{n}\mathrm{V}_{\theta}\left[  \mathcal{E}%
_{\theta,1}^{\prime}\right]  -\frac{\varepsilon}{\left(  1-\varepsilon\right)
}V$.
\end{lemma}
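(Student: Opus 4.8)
The plan is to reduce the general model to the strictly positive case already handled by Lemmas \ref{lem:semi-classical} and \ref{lem:semi-classical-2}, by mixing $\rho_\theta$ with a fixed full‑rank reference state. Fix a density operator $\sigma>0$ and set $\rho_\theta^\varepsilon:=(1-\varepsilon)\rho_\theta+\varepsilon\sigma$, which is strictly positive for $0<\varepsilon<1$ and, since $\partial_i\rho_\theta^\varepsilon=(1-\varepsilon)\partial_i\rho_\theta$, satisfies (M.1). Write $\mathrm{E}^\varepsilon_\theta$, $P^\varepsilon_\theta$, $\mathrm{V}^\varepsilon_\theta$ for the quantities attached to this perturbed model through the \emph{same} semi-classical measurement $\mathsf{M}^n_\theta$. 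All displays below are in the operator (Loewner) order $\geq$ of the excerpt, $\mathbf{1}$ denotes the $m\times m$ identity, and $n$ is fixed, so every constant depending on $n$, $a_{4,n}$, $\Vert\sigma\Vert_1$ and the bounds in (M.1) may eventually be absorbed into $V$.

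First I would compare the $n$-sample variances of $\mathcal{E}_{\theta,n}$ under the two models. Expanding $(\rho_\theta^\varepsilon)^{\otimes n}=(1-\varepsilon)^n\rho_\theta^{\otimes n}+R_\theta^\varepsilon$ with $R_\theta^\varepsilon\geq 0$ and $\mathrm{tr}\,R_\theta^\varepsilon=1-(1-\varepsilon)^n$, and using that $\mathcal{E}_{\theta,n}$ is locally unbiased at $\theta$ (so $\mathrm{V}_\theta[\mathcal{E}_{\theta,n}]=\mathrm{E}^{\mathsf{M}^n_\theta}_\theta[(T_{\theta,n}-\theta)(T_{\theta,n}-\theta)^T]$) together with $\Vert T_{\theta,n}-\theta\Vert\leq a_{4,n}$ from (E'), I obtain
\begin{equation*}
\mathrm{V}^\varepsilon_\theta[\mathcal{E}_{\theta,n}]\leq(1-\varepsilon)^n\,\mathrm{V}_\theta[\mathcal{E}_{\theta,n}]+a_{4,n}^2\bigl(1-(1-\varepsilon)^n\bigr)\mathbf{1},
\end{equation*}
whence, dividing by $(1-\varepsilon)^n\leq 1$, $\mathrm{V}_\theta[\mathcal{E}_{\theta,n}]\geq\mathrm{V}^\varepsilon_\theta[\mathcal{E}_{\theta,n}]-O(\varepsilon)\,\mathbf{1}$.

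Next, because $\rho_\theta^\varepsilon>0$, Lemmas \ref{lem:semi-classical} and \ref{lem:semi-classical-2} apply to the perturbed model; the only wrinkle is that $\mathcal{E}_{\theta,n}$ need not be locally unbiased for $\{\rho_\theta^\varepsilon\}$. Its perturbed bias matrix $B^\varepsilon:=\bigl(\int T^i_{\theta,n}\,\mathrm{tr}\,\partial_j(\rho_\theta^\varepsilon)^{\otimes n}M^n_\theta(\mathrm{d}\omega_n)\bigr)_{i,j}$ tends to $\delta^i_j$ as $\varepsilon\to0$ (by local unbiasedness of $\mathcal{E}_{\theta,n}$ and $\Vert\partial_j(\rho_\theta^\varepsilon)^{\otimes n}-\partial_j\rho_\theta^{\otimes n}\Vert_1=O(\varepsilon)$ via Lemma \ref{lem:diamond}), so $B^\varepsilon=\mathbf{1}+O(\varepsilon)$ is invertible for small $\varepsilon$. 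Renormalizing $T_{\theta,n}\mapsto(B^\varepsilon)^{-1}(T_{\theta,n}-\mathrm{E}^\varepsilon_\theta[T_{\theta,n}])+\theta$ produces $\tilde{\mathcal{E}}^\varepsilon$, locally unbiased at $\theta$ for the perturbed model and satisfying (E') with an enlarged constant, and with $\mathrm{V}^\varepsilon_\theta[\mathcal{E}_{\theta,n}]=B^\varepsilon\mathrm{V}^\varepsilon_\theta[\tilde{\mathcal{E}}^\varepsilon](B^\varepsilon)^T\geq\mathrm{V}^\varepsilon_\theta[\tilde{\mathcal{E}}^\varepsilon]-O(\varepsilon)\,\mathbf{1}$. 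Applying Lemma \ref{lem:semi-classical} then Lemma \ref{lem:semi-classical-2} to $\tilde{\mathcal{E}}^\varepsilon$ yields a single-sample estimator $\mathcal{E}'_{\varepsilon,\theta,1}=\{\mathsf{M}'_\theta,T'_{\varepsilon,\theta,1}\}$ with (E') and $\mathrm{V}^\varepsilon_\theta[\mathcal{E}'_{\varepsilon,\theta,1}]=n\,\mathrm{V}^\varepsilon_\theta[\mathcal{F}^\varepsilon]\leq n\,\mathrm{V}^\varepsilon_\theta[\tilde{\mathcal{E}}^\varepsilon]$, so that $\mathrm{V}^\varepsilon_\theta[\mathcal{E}_{\theta,n}]\geq\tfrac1n\mathrm{V}^\varepsilon_\theta[\mathcal{E}'_{\varepsilon,\theta,1}]-O(\varepsilon)\,\mathbf{1}$.

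Finally I would transfer the single-sample variance back to the original model. Since $\mathsf{M}'_\theta$ acts on one copy, its outcome law is linear in the input state, hence $P'^\varepsilon_\theta=(1-\varepsilon)P'_\theta+\varepsilon P'_\sigma$ is a classical two-point mixture, and the law of total variance gives $\mathrm{V}^\varepsilon_\theta[\mathcal{E}'_{\varepsilon,\theta,1}]\geq(1-\varepsilon)\mathrm{V}_\theta[\mathcal{E}'_{\varepsilon,\theta,1}]$. Chaining the three displayed bounds and using that $\mathrm{V}_\theta[\mathcal{E}'_{\varepsilon,\theta,1}]$ is bounded by the square of the single-sample (E') diameter,
\begin{equation*}
\mathrm{V}_\theta[\mathcal{E}_{\theta,n}]\geq\frac{1-\varepsilon}{n}\mathrm{V}_\theta[\mathcal{E}'_{\varepsilon,\theta,1}]-O(\varepsilon)\,\mathbf{1}=\frac1n\mathrm{V}_\theta[\mathcal{E}'_{\varepsilon,\theta,1}]-O(\varepsilon)\,\mathbf{1},
\end{equation*}
and since every $O(\varepsilon)$ term is dominated by $\tfrac{\varepsilon}{1-\varepsilon}$ times a fixed constant, collecting them into one $V\geq0$ gives the claim. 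The hard part is the middle step: checking that strict positivity of $\rho_\theta^\varepsilon$ genuinely unlocks Lemmas \ref{lem:semi-classical} and \ref{lem:semi-classical-2} and that the renormalization needed to restore exact local unbiasedness for the perturbed model disturbs the variance only by $O(\varepsilon)$ (for fixed $n$); by contrast, the comparison of mixture variances and the packaging of the errors into the prescribed $\tfrac{\varepsilon}{1-\varepsilon}V$ form are routine.
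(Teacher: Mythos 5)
Your proposal is correct and is essentially the paper's own argument: mix with a full-rank $\sigma$ to get $\rho_{\theta,\varepsilon}=(1-\varepsilon)\rho_\theta+\varepsilon\sigma$, renormalize the given estimator so that it is locally unbiased for the perturbed (now strictly positive) model, invoke Lemmas \ref{lem:semi-classical} and \ref{lem:semi-classical-2} there, and transfer the single-sample variance back, absorbing every $O(\varepsilon)$ error into a fixed $V$; your final law-of-total-variance step is exactly the paper's last inequality. Two differences are worth recording. First, your bookkeeping is tighter than the paper's at the delicate spots: the paper asserts the exact identity $\mathrm{V}_\theta\left[\mathcal{E}_{\theta,n}\right]=(1-\varepsilon)\mathrm{V}_{\theta,\varepsilon}\left[\mathcal{E}_{\theta,\varepsilon,n}\right]-\frac{\varepsilon}{1-\varepsilon}V$ and takes the perturbed bias matrix of $\mathcal{E}_{\theta,n}$ to be exactly $(1-\varepsilon)\mathbf{1}$, both of which are exact only for $n=1$; for $n>1$ they hold only up to remainders, since $\rho_{\theta,\varepsilon}^{\otimes n}$ is not a two-point mixture involving $\rho_\theta^{\otimes n}$ and the law of total variance carries a mean-difference cross term. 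Your expansion $(\rho_\theta^\varepsilon)^{\otimes n}=(1-\varepsilon)^n\rho_\theta^{\otimes n}+R$ with $R\geq0$, and your use of the true bias matrix $B^\varepsilon=\mathbf{1}+O(\varepsilon)$, handle this correctly. Second, the paper performs one further step that you omit: it rescales the output by $T_{\theta,1}^{\prime}:=(1-\varepsilon)\left(T_{\varepsilon,\theta,1}^{\prime}-\mathrm{E}_\theta\left[T_{\varepsilon,\theta,1}^{\prime}\right]\right)+\theta$ so that the final single-sample estimator is locally unbiased for the \emph{original} model $\{\rho_\theta\}$, whereas your $\mathcal{E}_{\varepsilon,\theta,1}^{\prime}$ is locally unbiased only for the perturbed model (under $\rho_\theta$ its mean is not $\theta$ and its bias matrix is $\frac{1}{1-\varepsilon}\mathbf{1}$). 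The lemma as literally stated asks only for (E'), so your argument does prove it; but so stated the lemma is nearly vacuous (a constant estimator satisfies it), and its role in the `$\geq$' part of Theorem \ref{th:cr-cl} requires the single-sample estimator to lie in the infimum's feasible set, i.e.\ to satisfy (\ref{locally-unbiased}) for $\{\rho_\theta\}$. Your construction should therefore end with this one extra affine rescaling, which costs only a further $O(\varepsilon)$ term of the same absorbable form.
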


\begin{proof}
Let $\sigma>0$, and define $\rho_{\theta,\varepsilon}:=\left(  1-\varepsilon
\right)  \rho_{\theta}+\varepsilon\sigma$. Denote by $\mathrm{E}%
_{\theta,\varepsilon}\left[  T_{\theta,n}\right]  $ and $\mathrm{V}%
_{\theta,\varepsilon}\left[  \mathcal{E}_{\theta,n}\right]  $ the average and
the variance of $\mathcal{E}_{\theta,n}$ with respect to $\rho_{\theta
,\varepsilon}$. Then, there is a locally unbiased estimator $\mathcal{E}%
_{\theta,\varepsilon,n}:=\left\{  \mathsf{M}_{\theta}^{n},\,T_{\theta
,\varepsilon,n}\right\}  $ with respect to the quantum statistical model
$\left\{  \rho_{\theta,\varepsilon}\right\}  _{\theta\in\Theta}$ which
satisfies (E) and the relation $T_{\theta,\varepsilon,n}=\frac{1}%
{1-\varepsilon}\left(  T_{\theta,n}-\mathrm{E}_{\theta,\varepsilon}\left[
T_{\theta,n}\right]  \right)  +\theta$. Since $\rho_{\theta,\varepsilon}>0$
and the family $\left\{  \rho_{\theta,\varepsilon}\right\}  _{\theta\in\Theta
}$ satisfies (M.1,2), Lemmas\thinspace\ref{lem:semi-classical}%
-\ref{lem:semi-classical-2} imply existence of a locally unbiased estimator
$\mathcal{E}_{\varepsilon,\theta,1}^{\prime}:=\{\mathsf{M}_{\theta}^{\prime
},T_{\varepsilon,\theta,1}^{\prime}\}$ acting on single sample such that (E')
and $\mathrm{V}_{\theta,\varepsilon}\left[  \mathcal{E}_{\theta,\varepsilon
,n}\right]  \geq\frac{1}{n}\mathrm{V}_{\theta,\varepsilon}\left[
\mathcal{E}_{\theta,\varepsilon,1}^{\prime}\right]  $.

Defining $\mathcal{E}_{\theta,1}^{\prime}:=$ $\{\mathsf{M}_{\theta}^{\prime
},T_{\theta,1}^{\prime}\}$ by $T_{\theta,1}^{\prime}:=\left(  1-\varepsilon
\right)  \left(  T_{\theta,\varepsilon,1}^{\prime}-\mathrm{E}_{\theta}\left[
T_{\theta,1}^{\prime}\right]  \right)  +\theta$, $\mathcal{E}_{\theta
,1}^{\prime}$ is locally unbiased with respect to $\left\{  \rho_{\theta
}\right\}  _{\theta\in\Theta}$. \ Letting $V$ be the variance of
$\mathcal{E}_{\theta,n}$ with respect to $\sigma$, we have \newline$\quad
\quad\quad\mathrm{V}_{\theta}\left[  \mathcal{E}_{\theta,n}\right]  =\left(
1-\varepsilon\right)  \mathrm{V}_{\theta,\varepsilon}\left[  \mathcal{E}%
_{\theta,\varepsilon,n}\right]  -\frac{\varepsilon}{1-\varepsilon}V\geq
\frac{1-\varepsilon}{n}\mathrm{V}_{\theta,\varepsilon}\left[  \mathcal{E}%
_{\theta,\varepsilon,1}^{\prime}\right]  -\frac{\varepsilon}{1-\varepsilon
}V\geq\frac{1}{n}\mathrm{V}_{\theta}\left[  \mathcal{E}_{\theta,1}^{\prime
}\right]  -\frac{\varepsilon}{1-\varepsilon}V$.
\end{proof}

Combining with the achievability, the first identity of Theorem\thinspace
\thinspace\ref{th:cr-cl} is proved. The second identity is shown using the
analogous argument as the proof of (\ref{cr-q1})=(\ref{cr-q2}), and
Theorem\thinspace\ref{th:cr-cl} is proved.

Note that the estimator achieving the lowerbound is one-way semi-classical.
This means that the optimal asymptotic cost of one-way semi-classical
measurements is also $C_{\theta}\left(  G_{\theta},\mathcal{M}\right)  $.

\vskip-10mm

\section*{{\protect\normalsize 5. LOCC STATE ESTIMATION}}

\vskip-3mm Suppose the state $\rho_{\theta}^{\otimes n}$ is shared by remote
party, Alice and Bob: $\mathcal{H=H}_{\mathfrak{a}}\otimes\mathcal{H}%
_{\mathfrak{b}}$. Suppose also that Alice and Bob can exchange classical
messages, but cannot interact quantumly with each other. So they do the
following $R_{n}$-round measurement $\mathsf{M}^{n}$: At each round, Alice and
Bob measures her/his share of the samples, and the measurements of the $r$-th
round depend on the previously obtained data. We denote by $\xi_{r}%
^{\mathfrak{a}}$($\in$ $%
%TCIMACRO{\U{211d} }%
%BeginExpansion
\mathbb{R}
%EndExpansion
^{l}$) and $\xi_{r}^{\mathfrak{b}}$($\in$ $%
%TCIMACRO{\U{211d} }%
%BeginExpansion
\mathbb{R}
%EndExpansion
^{l}$) the data obtained at the $r$-th round by Alice's and Bob's measurement,
respectively. Also, $\vec{\xi}_{r}$ denotes $\left(  \xi_{r^{\prime}%
}^{\mathfrak{a}},\xi_{r^{\prime}}^{\mathfrak{b}}\right)  _{r^{\prime}=1}^{r}$.
We denote by $\mathfrak{C}_{r}$ and $\mathfrak{C}_{r}^{x}$ ($x=\mathfrak{a}%
,\mathfrak{b}$) the Borel field over the space which $\vec{\xi}_{r-1}$ and
$\xi_{r}^{x}$ takes values in, respectively. The measurement acting in the
$r$-th round by Alice (Bob) is denoted by $\mathsf{M}_{r,\mathfrak{a}}%
^{\vec{\xi}_{r-1}}$ ( $\mathsf{M}_{r,\mathfrak{b}}^{\vec{\xi}_{r-1}}$, resp.).
Such operations are said to be \textit{local operations and quantum
communications }(\textit{LOCC}, in short). An important subclass of LOCC is
\textit{local operation (LO)}, where Alice and Bob does $\mathsf{M}%
_{\mathfrak{a}}^{n}$ and $\mathsf{M}_{\mathfrak{b}}^{n}$, independently.

The difference between semi-classical and LOCC measurements is the split of
the actions. In the former, split is between samples. In the latter, the split
is between Alice and Bob. Other than this point, basically they are the same
concept. Especially, LO corresponds to independent semi-classical measurements.

Define $C_{\theta}^{Q,L}\left(  G_{\theta},\mathcal{M}\right)  $ and
$C_{\theta}^{L}\left(  G_{\theta},\mathcal{M}\right)  $ by restricting the
range of measurement to LOCC in the definition of $C_{\theta}^{Q}\left(
G_{\theta},\mathcal{M}\right)  $ and $C_{\theta}\left(  G_{\theta}%
,\mathcal{M}\right)  $, respectively. Then, trivially, we have:

\begin{theorem}
Suppose (M.1-3) hold. Then, $\quad$\newline$\quad
\ \ \ \ \ \ \ \ \ \ \ \ \ \ \ \ \ \ \ C_{\theta}^{L}\left(  G_{\theta
},\mathcal{M}\right)  =\inf\left\{  \mathrm{Tr}\,G_{\theta_{0}}\mathrm{V}%
_{\theta_{0}}\left[  \mathcal{E}_{\theta_{0},1}\right]  \,\text{\thinspace
};\,\text{ LOCC, (\ref{locally-unbiased}), (E) for }n=1\right\}  $,
\newline$\quad\quad\quad\quad\quad\quad\quad\ C_{\theta}^{Q,L}\left(
G_{\theta},\mathcal{M}\right)  =\underset{n\rightarrow\infty}{\lim}%
\inf\left\{  n\mathrm{Tr}\,G_{\theta_{0}}\mathrm{V}_{\theta_{0}}\left[
\mathcal{E}_{\theta_{0},n}\right]  \,\text{\thinspace};\,\text{ LOCC,
(\ref{locally-unbiased}), (E) }\right\}  $.
\end{theorem}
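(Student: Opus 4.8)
The plan is to obtain both identities by re-running the proofs of Theorem~\ref{th:cr-cl} and Theorem~\ref{th:cr}, respectively, while carrying the LOCC restriction through every construction. The guiding observation---already recorded in the text above---is that the Alice--Bob split sits on a different ``axis'' than the sample split, so that the reductions used earlier neither create nor destroy the LOCC property. Concretely, for the first identity the semi-classical machinery reduces the $n$-sample problem to a single sample and the LOCC constraint is merely a passenger, while for the second identity the collective $n$-sample bound of Theorem~\ref{th:cr} is reproduced verbatim with ``measurement'' everywhere replaced by ``LOCC measurement''. No genuinely new Alice--Bob reduction analogous to Lemma~\ref{lem:semi-classical} is needed, which is exactly why the statement can be labelled trivial.

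For the first identity I would repeat the proof of Theorem~\ref{th:cr-cl} step by step, checking only that each reduction preserves LOCC. The lowerbound reductions of Lemmas~\ref{lem:semi-classical}, \ref{lem:semi-classical-2} and \ref{lem:approx-variance} all act sample-wise: the fake-ensemble construction applies the given LOCC measurement $\mathsf{M}_{\theta_0}^n$ to a reshuffled product of copies, and the single-sample reduction of Lemma~\ref{lem:semi-classical-2} randomizes over $\kappa$ before applying the resulting $\mathsf{N}_{\theta_0,\kappa}$ to one copy. Since both randomization and application of an LOCC operation to a (fake) sub-ensemble are again LOCC, the resulting single-sample estimator is LOCC, and the semi-classical-plus-LOCC infimum collapses to the single-sample LOCC locally unbiased bound. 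For achievability, the one-way semi-classical estimator~(\ref{def-est}) with $n_1=1$, built from an LOCC single-sample base estimator $\mathcal{E}_{\theta,1}$, is itself LOCC, so Lemmas~\ref{lem:est-compose}--\ref{lem:est-asym-unbiased} furnish a matching LOCC upperbound.

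For the second identity I would mirror Theorem~\ref{th:cr}. The lowerbound is immediate from Subsection~3.2: the passage from an asymptotically unbiased estimator to the locally unbiased $\mathcal{E}_{\theta,n}$ only reshapes the estimate $T_n$ through the matrix $B_\theta[\mathcal{E}_n]$ and leaves the measurement $\mathsf{M}^n$---hence its LOCC property---untouched, so restricting both sides to LOCC preserves the inequality leading to~(\ref{cr-q1}). Achievability uses the two-step estimator~(\ref{def-est}): applying an LOCC base estimator $\mathcal{E}_{\theta,n_1}$ to each of the $n_2$ disjoint ensembles, preceded by a preliminary estimation stage, yields again an LOCC measurement, so Lemmas~\ref{lem:est-compose} and~\ref{lem:est-asym-unbiased} apply unchanged and deliver the $\lim_{n}\inf n\,\mathrm{Tr}\,G_\theta\mathrm{V}_\theta$ expression. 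The $(E')\Leftrightarrow(E)$ identity and Lemma~\ref{lem:trGV-cont} carry over with no modification.

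The one step that is not purely clerical is supplying the preliminary rough estimator of (M.3) \emph{inside} the LOCC class: the whole two-step (and one-way semi-classical) construction is LOCC only if $\tilde{\mathsf{M}}$ is an LOCC measurement on a single sample in $\mathcal{H}_{\mathfrak{a}}\otimes\mathcal{H}_{\mathfrak{b}}$. This is the main point to verify. In finite dimension it is harmless, since one may take $\tilde{\mathsf{M}}$ to be a product (local) informationally complete POVM---a tensor product of informationally complete local POVMs is informationally complete---so the tomographic construction preceding Lemma~\ref{lem:est-compose} goes through with an LO, hence LOCC, estimator $\tilde{\mathcal{E}}_n$, and the Gaussian-submodel case is treated analogously. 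Thus, provided (M.3) is realised by an LOCC estimator, both chains of equalities close and the theorem follows.
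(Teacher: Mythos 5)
Your overall strategy is exactly what the paper intends: the paper offers no argument for this theorem beyond the word ``trivially,'' and its meaning is precisely your reading---the sample-split reductions behind Theorems~\ref{th:cr} and \ref{th:cr-cl} are re-run with the LOCC constraint carried along as a passenger, no new Alice--Bob reduction being needed. Your observation about (M.3) is also a genuine point that the paper suppresses: the two-step construction (\ref{def-est}) is LOCC only if the preliminary measurement $\mathsf{\tilde{M}}$ is, (M.3) as stated does not provide this, and your product informationally complete POVM repairs it in finite dimensions (and the Gaussian-submodel construction can likewise be localized). So the achievability halves of both identities, and the entire second identity, are fine as you argue them.

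There is, however, one step whose justification as written would fail. For the first identity you must push a semi-classical-and-LOCC estimator through the reduction of Lemma~\ref{lem:semi-classical}, and you justify LOCC-preservation by saying that ``application of an LOCC operation to a (fake) sub-ensemble is again LOCC.'' The theorem is stated for general models---the paper specializes to $\rho_{\theta}=\rho_{\theta}^{\mathfrak{a}}\otimes\rho_{\theta}^{\mathfrak{b}}$ only \emph{after} it---so the fake copies $\rho_{\theta_{0}}$ may be entangled across the Alice--Bob cut, and then the fake ensemble itself cannot be prepared by LOCC; the composite ``prepare fakes, then run $\mathsf{M}_{\theta_{0}}^{n}$'' is not literally an LOCC operation on the real copy. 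The conclusion you need is still true, but for a different reason: because $\mathsf{M}_{\theta_{0}}^{n}$ is semi-classical, the fake copies never interact quantumly with the real one, so their outcomes are classical random variables whose conditional distributions (given the data collected so far) are computable from the known state $\rho_{\theta_{0}}$ and the known instruments; hence they can be produced by a classical random number generator, and the induced measurement on the real copy is a sequence of adaptive LOCC instruments driven by classical randomness, which is LOCC. Without this classical-simulation argument your reduction is unjustified exactly in the entangled case, which the theorem is meant to cover; the same repair is needed for the smoothed models $\rho_{\theta,\varepsilon}$ in the LOCC analogue of Lemma~\ref{lem:approx-variance}, whose fake copies inherit the entanglement of $\rho_{\theta_{0}}$.
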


From here, we focus on the case of $\rho_{\theta}=\rho_{\theta}^{\mathfrak{a}%
}\otimes\rho_{\theta}^{\mathfrak{b}}$. \ The motivation of studying this
seemingly easy case is as follows. Suppose that $\mathrm{rank}\rho_{\theta}%
=1$, $\dim\mathcal{H}<\infty$, and $\rho_{\theta}\neq\rho_{\theta
}^{\mathfrak{a}}\otimes\rho_{\theta}^{\mathfrak{b}}$. Then it is known
that\ $C_{\theta}^{L}=C_{\theta}^{Q,L}=C_{\theta}=C_{\theta}^{Q}$
(Matsumoto\thinspace(2007)). \ The estimator used to show the identity,
however, fails in case of $\rho_{\theta}=\rho_{\theta}^{\mathfrak{a}}%
\otimes\rho_{\theta}^{\mathfrak{b}}$.

In this case, we can translate the argument in Section\thinspace4 about the
lowerbound to the asymptotic cost of semi-classical estimators to LOCC
estimators. To see this, observe the proof of Lemmas\thinspace
\ref{lem:leibniz}-\thinspace\ref{lem:semi-classical} are valid even if
non-identical independent samples $\bigotimes_{\kappa=1}^{n}\rho_{\theta
}^{\left(  \kappa\right)  }$ are given. Below, we present the analogue of
Lemma\thinspace\ref{lem:semi-classical}. (The proof is omitted being almost
parallel.) Using this lemma, the optimization over LOCC is reduced to the one
over LO, where $\mathsf{N}_{\theta_{0},n}^{\mathfrak{a}}$ and $\mathsf{N}%
_{\theta_{0},n}^{\mathfrak{b}}$ is are measured independently, producing the
data $y^{\mathfrak{a}}$,$y^{\mathfrak{b}}$.

\begin{lemma}
Suppose $\rho_{\theta}=\rho_{\theta}^{\mathfrak{a}}\otimes\rho_{\theta
}^{\mathfrak{b}}>0$, $\forall\theta\in\Theta$. Suppose an LOCC estimator
$\mathcal{E}_{\theta_{0},n}=\{\mathsf{M}_{\theta_{0}}^{n},\,T_{\theta_{0}%
,n}\}$ satisfies (\ref{locally-unbiased}) and (E'). Then, we can find an LO
estimator $\mathcal{F}_{\theta_{0},n}=\{\mathsf{N}_{\theta_{0},n}%
,\,S_{\theta_{0},n}\}$ , such that $\mathrm{V}_{\theta_{0}}\left[
\mathcal{F}_{\theta_{0},n}\right]  \leq\mathrm{V}_{\theta_{0}}\left[
\mathcal{E}_{\theta_{0},n}\right]  $, (\ref{locally-unbiased}), and (E') hold.
Moreover, $S_{\theta_{0},n}$ is in the following form:%
\begin{equation}
S_{\theta_{0},n}=F_{\theta_{0},n}^{\mathfrak{a}}\left(  \xi_{n}^{\mathfrak{a}%
}\right)  +F_{\theta_{0},n}^{\mathfrak{b}}\left(  \xi_{n}^{\mathfrak{b}%
}\right)  +\theta_{0}\,. \label{S=F+F}%
\end{equation}

\end{lemma}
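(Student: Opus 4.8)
The plan is to run the proof of Lemma~\ref{lem:semi-classical} almost verbatim, with the split between samples $\kappa$ replaced by the split between the two parties $x\in\{\mathfrak{a},\mathfrak{b}\}$. What made the sample-wise argument work was that, conditioned on the earlier rounds, the round-$r$ data of distinct samples were independent; the exact analogue here is that, conditioned on $\mathfrak{C}_r$, Alice's datum $\xi_r^{\mathfrak{a}}$ and Bob's datum $\xi_r^{\mathfrak{b}}$ are independent, which holds precisely because $\rho_\theta=\rho_\theta^{\mathfrak{a}}\otimes\rho_\theta^{\mathfrak{b}}$ is a product state. Granting this, I would apply the Leibniz rule (Lemma~\ref{lem:leibniz}, valid since each $\mathrm{E}_\theta[T_{\theta_0,n}|\cdot]$ inherits (E')) recursively across the $R_n$ rounds to obtain
\[
\left.\partial_i\mathrm{E}_\theta\!\left[T_{\theta_0,n}\right]\right|_{\theta=\theta_0}=\sum_{r=1}^{R_n}\sum_{x\in\{\mathfrak{a},\mathfrak{b}\}}\partial_i\!\left(\mathrm{E}_{\theta_0}\!\left[\mathrm{E}_\theta\!\left[\mathrm{E}_{\theta_0}\!\left[T_{\theta_0,n}\,|\,\langle\mathfrak{C}_r,\mathfrak{C}_r^x\rangle\right]\,|\,\mathfrak{C}_r\right]\right]\right)_{\theta=\theta_0},
\]
exactly mirroring (\ref{decom-derivative}).

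Next I would set
\[
f_{\theta_0,r,x}:=\mathrm{E}_{\theta_0}\!\left[T_{\theta_0,n}\,|\,\langle\mathfrak{C}_r,\mathfrak{C}_r^x\rangle\right]-\mathrm{E}_{\theta_0}\!\left[T_{\theta_0,n}\,|\,\mathfrak{C}_r\right],\qquad F_{\theta_0,n}^{x}:=\sum_{r=1}^{R_n}f_{\theta_0,r,x},
\]
and put $S'_{\theta_0,n}:=F_{\theta_0,n}^{\mathfrak{a}}+F_{\theta_0,n}^{\mathfrak{b}}+\theta_0$. The computation of (\ref{dEf=}) carries over unchanged and gives $\left.\partial_i\mathrm{E}_\theta[f_{\theta_0,r,x}]\right|_{\theta=\theta_0}=\left.\partial_i\mathrm{E}_{\theta_0}\mathrm{E}_\theta[\mathrm{E}_{\theta_0}[T_{\theta_0,n}|\langle\mathfrak{C}_r,\mathfrak{C}_r^x\rangle]|\mathfrak{C}_r]\right|_{\theta=\theta_0}$, so that $\{\mathsf{M}_{\theta_0}^n,S'_{\theta_0,n}\}$ is locally unbiased at $\theta_0$. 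The orthogonality relations $\mathrm{E}_{\theta_0}f_{\theta_0,r,x}(f_{\theta_0,r',x'})^T=0$ for $(r,x)\neq(r',x')$, together with $\mathrm{E}_{\theta_0}f_{\theta_0,r,x}(T_{\theta_0,n})^T=\mathrm{E}_{\theta_0}f_{\theta_0,r,x}(f_{\theta_0,r,x})^T$, then yield $\mathrm{V}_{\theta_0}[\{\mathsf{M}_{\theta_0}^n,S'_{\theta_0,n}\}]\leq\mathrm{V}_{\theta_0}[\mathcal{E}_{\theta_0,n}]$, exactly as in the semi-classical case.

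Finally I would turn this into a genuine LO measurement by the fake-ensemble device, now split between the parties rather than the samples. Writing $\rho_\theta^{\otimes n}=(\rho_\theta^{\mathfrak{a}})^{\otimes n}\otimes(\rho_\theta^{\mathfrak{b}})^{\otimes n}$, Alice keeps her true share $(\rho_\theta^{\mathfrak{a}})^{\otimes n}$, locally prepares a fake Bob share $(\rho_{\theta_0}^{\mathfrak{b}})^{\otimes n}$, and executes the whole protocol $\mathsf{M}_{\theta_0}^n$ by herself, simulating all of Bob's rounds; symmetrically Bob keeps $(\rho_\theta^{\mathfrak{b}})^{\otimes n}$ and simulates Alice. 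Because the fake partner is frozen at the known $\theta_0$, the simulated data carries no $\theta$-dependence, so neither party communicates and the two measurements $\mathsf{N}_{\theta_0,n}^{\mathfrak{a}},\mathsf{N}_{\theta_0,n}^{\mathfrak{b}}$ are performed independently, i.e.\ form an LO measurement $\mathsf{N}_{\theta_0,n}$. Taking $S_{\theta_0,n}=F_{\theta_0,n}^{\mathfrak{a}}(\xi_n^{\mathfrak{a}})+F_{\theta_0,n}^{\mathfrak{b}}(\xi_n^{\mathfrak{b}})+\theta_0$ gives (\ref{S=F+F}); at $\theta=\theta_0$ each party's marginal coincides with the original, preserving the variance bound, while on differentiating in $\theta$ only the true share contributes, reproducing the decomposed derivative and hence local unbiasedness, and (E') is immediate. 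The step I expect to demand the most care is this last conversion — verifying that once Bob's half is fixed at $\theta_0$ the quantity $F_{\theta_0,n}^{\mathfrak{a}}$ is genuinely a function of Alice's data alone, and that the fake-ensemble simulation reproduces exactly the marginals used in the variance and unbiasedness computations, since this is where the product structure $\rho_\theta=\rho_\theta^{\mathfrak{a}}\otimes\rho_\theta^{\mathfrak{b}}$ and the conditional independence are invoked simultaneously.
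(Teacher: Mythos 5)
Your proposal is correct and takes exactly the route the paper intends: the paper omits the proof of this lemma on the grounds that it is ``almost parallel'' to that of Lemma~\ref{lem:semi-classical}, noting that Lemmas~\ref{lem:leibniz}--\ref{lem:semi-classical} remain valid for non-identical independent samples, which is precisely your party-for-sample substitution. Your fleshed-out version (conditional independence of $\xi_r^{\mathfrak{a}},\xi_r^{\mathfrak{b}}$ from the product structure, the recursive Leibniz decomposition, the orthogonality of the $f_{\theta_0,r,x}$ terms, and the fake-ensemble simulation in which each party locally prepares the other's share at $\theta_0$ to obtain an LO measurement) matches the paper's construction step by step.
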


%

%TCIMACRO{\TeXButton{vskip-lastskip}{\vskip-\lastskip}}%
%BeginExpansion
\vskip-\lastskip
%EndExpansion
Let $\mathcal{M}_{x}$ $\ $denote $\left\{  \rho_{\theta}^{x}\right\}
_{\theta\in\Theta}$ \ ($x=\mathfrak{a}$,$\mathfrak{b}$). Observe that the map
$\theta\rightarrow\rho_{\theta}^{x}$ may not be injective. Therefore, the
vector space $\left\{  \mathbf{v}_{\theta}^{x};\sum_{i=1}^{m}v^{x,i}%
\partial_{i}\rho_{\theta}^{x}=0\right\}  $ may not be $\{0\}$. We denote by
$\Pi_{\theta}^{x}$ the projector onto the orthogonal complement of this vector
space in $%
%TCIMACRO{\U{211d} }%
%BeginExpansion
\mathbb{R}
%EndExpansion
^{m}$. \ Letting $\mathcal{F}_{\theta_{0},n}^{x}:=\left\{  \mathsf{N}%
_{\theta_{0},n}^{x},\,F_{\theta_{0},n}^{x}\right\}  $, $\ B_{\theta_{0}%
}\left[  \mathcal{F}_{\theta_{0},n}^{x}\right]  \mathbf{v}_{\theta_{0}}^{x}=0$
if $\Pi_{\theta_{0}}^{x}\mathbf{v}_{\theta_{0}}^{x}=0$. Therefore, there is a
matrix $W_{\theta_{0}}^{x}$ with $B_{\theta_{0}}\left[  \mathcal{F}%
_{\theta_{0},n}^{x}\right]  =W_{\theta_{0}}^{x}\Pi_{\theta_{0}}^{x}$.

We want to minimize the variance of locally unbiased estimator in the form of
(\ref{S=F+F}). First, \ for a given $\left(  F_{\theta_{0},n}^{\mathfrak{a}%
},\,F_{\theta_{0},n}^{\mathfrak{b}}\right)  $, we define $S_{\theta_{0}%
,n}\left[  A_{\theta_{0},n}^{\mathfrak{a}},A_{\theta_{0},n}^{\mathfrak{b}%
}\right]  :=A_{\theta_{0},n}^{\mathfrak{a}}F_{\theta_{0},n}^{\mathfrak{a}%
}\left(  \xi_{n}^{\mathfrak{a}}\right)  +A_{\theta_{0},n}^{\mathfrak{b}%
}F_{\theta_{0},n}^{\mathfrak{b}}\left(  \xi_{n}^{\mathfrak{b}}\right)
+\theta_{0}$, where $\left(  A_{\theta_{0},n}^{\mathfrak{a}},\,A_{\theta
_{0},n}^{\mathfrak{b}}\right)  $\ moves over all the $m\times m$ real
invertible matrices. Elementary but tedious calculation shows that the
variance of such estimators is larger (in the sense that the difference is
positive semi-definite) than
\begin{equation}
\left(  \Pi_{\theta_{0}}^{\mathfrak{a}}\left(  \mathrm{V}_{\theta_{0}}\left[
\mathcal{E}_{\theta_{0},n}^{\mathfrak{a}}\right]  \right)  ^{-1}\Pi
_{\theta_{0}}^{\mathfrak{a}}+\Pi_{\theta_{0}}^{\mathfrak{b}}\left(
\mathrm{V}_{\theta_{0}}\left[  \mathcal{E}_{\theta_{0},n}^{\mathfrak{b}%
}\right]  \right)  ^{-1}\Pi_{\theta_{0}}^{\mathfrak{b}}\right)  ^{-1},
\label{(v-1+v-1)-1}%
\end{equation}%
%TCIMACRO{\TeXButton{vskip-lastskip}{\vskip-\lastskip}}%
%BeginExpansion
\vskip-\lastskip
%EndExpansion
where $\mathcal{E}_{\theta_{0},n}^{x}:=\left\{  \mathsf{N}_{\theta_{0},n}%
^{x},\tilde{F}_{\theta_{0},n}^{x}\right\}  $, $\tilde{F}_{\theta_{0},n}%
^{x}:=\left(  W_{\theta_{0}}^{x}\right)  ^{-1}F_{\theta_{0},n}^{x}$ and
$(\cdot)^{-1}$ in (\ref{(v-1+v-1)-1}) denotes generalized inverse. Observe
that $\mathcal{E}_{\theta_{0},n}^{x}$ ($x=\mathfrak{a}$,$\mathfrak{b}$)
satisfies
\begin{equation}
B_{\theta_{0}}\left[  \mathcal{E}_{\theta_{0},n}^{x}\right]  =\Pi_{\theta_{0}%
}^{x}\quad(x=\mathfrak{a},\mathfrak{b}). \label{locally-unbiased-prime}%
\end{equation}
\
%TCIMACRO{\TeXButton{vskip-lastskip}{\vskip-\lastskip}}%
%BeginExpansion
\vskip-\lastskip
%EndExpansion
In the end, we optimize (\ref{(v-1+v-1)-1}) with the constrain
(\ref{locally-unbiased}) and (E'). \ Here, constrain (E') can be replaced by
(E) without increasing the infimum, due to the analogous argument as the proof
of (\ref{cr-q1})=(\ref{cr-q2}).

So far we had assumed $\rho_{\theta}>0$. However, this assumption can be
removed due to an analogue of Lemma\thinspace\ref{lem:approx-variance} (the
proof is omitted, being straightforward.). Therefore, letting $\mathcal{V}%
_{\theta,n}^{x}\left(  \mathcal{M}\right)  $ the totality of $\mathrm{V}%
_{\theta}\left[  \mathcal{E}_{\theta,n}\right]  $ with
(\ref{locally-unbiased-prime}) and (E) ( $x=\mathfrak{a},\mathfrak{b}$), \ we have

\begin{theorem}
Suppose $\rho_{\theta}=\rho_{\theta}^{\mathfrak{a}}\otimes\rho_{\theta
}^{\mathfrak{b}}$, $\forall\theta\in\Theta$. Then,%
\begin{align*}
C_{\theta}^{L}\left(  G_{\theta},\mathcal{M}\right)   &  \geq\inf\left\{
\mathrm{Tr}\,G_{\theta}\left(  \Pi_{\theta}^{\mathfrak{a}}V_{\mathfrak{a}%
}^{-1}\Pi_{\theta}^{\mathfrak{a}}+\Pi_{\theta}^{\mathfrak{b}}V_{\mathfrak{b}%
}^{-1}\Pi_{\theta}^{\mathfrak{b}}\right)  ^{-1}\text{, }V_{x}\in
\mathcal{V}_{\theta,1}^{x}\left(  \mathcal{M}\right)  ,\,x=\mathfrak{a}%
,\mathfrak{b}\right\}  ,\\
C_{\theta}^{Q,L}\left(  G_{\theta},\mathcal{M}\right)   &  \geq\lim
_{n\rightarrow\infty}\inf\left\{  n\mathrm{Tr}\,G_{\theta}\left(  \Pi_{\theta
}^{\mathfrak{a}}V_{\mathfrak{a}}^{-1}\Pi_{\theta}^{\mathfrak{a}}+\Pi_{\theta
}^{\mathfrak{b}}V_{\mathfrak{b}}^{-1}\Pi_{\theta}^{\mathfrak{b}}\right)
^{-1}\text{, }V_{x}\in\mathcal{V}_{\theta,n}^{x}\left(  \mathcal{M}\right)
,\,x=\mathfrak{a},\mathfrak{b}\right\}  .
\end{align*}

\end{theorem}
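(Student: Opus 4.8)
The plan is to obtain the lowerbound by chaining the reductions assembled above with the matrix optimization underlying (\ref{(v-1+v-1)-1}); I carry out the $C^L_\theta$ (single-sample, $n=1$) case in detail, the $C^{Q,L}_\theta$ case being word for word the same once the factor $n$ and the index $n$ on $\mathcal{V}^x_{\theta,n}$ are retained and one passes to the limit $n\to\infty$ at the end. First I would reduce the MSE-cost to a variance-cost over \emph{locally unbiased} LOCC estimators, exactly as in the lowerbound part of Subsection~3.2: given any LOCC estimator obeying (\ref{asym-unbiased}) and (E), the relabelling $T_n=B_\theta[\mathcal{E}_n]\bigl(T_{\theta,n}-\theta\bigr)+\mathrm{E}_\theta^{\mathsf{M}^n}[T_n]$ keeps the same (LOCC) measurement and only post-processes the estimate, so LOCC-ness is preserved while $\mathrm{Tr}\,G_\theta\mathrm{MSE}\ge \mathrm{Tr}\,G_\theta\mathrm{V}$ for the associated locally unbiased estimator. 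Since (E$'$) implies (E), it is enough to prove the bound for the (E$'$) class and then relax to (E).

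Next I would invoke the LO-reduction lemma stated just above (the product analogue of Lemma~\ref{lem:semi-classical}): assuming first $\rho_\theta=\rho_\theta^{\mathfrak{a}}\otimes\rho_\theta^{\mathfrak{b}}>0$, every locally unbiased LOCC estimator with (E$'$) is dominated in variance by an LO estimator $\mathcal{F}_{\theta_0,n}=\{\mathsf{N}_{\theta_0,n},S_{\theta_0,n}\}$ with $S_{\theta_0,n}$ of the form (\ref{S=F+F}), still locally unbiased and satisfying (E$'$). Because $\rho_\theta$ is a product and Alice and Bob measure independently, the data $\xi^{\mathfrak{a}}$ and $\xi^{\mathfrak{b}}$ are independent and each $F^x_{\theta_0,n}$ is centred, so the variance of any combination $A^{\mathfrak{a}}F^{\mathfrak{a}}(\xi^{\mathfrak{a}})+A^{\mathfrak{b}}F^{\mathfrak{b}}(\xi^{\mathfrak{b}})+\theta_0$ splits, the cross terms vanishing, into $A^{\mathfrak{a}}\mathrm{V}_{\theta_0}[\mathcal{F}^{\mathfrak{a}}](A^{\mathfrak{a}})^T+A^{\mathfrak{b}}\mathrm{V}_{\theta_0}[\mathcal{F}^{\mathfrak{b}}](A^{\mathfrak{b}})^T$. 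Writing $\tilde A^x:=A^xW^x$ and $\tilde F^x:=(W^x)^{-1}F^x$, so that $\mathcal{E}^x_{\theta_0,n}=\{\mathsf{N}^x_{\theta_0,n},\tilde F^x\}$ has $B_{\theta_0}[\mathcal{E}^x]=\Pi^x$ and variance $V_x:=\mathrm{V}_{\theta_0}[\mathcal{E}^x]\in\mathcal{V}^x_{\theta,1}(\mathcal{M})$, the local unbiasedness $B_{\theta_0}[S]=I$ becomes the linear relation $\tilde A^{\mathfrak{a}}\Pi^{\mathfrak{a}}+\tilde A^{\mathfrak{b}}\Pi^{\mathfrak{b}}=I$.

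The core step is the resulting matrix optimization, which I would do by completion of squares. With $P:=\bigl(\Pi^{\mathfrak{a}}V_{\mathfrak{a}}^{-1}\Pi^{\mathfrak{a}}+\Pi^{\mathfrak{b}}V_{\mathfrak{b}}^{-1}\Pi^{\mathfrak{b}}\bigr)^{-1}$ (generalized inverse), write $\tilde A^x=P\Pi^xV_x^{-1}+\Delta^x$; the constraint forces $\Delta^{\mathfrak{a}}\Pi^{\mathfrak{a}}+\Delta^{\mathfrak{b}}\Pi^{\mathfrak{b}}=0$, the cross terms collapse through $P\,(\Delta^{\mathfrak{a}}\Pi^{\mathfrak{a}}+\Delta^{\mathfrak{b}}\Pi^{\mathfrak{b}})^T P=0$, and one is left with $\tilde A^{\mathfrak{a}}V_{\mathfrak{a}}(\tilde A^{\mathfrak{a}})^T+\tilde A^{\mathfrak{b}}V_{\mathfrak{b}}(\tilde A^{\mathfrak{b}})^T=P+\Delta^{\mathfrak{a}}V_{\mathfrak{a}}(\Delta^{\mathfrak{a}})^T+\Delta^{\mathfrak{b}}V_{\mathfrak{b}}(\Delta^{\mathfrak{b}})^T\ge P$, which is the bound (\ref{(v-1+v-1)-1}). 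Applying this to $\tilde A^x=W^x$, which is admissible since local unbiasedness of $\mathcal{F}$ gives $W^{\mathfrak{a}}\Pi^{\mathfrak{a}}+W^{\mathfrak{b}}\Pi^{\mathfrak{b}}=I$, yields $\mathrm{V}_{\theta_0}[\mathcal{F}_{\theta_0,n}]\ge P$ in the positive semidefinite order; since $G_\theta\ge0$ this transfers to $\mathrm{Tr}\,G_\theta\mathrm{V}_{\theta_0}[\mathcal{F}_{\theta_0,n}]\ge\mathrm{Tr}\,G_\theta P$. Taking the infimum over admissible $V_{\mathfrak{a}},V_{\mathfrak{b}}$ and then over all LOCC estimators produces the asserted bound for the (E$'$) class under $\rho_\theta>0$.

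Finally I would remove the two auxiliary hypotheses. Replacing (E$'$) by (E) in the outer infimum uses the truncation argument from the proof of (\ref{cr-q1})=(\ref{cr-q2}): cut $T_{\theta_0,n}$ off at norm $L$, renormalize to restore local unbiasedness, and let $L\to\infty$, the tail being controlled by (E) and Lemma~\ref{lem:diamond}. The positivity $\rho_\theta>0$ is removed through the product analogue of Lemma~\ref{lem:approx-variance}, mixing with a faithful product state $\rho_{\theta,\varepsilon}=(1-\varepsilon)\rho_\theta+\varepsilon\,\sigma^{\mathfrak{a}}\otimes\sigma^{\mathfrak{b}}$ and sending $\varepsilon\to0$. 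I expect the main obstacle to be precisely the ``elementary but tedious'' optimization of the previous paragraph: one must handle the rank deficiency of the projectors $\Pi^x$ and the generalized inverses $V_x^{-1}$ coherently, checking that $P$ really inverts $\sum_x\Pi^xV_x^{-1}\Pi^x$ on its range, that the candidate $\tilde A^x=P\Pi^xV_x^{-1}$ meets the constraint on that range, and that the cross-term cancellation never secretly uses invertibility of $V_x$ or of $\Pi^x$. A secondary subtlety is to verify that the faithful perturbation can be chosen so that the subspaces defining $\Pi^{\mathfrak{a}},\Pi^{\mathfrak{b}}$ are unchanged, so that the same projectors govern the $\varepsilon\to0$ limit.
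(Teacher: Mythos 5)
Your proposal retraces the paper's own route essentially step by step: reduction of the MSE cost to the variance of locally unbiased LOCC estimators (LOCC-ness surviving classical post-processing), the LO-reduction lemma yielding the additive form (\ref{S=F+F}), the matrix optimization giving (\ref{(v-1+v-1)-1}), the truncation argument trading (E') for (E), and a faithful perturbation to drop $\rho_\theta>0$. In one respect you go further than the text: your completion of squares, with the constraint $\tilde A^{\mathfrak a}\Pi^{\mathfrak a}+\tilde A^{\mathfrak b}\Pi^{\mathfrak b}=\mathbf{1}$ killing the cross terms, is exactly the computation the paper compresses into ``elementary but tedious calculation,'' and the rank/generalized-inverse checks you list (that $P$ inverts $\sum_x\Pi^xV_x^{-1}\Pi^x$ on its range, that the candidate $\tilde A^x=P\Pi^xV_x^{-1}$ meets the constraint there, that no step secretly needs $V_x>0$) are the right ones to worry about.

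The one step that fails as written is the removal of positivity. You mix globally: $\rho_{\theta,\varepsilon}=(1-\varepsilon)\rho_\theta+\varepsilon\,\sigma^{\mathfrak a}\otimes\sigma^{\mathfrak b}$. A convex combination of two distinct product states is separable but \emph{not} a product state, so the perturbed model violates the hypothesis $\rho_{\theta,\varepsilon}=\rho^{\mathfrak a}_{\theta,\varepsilon}\otimes\rho^{\mathfrak b}_{\theta,\varepsilon}$, and that hypothesis is not cosmetic: it is what makes Alice's and Bob's data (conditionally) independent, which is what both the LO-reduction lemma and your vanishing cross terms rest on. Under your mixed state the outcomes $\xi^{\mathfrak a},\xi^{\mathfrak b}$ become classically correlated, and the decomposition $S=F^{\mathfrak a}(\xi^{\mathfrak a})+F^{\mathfrak b}(\xi^{\mathfrak b})+\theta_0$ is no longer obtainable. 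The repair is to perturb each tensor factor separately, $\rho^x_{\theta,\varepsilon}:=(1-\varepsilon)\rho^x_\theta+\varepsilon\sigma^x$ and $\rho_{\theta,\varepsilon}:=\rho^{\mathfrak a}_{\theta,\varepsilon}\otimes\rho^{\mathfrak b}_{\theta,\varepsilon}$, which stays product and faithful; moreover $\partial_i\rho^x_{\theta,\varepsilon}=(1-\varepsilon)\partial_i\rho^x_\theta$, so the null spaces defining $\Pi^{\mathfrak a},\Pi^{\mathfrak b}$ are literally unchanged, which disposes of your ``secondary subtlety'' for free. The only extra cost is that the bias correction in the analogue of Lemma~\ref{lem:approx-variance} is no longer the scalar rescaling $1/(1-\varepsilon)$: the product of mixtures contains cross terms such as $\rho^{\mathfrak a}_\theta\otimes\sigma^{\mathfrak b}$ that retain $\theta$-dependence, so one must correct by the full matrix $B_{\theta,\varepsilon}\left[\cdot\right]^{-1}$, which is $\mathbf{1}+O(\varepsilon)$ and therefore harmless as $\varepsilon\rightarrow0$.
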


\vskip-3mm The achievability of the lowerbound is also true. This theorem
leads to a necessary and sufficient condition for $C_{\theta}^{L}\left(
G_{\theta},\mathcal{M}\right)  $ ($C_{\theta}\left(  G_{\theta},\mathcal{M}%
\right)  $ ) to equal $C_{\theta}^{Q,L}\left(  G_{\theta},\mathcal{M}\right)
$ ($C_{\theta}^{Q}\left(  G_{\theta},\mathcal{M}\right)  $, resp.). These
topics, however, will be discussed elsewhere.

\vskip-10mm

\section*{{\protect\normalsize 6. ESTIMATION OF QUANTUM OPERATIONS}}

\vskip-3mm Suppose we are given a family of completely positive and trace
preserving maps $\mathcal{L}:=\left\{  \Lambda_{\theta}\right\}  _{\theta
\in\Theta}$. Here, $\Lambda_{\theta}:\mathcal{B}\left(  \mathcal{H}\right)
\rightarrow\mathcal{B}\left(  \mathcal{H}^{\prime}\right)  $, $\theta\in
\Theta$, and $\Theta$ is an open region in $%
%TCIMACRO{\U{211d} }%
%BeginExpansion
\mathbb{R}
%EndExpansion
^{m}$. Our purpose is to estimate $\theta$, by measuring the output of
$\Lambda_{\theta}$ after sending the input state for $n$ times through it.

Our input state $\rho^{n}$ is living in $\mathcal{H}\otimes\mathcal{K}$, where
$\dim\mathcal{K}$ is arbitrarily large. $\mathcal{K}$ may be used to store the
input state before and/or after application of $\Lambda_{\theta}$ . Between
the $\kappa$-th and $\left(  \kappa+1\right)  $-th use of $\Lambda_{\theta}$,
one can apply an operation $\Xi_{\kappa}^{n}:\mathcal{B}\left(  \mathcal{H}%
^{\prime}\otimes\mathcal{K}\right)  \rightarrow\mathcal{B}\left(
\mathcal{H}\otimes\mathcal{K}\right)  $. $\Xi_{\kappa}^{n}$ may be a
composition of measurement followed by preparation of the state to be send
through $\Lambda_{\theta}$. After $n$ times of use of $\Lambda_{\theta}$, we
obtain $\prod_{\kappa=1}^{n}\left\{  \,\left(  \Lambda_{\theta}\otimes
\mathbf{I}\right)  \circ\Xi_{\kappa}^{n}\,\right\}  \left(  \rho^{n}\right)
$. We measure this by $\mathsf{M}^{n}$, obtaining the data $\omega_{n}\in%
%TCIMACRO{\U{211d} }%
%BeginExpansion
\mathbb{R}
%EndExpansion
^{l_{n}}$, and compute the estimate $T_{n}\left(  \omega_{n}\right)  $. The
pair $\mathcal{E}_{n}:=\{\rho^{n},\left\{  \Xi_{\kappa}^{n}\right\}
_{\kappa=1}^{n-1},\mathsf{M}^{n},T_{n}\}$ (, or sometimes sequence $\left\{
\mathcal{E}_{n}\right\}  _{n=1}^{\infty}$ also, ) is called an
\textit{estimator}. The probability distribution of the data is $P_{\theta
}^{\mathsf{E}_{n}}\left\{  \omega_{n}\in\Delta\right\}  =\mathrm{tr}%
\,M^{n}\left(  \Delta\right)  \prod_{\kappa=0}^{n}\left\{  \left(
\Lambda_{\theta}\otimes\mathbf{I}\right)  \circ\Xi_{\kappa}^{n}\right\}
\left(  \rho^{n}\right)  $.

Regularity conditions, other than (\ref{asym-unbiased}) on estimators, are
listed in Table\thinspace\ref{table:reg-cond-op-model}. Note that they are
honest analogue of (M.1) and (E). In the table, convergence is always in terms
of $\left\Vert \cdot\right\Vert _{cb}$, and $\square_{i,\theta}:\mathcal{B}%
\left(  \mathcal{H}\right)  \rightarrow\mathcal{B}\left(  \mathcal{H}\right)
$ is an affine map with $\square_{i,\theta}\otimes\mathbf{I}\left(
\rho\right)  \geq\partial_{i}\Lambda_{\theta_{0}}\otimes\mathbf{I}\left(
\rho\right)  $ ($\left\Vert \theta_{0}-\theta\right\Vert <a_{2}$), whose
existence is certified by (CM.1) and an analogue of Lemma\thinspace
\ref{lem:diamond}. Also, $\varpi_{1}^{n}\left(  \square_{i,\theta
}\,,\mathsf{E}_{n}\right)  $ is defined by replacing $\partial_{i}%
\Lambda_{\theta}$ in $\partial_{i}\prod_{\kappa=0}^{n}\left\{  \,\left(
\Lambda_{\theta}\otimes\mathbf{I}\right)  \circ\Xi_{\kappa}^{n}\,\right\}
\left(  \rho^{n}\right)  $ by $\square_{i,\theta}$.%

%TCIMACRO{\TeXButton{B}{\begin{table}[tbp] \centering}}%
%BeginExpansion
\begin{table}[tbp] \centering
%EndExpansion%
\begin{tabular}
[c]{|l|}\hline
(CM.1) $\partial_{i}\Lambda_{\theta},$ $\partial_{i}\partial_{j}%
\Lambda_{\theta}$ exits and are locally uniformly continuous, $\ \left\Vert
\partial_{i}\Lambda_{\theta}\right\Vert _{cb},$ $\left\Vert \partial
_{i}\partial_{j}\Lambda_{\theta}\right\Vert _{cb}$ $\leq\frac{a_{1}}{2}%
<\infty,\,\forall\theta\in\Theta$\\\hline\hline
(CE) $\ \ \ \int\left\Vert T_{n}\left(  \omega\right)  -\theta\right\Vert
\mathrm{tr}\,\varpi_{1}^{n}\left(  \square_{i,\theta}\,,\mathsf{E}_{n}\right)
M^{n}\left(  \mathrm{d}\omega_{n}\right)  \leq na_{1}a_{4,n}$, $\int\left\Vert
T_{n}\left(  \omega\right)  -\theta\right\Vert ^{2}\mathrm{tr}\,P_{\theta
}^{\mathsf{E}_{n}}\left(  \mathrm{d}\omega_{n}\right)  \rho_{\theta}^{\otimes
n}\leq n\left(  a_{4,n}\right)  ^{2}$, $\,\forall\theta\in\Theta$\\\hline
\end{tabular}
\caption{Regularity conditions on models and estimators in operation
estimation.}\label{table:reg-cond-op-model}%
%TCIMACRO{\TeXButton{E}{\end{table}}}%
%BeginExpansion
\end{table}%
%EndExpansion
\ 

\ Define $C_{\theta}^{Q,Op}\left(  G_{\theta},\mathcal{L}\right)  $ by
replacing (E) in the definition of $C_{\theta}^{Q,Op}\left(  G_{\theta
},\mathcal{M}\right)  $ by (CE). Then we have, honestly modifying the argument
in Section 3 (the proof is omitted), \ 

\begin{theorem}
If (CM.1) holds, $C_{\theta}^{Q,Op}\left(  G_{\theta},\mathcal{L}\right)
\geq\underset{n\rightarrow\infty}{\lim}\inf\left\{  n\mathrm{Tr}\,G_{\theta
}\mathrm{V}_{\theta}\left[  \mathcal{E}_{\theta,n}\right]  \,\text{\thinspace
};\text{ (\ref{locally-unbiased}), (E')}\right\}  $
\end{theorem}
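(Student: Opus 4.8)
The plan is to transcribe, essentially verbatim, the lowerbound argument of Subsection~3.2, the only substantive change being that the dominating operator $\Diamond_{i,\theta,n}^{(1)}$ is everywhere replaced by the affine map $\varpi_{1}^{n}\left(\square_{i,\theta},\mathsf{E}_{n}\right)$ that already appears in (CE), and that differentiation is taken in $\left\Vert\cdot\right\Vert_{cb}$ rather than $\left\Vert\cdot\right\Vert_{1}$. First I would establish the operation-estimation analogues of Lemma~\ref{lem:diamond} and Lemma~\ref{lem:est-cont-2}: under (CM.1) the map $\theta\mapsto\prod_{\kappa=0}^{n}\left\{\left(\Lambda_{\theta}\otimes\mathbf{I}\right)\circ\Xi_{\kappa}^{n}\right\}\left(\rho^{n}\right)$ is $\left\Vert\cdot\right\Vert_{cb}$-differentiable, and by the product rule $\partial_{i}$ of this output is a sum of $n$ terms, in each of which exactly one factor $\Lambda_{\theta}$ is replaced by $\partial_{i}\Lambda_{\theta}$. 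Since $\square_{i,\theta}$ dominates $\partial_{i}\Lambda_{\theta_{0}}$ (its existence being the announced analogue of Lemma~\ref{lem:diamond}), the map $\varpi_{1}^{n}$ dominates the whole Leibniz sum and has controlled trace. Combined with (CE), the same continuous-linear-functional argument that proves Lemma~\ref{lem:est-cont-2} (via Holevo's Proposition~VI.2.2) then yields that $\mathrm{E}_{\theta}^{\mathsf{E}_{n}}\left[T_{n}\right]$ is differentiable with $\partial_{j}\mathrm{E}_{\theta}^{\mathsf{E}_{n}}\left[T_{n}^{i}\right]=\int T_{n}^{i}\,\mathrm{tr}\,\partial_{j}\prod_{\kappa}\left\{\cdots\right\}\left(\rho^{n}\right)M^{n}\left(\mathrm{d}\omega_{n}\right)$, so that $B_{\theta}\left[\mathcal{E}_{n}\right]$ is well defined.

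Next I would carry out the main estimate exactly as in Subsection~3.2. Given any estimator $\mathcal{E}_{n}=\{\rho^{n},\left\{\Xi_{\kappa}^{n}\right\},\mathsf{M}^{n},T_{n}\}$ satisfying (\ref{asym-unbiased}) and (CE), define the locally unbiased estimator $\mathcal{E}_{\theta,n}$ sharing the \emph{same} input $\rho^{n}$, the same intermediate operations $\left\{\Xi_{\kappa}^{n}\right\}$, and the same measurement $\mathsf{M}^{n}$, but with post-processing $T_{\theta,n}:=B_{\theta}\left[\mathcal{E}_{n}\right]^{-1}\left(T_{n}-\mathrm{E}_{\theta}^{\mathsf{E}_{n}}\left[T_{n}\right]\right)+\theta$ (legitimate for $n$ large, since (\ref{asym-unbiased}) forces $B_{\theta}\left[\mathcal{E}_{n}\right]\to\mathbf{I}$, hence invertibility). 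A direct computation gives $\mathrm{E}_{\theta}\left[T_{\theta,n}\right]=\theta$ and $B_{\theta}\left[\mathcal{E}_{\theta,n}\right]=\mathbf{I}$, so $\mathcal{E}_{\theta,n}$ satisfies (\ref{locally-unbiased}), while
\[
n\mathrm{Tr}\,G_{\theta}\mathrm{MSE}_{\theta}\left[\mathcal{E}_{n}\right]\geq n\mathrm{Tr}\,G_{\theta}\mathrm{V}_{\theta}\left[\mathcal{E}_{n}\right]=n\mathrm{Tr}\,G_{\theta}B_{\theta}\left[\mathcal{E}_{n}\right]\mathrm{V}_{\theta}\left[\mathcal{E}_{\theta,n}\right]B_{\theta}\left[\mathcal{E}_{n}\right]^{T}.
\]
Letting $n\to\infty$ and using $B_{\theta}\left[\mathcal{E}_{n}\right]\to\mathbf{I}$ gives $C_{\theta}^{Q,Op}\left(G_{\theta},\mathcal{L}\right)\geq\lim_{n\to\infty}\inf\left\{n\mathrm{Tr}\,G_{\theta}\mathrm{V}_{\theta}\left[\mathcal{E}_{\theta,n}\right];\text{ (\ref{locally-unbiased}), (CE)}\right\}$, that is, the lowerbound with the weaker constraint (CE) in place of (E').

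To upgrade (CE) to (E') on the right-hand side I would reproduce the truncation step of Subsection~3.2: for any (CE)-locally-unbiased $\mathcal{E}_{\theta,n}$, set $S_{\theta,n}^{L}:=T_{\theta,n}$ when $\left\Vert T_{\theta,n}-\theta\right\Vert\leq L$ and $\theta$ otherwise, and renormalize to a locally unbiased estimator $\mathcal{E}_{\theta,n}^{L}$ satisfying (E'). The operation analogue of the tail estimate, namely $\int_{\left\Vert T_{\theta,n}-\theta\right\Vert>L}\left\Vert T_{\theta,n}-\theta\right\Vert\mathrm{tr}\,\varpi_{1}^{n}\left(\square_{i,\theta},\mathsf{E}_{n}\right)M^{n}\left(\mathrm{d}\omega_{n}\right)\to0$, is exactly the content of the first inequality in (CE); it gives $B_{\theta}\left[\mathcal{E}_{\theta,n}^{L}\right]\to\mathbf{I}$ as $L\to\infty$, whence $\mathrm{Tr}\,G_{\theta}\mathrm{V}_{\theta}\left[\mathcal{E}_{\theta,n}\right]\geq\mathrm{Tr}\,G_{\theta}\mathrm{V}_{\theta}\left[\mathcal{E}_{\theta,n}^{L}\right]-\varepsilon$. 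Taking infima yields $\inf_{\text{(E')}}\leq\inf_{\text{(CE)}}$, which together with the previous paragraph gives the claim. The hard part will be the first paragraph: one must check that differentiation in $\left\Vert\cdot\right\Vert_{cb}$ commutes with the $n$-fold composition $\prod_{\kappa}\left\{\left(\Lambda_{\theta}\otimes\mathbf{I}\right)\circ\Xi_{\kappa}^{n}\right\}$, and that the resulting sum of Leibniz terms is dominated, \emph{uniformly} in the possibly adaptive (measurement-and-feedback) operations $\Xi_{\kappa}^{n}$, by $\varpi_{1}^{n}\left(\square_{i,\theta},\mathsf{E}_{n}\right)$; once this dominating bound is secured, every remaining step is the verbatim $cb$-norm analogue of Subsection~3.2.
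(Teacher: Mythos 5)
Your proposal is correct and is essentially the paper's own (omitted) proof: the paper explicitly says the theorem follows by ``honestly modifying the argument in Section 3,'' and your transcription of the Subsection~3.2 lower bound plus truncation argument, with $\Diamond_{i,\theta,n}^{(1)}$ replaced by $\varpi_{1}^{n}\left(\square_{i,\theta},\mathsf{E}_{n}\right)$ and differentiation taken in $\left\Vert\cdot\right\Vert_{cb}$, is exactly that modification, which the table defining $\square_{i,\theta}$ and $\varpi_{1}^{n}$ was set up to enable. You also correctly isolate the one genuinely new technical obligation (the Leibniz expansion of $\partial_{i}\prod_{\kappa}\left\{\left(\Lambda_{\theta}\otimes\mathbf{I}\right)\circ\Xi_{\kappa}^{n}\right\}\left(\rho^{n}\right)$ and its domination by $\varpi_{1}^{n}$, uniformly over the adaptive $\Xi_{\kappa}^{n}$), which is precisely the analogue of Lemmas~\ref{lem:diamond} and \ref{lem:est-cont-2} the paper invokes without proof.
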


Also, the achievability of the lowerbound can be proved, with some additional
regularity conditions. It is known that in case of $\Lambda_{\theta}\left(
\rho\right)  =U_{\theta}\rho U_{\theta}^{\dagger}$, with $U_{\theta}U_{\theta
}^{\dagger}=U_{\theta}^{\dagger}U_{\theta}=\mathbf{1}$, there is an
asymptotically unbiased estimator with $\mathrm{Tr}\,G_{\theta}\mathrm{V}%
_{\theta}\left[  \mathcal{E}_{n}\right]  =O\left(  \frac{1}{n^{2}}\right)
\,$. Therefore, \thinspace$C_{\theta}^{Q,C}\left(  G_{\theta},\mathcal{L}%
\right)  =0$ for such models. In this subsection, we show that such a
phenomena can occur only at the surface of the space of the quantum operations.

\begin{theorem}
Suppose (CM.1-2) \ and (CE) holds. Moreover, suppose $\exists\varepsilon>0$
s.t., $\Lambda_{\theta}+\sum_{i=1}^{m}u^{i}\partial_{i}\Lambda_{\theta}$ is
completely positive, for $\forall u:\left\Vert u\right\Vert <\varepsilon$,
$\forall\theta\in\Theta$. Then, $C_{\theta}^{Q,C}\left(  G_{\theta
},\mathcal{L}\right)  \neq0$.
\end{theorem}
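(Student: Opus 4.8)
The plan is to deduce the claim from the operation-estimation lower bound stated just above by showing that its right-hand side is strictly positive. By that theorem (its cost $C_\theta^{Q,Op}$ is the quantity written $C_\theta^{Q,C}$ in the statement),
\[
C_\theta^{Q,Op}\left(G_\theta,\mathcal L\right)\ge\lim_{n\to\infty}\inf\left\{n\mathrm{Tr}\,G_\theta\mathrm V_\theta\left[\mathcal E_{\theta,n}\right]\,;\ \text{(\ref{locally-unbiased}), (E')}\right\},
\]
so it suffices to bound the infimum below by a positive constant uniformly in $n$. For a locally unbiased estimator the output family $\sigma_{\theta}^{n}:=\prod_{\kappa=0}^{n}\left\{\left(\Lambda_\theta\otimes\mathbf I\right)\circ\Xi_\kappa^n\right\}\left(\rho^n\right)$ carries an SLD (finite by (CM.2), exactly as in Lemma\,\ref{lemfisher-finite}), and the Schwartz-inequality argument of Lemma\,\ref{lemfisher-finite} gives $\mathrm V_\theta\left[\mathcal E_{\theta,n}\right]\ge\left(J_\theta^{S,n}\right)^{-1}$, where $J_\theta^{S,n}$ is the SLD Fisher information of $\sigma_\theta^n$. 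Hence $n\mathrm{Tr}\,G_\theta\mathrm V_\theta\ge n\mathrm{Tr}\,G_\theta\left(J_\theta^{S,n}\right)^{-1}$, and since $G_\theta>0$ the matter reduces to one estimate: a \emph{linear} upper bound
\[
\mathbf v^{T}J_\theta^{S,n}\mathbf v\le\kappa\,n\left\Vert\mathbf v\right\Vert^{2}\qquad(\forall\,\mathbf v\in\mathbb R^{m}),
\]
holding uniformly over all inputs $\rho^n$ and all intervening operations $\left\{\Xi_\kappa^n\right\}$, with $\kappa$ depending only on $\varepsilon$. Granting this, $\left(J_\theta^{S,n}\right)^{-1}\ge(\kappa n)^{-1}\mathbf 1$ and $n\mathrm{Tr}\,G_\theta\mathrm V_\theta\ge\kappa^{-1}\mathrm{Tr}\,G_\theta>0$, which is the assertion.

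The interior hypothesis is what feeds the linear bound, and I would first recast it as a domination. Taking $u=\pm\varepsilon\mathbf e_i$, the map $\Lambda_\theta\pm\varepsilon\,\partial_i\Lambda_\theta$ is completely positive, so for every $\rho\ge0$ on the extended space
\[
-\tfrac1\varepsilon\left(\Lambda_\theta\otimes\mathbf I\right)(\rho)\le\left(\partial_i\Lambda_\theta\otimes\mathbf I\right)(\rho)\le\tfrac1\varepsilon\left(\Lambda_\theta\otimes\mathbf I\right)(\rho),
\]
and more generally $\bigl|\sum_i v^{i}\,\partial_i\Lambda_\theta\otimes\mathbf I(\rho)\bigr|\le\tfrac{\left\Vert\mathbf v\right\Vert}\varepsilon\,\Lambda_\theta\otimes\mathbf I(\rho)$; this is exactly the situation in which the paper's dominating map $\square_{i,\theta}$ can be taken to be $\tfrac1\varepsilon\Lambda_\theta$. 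Equivalently, on the Choi operator $C_\theta$ one has $\mathrm{supp}\left(\partial_iC_\theta\right)\subseteq\mathrm{supp}\left(C_\theta\right)$, so $\Lambda_\theta$ admits Kraus operators $\left\{K_k\right\}$ with $\partial_iK_k=\sum_jG_{kj}^{(i)}K_j$ for suitable matrices $G^{(i)}$.

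For the linear bound I would induct on the number of channel uses, tracking $J^{S}$ through one application of $\Lambda_\theta$. With pre-application state $\tau_\theta$ and $\omega_\theta:=\left(\Lambda_\theta\otimes\mathbf I\right)(\tau_\theta)$, the derivative splits as
\[
\partial_i\omega_\theta=\underbrace{\left(\partial_i\Lambda_\theta\otimes\mathbf I\right)(\tau_\theta)}_{A_i}+\underbrace{\left(\Lambda_\theta\otimes\mathbf I\right)(\partial_i\tau_\theta)}_{B_i}.
\]
Monotonicity of the SLD Fisher information under the fixed completely positive map $\Lambda_\theta\otimes\mathbf I$ controls the $B$-part by the incoming information, the domination bounds the $A$-part by a constant $c/\varepsilon^{2}$, and the $\theta$-independent $\Xi_\kappa^n$ and preparations never increase it. The genuine difficulty is the cross term between the $A$- and $B$-parts: a bare Schwartz estimate only yields a bound of order $\sqrt{J^{\mathrm{old}}}$ per step, which accumulates to the \emph{quadratic} (Heisenberg) rate $J^{S,n}\sim n^{2}$ and proves nothing. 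The point of the interior hypothesis is that it annihilates this cross term: the gauge freedom $K_k\mapsto\sum_j u_{kj}K_j$ (with $u$ unitary, leaving $\Lambda_\theta$ fixed but shifting $\partial_iK_k$ by $\sum_j\dot u_{kj}K_j$), available precisely because $\partial_iK_k\in\mathrm{span}\left\{K_j\right\}$, can be used to redefine the SLD splitting so that the $A$--$B$ cross term vanishes. The recursion then becomes additive, $J^{S,n}\le J^{S,n-1}+c/\varepsilon^{2}$, and integrates to $J_\theta^{S,n}\le(c/\varepsilon^{2})\,n\,\mathbf 1$, i.e.\ $\kappa=c/\varepsilon^{2}$.

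I expect the cross-term cancellation to be the main obstacle: one must exhibit, uniformly over the arbitrary adaptive data $\left(\rho^n,\left\{\Xi_\kappa^n\right\}\right)$, the gauge (equivalently the dilation isometry of $\Lambda_\theta$) that decouples the channel-parameter derivative from the processed state-parameter derivative, and do so with the infinite-dimensional, measure-theoretic care already exercised for SLDs in Lemma\,\ref{lemfisher-finite} (generalized inverses on $\mathrm{supp}\,C_\theta$, finiteness from (CM.2)). Everything else---the reduction through the lower bound theorem, the SLD Cramer-Rao step, and the monotonicity and domination estimates---is routine given the lemmas already in place.
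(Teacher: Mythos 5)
Your reduction is fine as far as it goes: passing to locally unbiased estimators via the preceding lower-bound theorem, applying the SLD Cram\'er--Rao inequality to the output family $\sigma_{\theta}^{n}$, and observing that everything then hinges on a linear bound $\mathbf{v}^{T}J_{\theta}^{S,n}\mathbf{v}\leq\kappa n\left\Vert \mathbf{v}\right\Vert ^{2}$ uniform over all inputs $\rho^{n}$ and all intervening operations $\Xi_{\kappa}^{n}$. But that linear bound is never proved, and it is not a loose end --- it is the entire content of the theorem. Your argument for it rests on the claim that the Kraus-gauge freedom (available because $\partial_{i}K_{k}\in\mathrm{span}\left\{ K_{j}\right\} $) can be chosen so that the cross term between the channel-derivative part $A_{i}$ and the propagated part $B_{i}$ vanishes at each step of the induction. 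You neither exhibit this gauge, nor prove that the interior hypothesis implies its existence (e.g.\ that one can choose Kraus operators with $\sum_{k}\partial_{i}K_{k}^{\dagger}K_{k}=0$), nor show that it can be arranged uniformly along an arbitrary adaptive protocol --- the pre-application state $\tau_{\theta}$ at step $\kappa$ depends on the whole history, so the decoupling gauge is a priori step- and protocol-dependent. As you yourself note, without this cancellation the recursion only gives $J^{S,n}=O\left(n^{2}\right)$ and proves nothing. This missing lemma is a genuinely hard result (it is essentially the linear-scaling bound of the channel-metrology literature, whose adaptive version requires substantial extra machinery), so the proposal is a plan whose crucial step is absent.

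The paper's proof avoids all of this, and the contrast is instructive. It uses the interior hypothesis not to dominate derivatives but to decompose: with $\Lambda_{\theta_{0},\theta}:=\Lambda_{\theta_{0}}+\sum_{i}\partial_{i}\Lambda_{\theta_{0}}\left(\theta^{i}-\theta_{0}^{i}\right)$, complete positivity on $\sum_{i}\left\vert \theta^{i}-\theta_{0}^{i}\right\vert \leq\varepsilon$ makes $\theta\mapsto\Lambda_{\theta_{0},\theta}$ an affine family whose values at the extreme points $\theta_{x}$ of the $\ell^{1}$-ball are fixed channels, so $\Lambda_{\theta_{0},\theta}=\sum_{x}p_{\theta_{0},\theta}\left(x\right)\Lambda_{\theta_{0},\theta_{x}}$ with $p_{\theta_{0},\theta}$ a multinomial distribution linear in $\theta$. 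A locally unbiased estimator for $\mathcal{L}$ at $\theta_{0}$ is also locally unbiased for this linearized family (first derivatives and variances agree at $\theta_{0}$), and each use of $\Lambda_{\theta_{0},\theta}$ can be simulated by sampling $x_{\kappa}\sim p_{\theta_{0},\theta}$ and applying the \emph{fixed} channel $\Lambda_{\theta_{0},\theta_{x_{\kappa}}}$. The whole adaptive quantum protocol thereby becomes a randomized classical estimator based on $n$ i.i.d.\ draws from $p_{\theta_{0},\theta}$, and the classical Cram\'er--Rao bound gives variance at least $\left(nJ_{cl}\right)^{-1}$ with $J_{cl}$ finite. Note that adaptivity of the $\Xi_{\kappa}^{n}$ costs nothing in this simulation argument --- exactly the point where your induction gets stuck --- and no quantum Fisher information for channels is ever needed. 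If you wish to complete your route instead, the task is to prove the adaptive linear-QFI bound from the interior hypothesis; that is a theorem in its own right, not a routine verification.
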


\begin{proof}
Define $\Lambda_{\theta_{0},\theta}:=\Lambda_{\theta_{0}}+\sum_{i=1}^{m}%
\frac{\partial\Lambda_{\theta_{0}}}{\partial\theta^{i}}\left(  \theta
^{i}-\theta_{0}^{i}\right)  $, $\mathcal{L}_{\theta_{0}}:=\left\{
\Lambda_{\theta_{0},\theta};\,\sum_{i=1}^{m}\,\left\vert \theta^{i}-\theta
_{0}^{i}\right\vert <\varepsilon\right\}  $. Suppose $\mathcal{E}_{\theta
_{0},n}:=\{\rho_{\theta_{0}}^{n},\left\{  \Xi_{\kappa,\theta_{0}}^{n}\right\}
_{\kappa=1}^{n},\mathsf{M}_{\theta_{0}}^{n},T_{\theta_{0},n}\}$ satisfies (CE)
and (\ref{locally-unbiased}) at $\theta=\theta_{0}$ as an estimator of
$\mathcal{L}:=\left\{  \Lambda_{\theta}\right\}  _{\theta\in\Theta}$ .
\begin{align*}
&  \partial_{i}\left[  \int T_{\theta_{0},n}^{j}\left(  \omega_{n}\right)
\mathrm{tr}\left[  \,M_{\theta_{0}}^{n}\left(  \mathrm{d}\,\omega_{n}\right)
\prod_{\kappa=0}^{n}\left\{  \left(  \Lambda_{\theta_{0}}+\sum_{i=1}^{m}%
\frac{\partial\Lambda_{\theta_{0}}}{\partial\theta^{i}}\left(  \theta
^{i}-\theta_{0}^{i}\right)  \otimes\mathbf{I}\right)  \circ\Xi_{\kappa
,\theta_{0}}^{n}\right\}  \rho_{\theta_{0}}^{n}\right]  \right]
_{\theta=\theta_{0}}\\
&  =\partial_{i}\left[  \int T_{\theta_{0},n}^{j}\left(  \omega_{n}\right)
\mathrm{tr}\left[  \,M_{\theta_{0}}^{n}\left(  \mathrm{d}\,\omega_{n}\right)
\prod_{\kappa=0}^{n}\left\{  \left(  \Lambda_{\theta}\otimes\mathbf{I}\right)
\circ\Xi_{\kappa,\theta_{0}}^{n}\right\}  \rho_{\theta_{0}}^{n}\right]
\right]  _{\theta=\theta_{0}}=\delta_{i}^{j}.
\end{align*}%
%TCIMACRO{\TeXButton{vskip-lastskip}{\vskip-\lastskip}}%
%BeginExpansion
\vskip-\lastskip
%EndExpansion
Here, the first identity holds since (CE) implies an analogue of
Lemma$\,$\ref{lem:est-cont}. Therefore, \ since $\Lambda_{\theta_{0}%
,\theta_{0}}=\Lambda_{\theta_{0}}$, $\mathcal{E}_{\theta_{0},n}$ satisfies
(\ref{locally-unbiased}) at $\theta=\theta_{0}$ as an estimator of
$\mathcal{L}_{\theta_{0}}$, also. Moreover, the variance of $\mathcal{E}%
_{\theta_{0},n}$ as an estimator of $\mathcal{L}_{\theta_{0}}$ and
$\mathcal{L}$ coincide at $\theta=\theta_{0}$. Therefore, it suffices to show
the statement for the quantum operation model $\mathcal{L}_{\theta_{0}}$. From
here, we follows the same line of argument as Hayashi\thinspace(2003) and
Zhengfeng Ji, et. al.\thinspace(2006).

Let $\theta_{x}$ be the $x$-th extreme point of the convex region $\sum
_{i=1}^{m}\,\left\vert \theta^{i}-\theta_{0}^{i}\right\vert \leq\varepsilon$.
Then, there is $p_{\theta_{0},\theta}\left(  x\right)  $ such that
$\Lambda_{\theta_{0},\theta}=\sum_{x=1}^{2^{m}}p_{\theta_{0},\theta}\left(
x\right)  \Lambda_{\theta_{0},\theta_{x}}$, $\sum_{x=1}^{2^{m}}p_{\theta
_{0},\theta}\left(  x\right)  =1$, and $p_{\theta_{0},\theta}\left(  x\right)
$ is linear in $\theta$.\quad Consider the family of multinomial probability
distributions $\{$ $p_{\theta_{0},\theta}\left(  \cdot\right)  \}_{\theta
\in\Theta}$. The key observation is that $\Lambda_{\theta_{0},\theta}$ is
equivalent to random application of $\Lambda_{\theta_{0},\theta_{x}}$, where
$x$ is sampled according to $p_{\theta_{0},\theta}\left(  \cdot\right)  $.
Following this observation, given a locally unbiased estimator $\mathcal{E}%
_{\theta_{0},n}$ of $\mathcal{L}_{\theta_{0}}$ at $\theta=\theta_{0}$, one can
compose a locally unbiased estimator of the statistical model $\left\{
p_{\theta_{0},\theta}\left(  \cdot\right)  \right\}  _{\theta\in\Theta}$ :
Prepare a quantum state $\rho_{\theta_{0}}^{n}$, apply a sequence of quantum
operations $\prod_{\kappa=1}^{n}\left\{  \left(  \Lambda_{\theta_{0}%
,\theta_{x_{\kappa}}}\otimes\mathbf{I}\right)  \circ\Xi_{\kappa,\theta_{0}%
}^{n}\right\}  $, where $x_{\kappa}\sim p_{\theta_{0},\theta}$ ($\kappa
=1$,$\cdots$,$n$), measure the output state by $M_{\theta_{0}}^{n}$, and
compute $T_{\theta_{0},n}$. Therefore, due to classical estimation theory, the
variance of $\mathcal{E}_{\theta_{0},n}$ is $O\left(  1/n\right)  $.
\end{proof}

\vskip3mm

\noindent\textbf{BIBLIOGRAPHY} \vskip3mm

\noindent DasGupta, A., "Asymptotic Theory of Statistics and Probability,"
Springer Verlag, New York (2008)\vskip1mm

\noindent Fujiwara, A., "Estimation of a generalized amplitude-damping
channel," Phys. Rev. A 70, 012317 (2004).\vskip1mm

\noindent Gill, R. and Guta, M., "An Introduction to Quantum Tomography,"
http://lanl.arxiv.org/abs/quant-ph/0303020 (2006).\vskip1mm

\noindent Gill, R. and Massar, S., "State estimation for large ensembles,"
Phys.Rev. A61, 042312, (2002).\vskip1mm

\noindent Hayashi, M., "Quantum estimation and the quantum central limit
theorem", Bulletin of Mathematical Society of Japan, Sugaku, Vol. 55, No. 4,
368--391 (2003, in Japanese)\vskip1mm

\noindent Hayashi, M., eds., Asymptotic Theory of Quantum Statistical
Inference: Selected Papers, (WorldScientific, Singapore, 2005).\vskip1mm

\noindent Hayashi, M. and Matsumoto, K., \textquotedblleft Statistical model
with measurement degree of freedom and quantum physics,\textquotedblright%
\ RIMS Kokyuroku Kyoto University, No. 1055, 96--110 (1998) (In Japanese). (
English translation appeared as Chap. 13 of Hayashi 2005.)\vskip1mm

\noindent Hayashi, M. and Matsumoto, K., \textquotedblleft Asymptotic
performance of optimal state estimation in quantum two level
system,\textquotedblright\ http://lanl.arxiv.org/abs/quant-ph/0411073 (2004).\vskip1mm

\noindent Hayashi, M., A new proof of the main theorem of Fujiwara\thinspace
(2004), private communication (2003)\vskip1mm

\noindent Holevo, "Probabilistic and Statistical Aspects of Quantum Theory,"
(North-Holland, Amsterdam, 1982) (in Russian, 1980).\vskip1mm

\noindent Matsumoto,K., seminar note (1999)\vskip1mm

\noindent Matsumoto,K., " A new approach to the Cramer-Rao-type bound of the
pure-state model," J. Phys. A, 35-13, 3111-3123 (2002) .\vskip1mm

\noindent Matsumoto, K., "Self-teleportation and its application to LOCC
estimation and other tasks", \newline
http://lanl.arxiv.org/abs/quant-ph/0709.3250 (2007).\vskip1mm

\noindent Nagaoka, H., \textquotedblleft On Fisher Information of Quantum
Statistical Models, SITA'87, 19-21, Nov., (1987) (in Japanese).\vskip1mm

\noindent Nagaoka, H., A New Approach to Cramer-Rao Bounds for Quantum State
Estimation, IEICE Technical Report, IT89-42, 9-14(1989). \vskip1mm

\noindent Ozawa, M., \textquotedblleft Conditional probability and a
posteriori states in quantum mechanics,\textquotedblright\ Publ. Res. Inst.
Math. Sci. Kyoto Univ., 21: 279--295 (1985).\vskip1mm

\noindent Schwabik, S. and \ Guoju, Y., "Topics in Banach Space Integration,"
\ (World Scientific, 2005).\vskip1mm

\noindent Zhengfeng Ji, et. al., "Parameter estimation of quantum channels,"
http://lanl.arxiv.org/abs/quant-ph/0610060 (2006). \vskip1mm

\vskip4mm

\end{document}